\newcolumntype{N}{c@{}S}
\pgfplotsset{compat=1.15}
\pgfplotsset{compat=newest}
\def\centerarc[#1](#2)(#3:#4:#5)% Syntax: [draw options] (center) (initial angle:final angle:radius)
\newtheorem{theorem}{Theorem}[section]
\newtheorem{lemma}[theorem]{Lemma}
\newtheorem{prop}[theorem]{Proposition}
\theoremstyle{definition}
\theoremstyle{remark}
\newtheorem{remark}[theorem]{Remark}
\def\d{\partial}
\newcommand{\crv}{\mathcal{S}}
\newcommand{\re}[1]{\hat{#1}}
\newcommand{\curve}{\gamma}
\newcommand{\rx}{\re{x}}
\newcommand{\ry}{\re{y}}
\newcommand{\rz}{\re{z}}
\newcommand{\rS}{\re{S}}
\newcommand{\CG}{\mathcal{C}}
\newcommand{\rn}{\re{n}}
\newcommand{\bb}[1]{\mathbb{{#1}}}
\newcommand{\cl}[1]{\mathcal{{#1}}}
\newcommand{\ahel}{a_{\mathrm{H}}}
\newcommand{\bhel}{b_{\mathrm{H}}}
\newcommand{\ii}{\hat\imath}
\newcommand{\hD}{{\hat{D}}}
\newcommand{\om}{\Omega}
\newcommand{\ominf}{{\Omega}}
\newcommand{\cyl}{{\hat{\om}}}
\newcommand{\cylinf}{\hat{\om}}
\newcommand{\mat}[1]{\ensuremath{\begin{bmatrix}#1\end{bmatrix}}}
\newcommand{\R}{\mathbb{R}}
\newcommand{\C}{\mathbb{C}}
\renewcommand{\div}{\mathrm{div}}
\newcommand{\Hone}[1][]{\ensuremath{H^1\ifthenelse{\equal{#1}{}}{}{(#1)}}}
\newcommand{\Honez}[1][]{\ensuremath{H^1_0\ifthenelse{\equal{#1}{}}{}{(#1)}}}
\newcommand{\TF}{\mathcal{D}}
\DeclareMathOperator*{\argmin}{arg\,min}
\newcommand{\FN}{N}%{N_F}
\newcommand{\FB}{B}%{B_F}
\newcommand{\Ftau}{\tau}%{\tau_F}
\title{Guided modes of helical waveguides}
\author[J.~Gopalakrishnan]{Jay~Gopalakrishnan}
\address{Portland State University, PO Box 751, Portland OR 97201, USA }
\email{gjay@pdx.edu}
\author[M.~Neunteufel]{Michael~Neunteufel}
\address{Portland State University, PO Box 751, Portland OR 97201, USA}
\email{mneunteu@pdx.edu}
\begin{document}
\begin{abstract}
  This paper studies guided transverse scalar modes propagating
through helically coiled waveguides. Modeling the modes as solutions
of the Helmholtz equation within the three-dimensional (3D)
waveguide geometry, a propagation ansatz transforms the mode-finding
problem into a 3D quadratic eigenproblem. Through an untwisting map,
the problem is shown to be equivalent to a 3D quadratic eigenproblem
on a straightened configuration. Next, exploiting the constant
torsion and curvature of the Frenet frame of a circular helix, the
3D eigenproblem is further reduced to a two-dimensional (2D)
eigenproblem on the waveguide cross section. All three
eigenproblems are numerically treated. As expected, significant
computational savings are realized in the 2D model.  A few
nontrivial numerical techniques are needed to make the computation
of modes within the 3D geometry feasible. They are presented along
with a procedure to effectively filter out unwanted non-propagating
eigenfunctions. Computational results show that the geometric effect
of coiling is to shift the localization of guided modes away from
the coiling center. The variations in modes as coiling pitch is
changed are reported considering the example of a coiled optical
fiber.
  \\
	\vspace*{0.25cm}
	\\
	{\bf{Key words:}} helical coiling, bent waveguide, optical fiber, eigenvalue, dimension reduction, torsion\\	
	\noindent
	\textbf{{MSC2020:}} 78M10, 65F15.
\end{abstract}

\maketitle

\section{Introduction}
\label{sec:intro}

Waveguides are often bent, deformed, or coiled for varied purposes,
such as to fit it into a desired space, for tuning its resonances to
desired frequencies, or obtaining better separation of fundamental
mode from higher-order modes.  Examples include acoustic wave
propagation of sound waves in moulded pipes of a musical instrument,
and light wave propagation in coiled optical fibers or gas-filled
laser fiber amplifiers.  This paper develops a theoretical model for
computing the guided modes of a helically coiled waveguide, taking
into account its specific geometry.  Although leaky modes and
confinement loss of deformed waveguides are also of great interest,
the scope of this work is limited to perfectly guided modes of no loss
and considers only waveguides which admit such a boundary condition.

A straight (unbent)  waveguide is translationally invariant in its
longitudinal propagation direction. Then the three-dimensional (3D)
domain it occupies admits separation of variables, which allows one to
obtain natural solutions, called (propagating) modes, that propagate along the
longitudinal direction, maintaining the same transverse profile on
waveguide cross sections.  Moreover, the invariant transverse profile can
be found by solving a dimension-reduced equation on the
two-dimensional (2D) transverse cross section. The dimension reduction
(from 3D to 2D) is very attractive for practical computations.  The
main question we seek to answer in this paper is whether such 2D modes
(exist and) are also computable for helical waveguides. The approach we follow
is to find an ``unbending map'' that transforms the helical
waveguide into a straight one, and then apply the known techniques on
the straightened configuration.

To the best of our knowledge, this is the first work to clarify the
role of torsion, in addition to curvature, while computing propagating
helical transverse modes. Arguing that such modes, due to the torsion
of the helix, must incorporate a cross-sectional rotation while
propagating, we develop an ansatz for the modes.  This then leads us
to the dimension reduction to a 2D eigenproblem for the modes. Informally, using
the divergence (div) and gradient (grad) operators on the waveguide
cross section, the 2D eigenproblem for a mode $\psi$ with propagation
constant $\beta$ takes the form
$\div ( J \cl A \mathop{\text{grad}} \psi)
+ J k^2 n^2 \psi + \ii \beta \div(J r \psi ) + \ii
\beta J r\cdot \mathop{\text{grad}} \psi  = \beta^2 J^{-1} \psi$
where $n$ is the waveguide material coefficient, $k$ is the operating wavenumber, and
$J, \cl A$ and $r$ are extra coefficients which depend on
torsion and curvature of the  geometry of  helical deformation, as seen more precisely 
descriptions later in~\eqref{eq:2D-eigenproblem} and
Theorem~\ref{thm:2D-reduction}.
Circularly bent waveguides (with no torsion)
have been extensively investigated in the
optical literature~\cite{HeiblHarri75, Marcu76, Marcu82,
  ScherCole07}. All these works model the bent waveguide as a ring, or
a torus, with the terminal and initial cross sections fused. In
practice, however, long waveguides are coiled at some pitch. Our model
can take into account the effects of both the bend radius and the
pitch. Moreover, prior works used simplified terms potentially
justifiable under a slow bend assumption, an assumption that we have
no need for. We will show a relationship between our model and the
prior state-of-the-art as the geometry  approaches a limiting
case of no torsion.

How transverse mode profiles change under circular bending is a
well-studied topic in optics since optical fibers are often coiled in
spools.  It is interesting to note that in solid-state waveguides,
like dielectric optical fibers, elastic stresses caused by bending can
change the fiber's refractive index. As a fiber bends around some
center (like the center of the spool), according to the laws of
photoelasticity~\cite{Nye85}, the fiber's refractive index is expected
to increase toward the center (where the material is compressed) and
decrease away from the center (where the material is pulled in
tension). Hence, one expects the localization of the fundamental mode
profile to shift toward the center of the bend where the index is
higher, if only the stress-optic effects are taken into
consideration. However, there is also a purely mathematical or
geometric effect on the mode due to the bend.  The fundamental mode
profile {\em shifts exactly in the opposite way,} localizing away from
the center of the bend, if only geometric effects are taken into
account, as will be abundantly clear from our results in
\S\ref{sec:computing_3D}. 
This work is devoted only to capturing the geometric effect
accurately.  The mode shift seen in reality will depend on which among
the photoelastic shift and the geometric shift dominates.  The
strength of the photoelastic effect depends on the Pockel coefficients
of the underlying materials. It takes no effort to modify the
refractive index to account for photoelasticity in our model (once the
material making up the waveguide is known) since our theory admits
such variations in index profiles.

In this study, we have limited our scope to guided modes. The main
resulting drawback is that bend losses (which inevitably exist in
dielectric fibers~\cite{Marcu82a}) cannot be extracted from computed
eigenvalues from the formulation proposed here.  This drawback may be
rectifiable to some extent by using the beam propagation method into
which the mode profiles computed here can be fed in, and losses backed
out. Nonetheless, a more mathematically elegant approach would be
through an eigenvalue formulation that produces a confinement loss
factor from the imaginary part of a computed eigenvalue. To accomplish
this, new formulations with absorbing boundary conditions suitable for
helical waveguides may be needed. This is an interesting topic for
further study, not addressed in this paper.

It is natural to ask if more general waveguide deformations beyond
helical ones can be treated using the techniques presented here. This
is not clear to us. In the proof of our main theorem
(Theorem~\ref{thm:2D-reduction}) accomplishing the dimension
reduction, we make essential use of the fact that the torsion and
curvature of a helix  are constant. By the fundamental theorem of space
curves~\cite{Carmo76} in differential geometry, circular
helices are the only curves in $\bb R^3$ with constant torsion and
curvature. This gives us pause in attempting to generalize the
techniques here to arbitrarily deformed waveguides.

In the next section (\S\ref{sec:modes3D}), we start by modeling
the helically bent waveguide as a tubular expansion of an infinite
helix curve.  Seeking solutions of the Helmholtz equation within such
a domain under zero sources, we derive, after some geometric
preliminaries, a weak form of the resulting boundary value problem and
show that it is equivalent to a quadratic
eigenproblem. \S\ref{sec:untwist} introduces a map that untwists
the helical waveguide into a straightened configuration.  The
eigenproblem on the physical waveguide is then mapped to a 3D
quadratic eigenproblem on the straight cylinder. It is then further
reduced to a 2D quadratic eigenproblem in
\S\ref{sec:dimension-reduction} after clarifying an ansatz for
propagating helical transverse modes. \S\ref{sec:computing_3D}
presents discretizations of the 3D and 2D eigenproblems, cross
verification of results from the multiple eigenproblems for the modes,
discussion of findings, tables of computed propagation constants,
plots of computed eigenmodes, and a study of their variations with the
pitch of the winding.

\section{Guided modes of a helically bent waveguide}
\label{sec:modes3D}

We are concerned with helical waveguides of uniformly circular cross
section.  In this section, we introduce the three-dimensional (3D)
geometrical parameters and identify a suitable 3D equation for a
propagating mode of the waveguide, taking into account the curvilinear
nature of the propagation direction.

\subsection{The geometry}

We model the waveguide geometry 
by thickening an infinite helix curve. Accordingly, first 
let $\gamma: \R \to \R^3$ be a circular helix curve, parameterized by
\begin{align}
  \label{eq:helix'}
  \gamma(s) = \mat{
  \ahel \cos(s/l)\\  
  \ahel \sin (s/l)\\
  \bhel  s/l}, \qquad 
  l= \sqrt{\ahel^2+\bhel^2},
\end{align}
(see Figure~\ref{fig:helix_geometry'}) for some $\ahel, \bhel > 0$.  It
is easily verified that $\gamma(s)$ is an arclength parameterization. Note 
that its trace $\crv$, the image of $\gamma$, has no self-intersections, and that  $\gamma$ is a one-to-one onto map. The radius 
$\ahel$ is called the ``bend radius'' or ``coiling radius.''
The ``pitch'' of the helix $\gamma$ (or the height of
one turn of the helix) is $2\pi \bhel$ and its ``slope'' is
$\bhel /\ahel$, made by the angle
\begin{equation}
  \label{eq:alpha-defn}
  \alpha = \tan^{-1}\left( \frac{\bhel}{\ahel}\right),
\end{equation}
also marked in Figure~\ref{fig:helix_geometry'}. It is evident from the figure that 
the length of one turn of the helix equals $L=2 \pi l$.  Moreover,
$\gamma(s)$ is periodic with $L$ being the period. 

\begin{figure}[ht!]
  \centering
  \includegraphics[width=0.15\textwidth]{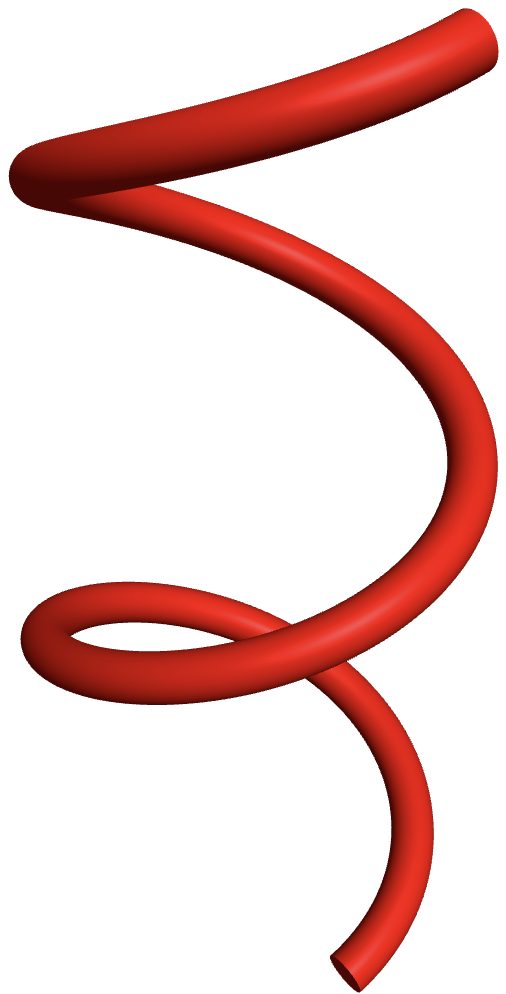}
  \begin{tikzpicture}
    \begin{axis}[
      axis equal,
      ylabel style={yshift=6cm},
      xlabel={$x$}, ylabel={$y$}, zlabel={$z$},
      axis lines=middle, % Only axis
      ticks=none, % Remove ticks
      ]
      \addplot3 [
      thick,
      blue,
      samples=30,
      samples y=0,
      domain=0:360, % 2*pi for one full turn
      ] 
      ({2*cos(x)}, {2*sin(x)}, {0.008*x});

      \addplot3 [
      very thick,
      red,
      <->,
      ] coordinates {(0,-0.2,0) (2,-0.2,0)}; 
      \node at (axis cs: {1}, -0.5, 0) {$\ahel$};
      \addplot3 [
      very thick,
      red,
      <->,
      ] coordinates {(2.3,0,0) (2.3,0, {0.008*360})}; 
      \node at (axis cs: {2.85}, 0, {0.004*360}) {\quad$2\pi\,\bhel$};
    \end{axis}
  \end{tikzpicture}
  \begin{tikzpicture}
    \draw[thick] (0,0) -- (4,0) -- (4,3) -- cycle;
    
    \draw[thick] (0.6,0) arc[start angle=0,end angle=36.87,radius=0.6];
    \node at (0.9,0.2) {$\alpha$};
    
    \node at (2,-0.6) {$2\pi\,\ahel$};
    \draw[<->] (0,-0.25) -- (4,-0.25);
    
    \node at (4.7,1.5) {$\;2\pi\,\bhel$};
    \draw[<->] (4.25,0) -- (4.25,3);
    
    \node at (1.5,2.1) {$2\pi\,l$};
    \draw[<->] (-0.15,0.3) -- (3.85,3.3);
    
  \end{tikzpicture}
  \caption{{\em Left:} A helical waveguide. {\em Middle:} The set $\crv$ giving
    the geometry of the waveguide centerline, i.e., the image of $\gamma$.
    {\em Right:} If the cylinder containing the curve $\crv$ is unwrapped, the 
    right triangle shown is obtained.}
  \label{fig:helix_geometry'}
\end{figure}

Next, consider the Frenet frame of $\gamma$.  The tangent vector
$T(s) = d \gamma / ds $ has unit length since we parameterized the
curve by arclength. Henceforth we abbreviate $d/ds$ by ``$\prime$'', a
prime. Since $T'(s) \ne 0$, the ``Frenet normal'' is the normalized vector
\begin{equation}
  \label{eq:Frenet-normal}
    \FN(s) = \frac{T'(s) } {\| T'(s)\|}.
\end{equation}
Throughout, we use $X \cdot Y$ and $\| X \|$, respectively, to refer
to the Euclidean inner product and norm for any $X, Y \in \R^3$.  The
moving Frenet frame is then the triple of vector fields 
$(T(s), \FN(s), \FB(s)),$ where $ \FB = T \times \FN$ denotes the
binormal vector. Also, recall that the Frenet structure equations read
\begin{align}
  \label{eq:struct_eq'}
  T^\prime = \kappa_n\,\FN-\kappa_g\,\FB,\qquad \FN^\prime = -\kappa_n\,T + \Ftau\,\FB,\qquad \FB^\prime = \kappa_g\,T - \Ftau\,\FN,
\end{align}
where $\kappa_g, \kappa_n$, and $\Ftau$ are the geodesic curvature,
the normal curvature, and the torsion, respectively.  In particular,
since $T'$ is in the direction of $\FN$ by~\eqref{eq:Frenet-normal}
and $T\cdot B = T \cdot N = 0$, the last equation
of~\eqref{eq:struct_eq'} implies
\begin{subequations}
  \label{eq:curvatures'}
  \begin{equation}
    \label{eq:kappa-g}
        \kappa_g = \FB^\prime\cdot T=-T^\prime\cdot \FB=0,
      \end{equation}
      while the first equation of~\eqref{eq:curvatures'} shows that
      \begin{equation}
        \label{eq:kappa-n}
        \kappa_n = T^\prime\cdot \FN=-\FN^\prime\cdot T= \| T'(s)\|.
      \end{equation}
      The remaining equation of~\eqref{eq:struct_eq'} gives
      \begin{equation}
        \label{eq:tau}
        \Ftau = \FN^\prime\cdot \FB = - \FB^\prime\cdot \FN.  
      \end{equation}
\end{subequations}
The (total) curvature $\kappa$ of a curve is the quantity
\begin{align}
  \kappa = \sqrt{\kappa_g^2 + \kappa_n^2}.
  \label{eq:total_curvature}
\end{align}

The Frenet frame and all quantities in the structure equations can be
immediately computed using the parameterization $\gamma(s)$ and the
above formulas.  Namely,
\begin{align}
    \label{eq:helix_frenet}
  T  = \frac{1}{l}\mat{-{\ahel}\sin(s/l)
  \\\phantom{-}{\ahel}\cos(s/l)\\{\bhel}},\qquad
  \FN = -\mat{\cos(s/l)\\\sin(s/l)\\0},\qquad
  \FB = \frac{1}{l}\mat{\phantom{-}{\bhel}\sin(s/l)
  \\-{\bhel}\cos(s/l)\\{\ahel}}.
\end{align}
Using trigonometric identities involving
the slope angle in~\eqref{eq:alpha-defn}, 
\begin{equation}
  \label{eq:sin-cos-alpha}
  \sin(\alpha)=\frac{\bhel}{l}, \qquad
  \cos(\alpha)=\frac{\ahel}{l}  
\end{equation}
we may also rewrite the above expressions in terms of the slope angle. 
The resulting torsion and geodesic curvature, normal curvature, and
total curvature equal 
\begin{align}
  \label{eq:torsion_frenet_helix}
  \Ftau =
  \frac{\sin(\alpha)}{l},\qquad \kappa_g=0,\qquad \kappa_n=\kappa=\frac{\cos(\alpha)}{l}=\frac{\ahel}{\ahel^2+\bhel^2}.  
\end{align}
by \eqref{eq:curvatures'} and \eqref{eq:total_curvature}.

Next, we thicken the curve to obtain a waveguide. Let $D(s)$ denote
the open disk of radius~$R$, centered at $\gamma(s)$, and lying in the
plane passing through $\gamma(s)$ with its normal vector equal to
$T(s)$.  We refer to $R$ as the ``waveguide radius'', which is
distinct from the ``bend radius'' $\ahel$.  {\em We assume that $R$ is
  small enough so that these transverse disks are all disjoint,} i.e.,
\begin{equation}
  \label{eq:3}
  D(s_1) \cap D(s_2) = \emptyset, \quad \text{for all } s_1 \ne s_2.
\end{equation}
All our results are under this assumption,
which is henceforth tacitly understood to hold.
The infinite helical  waveguide occupies the domain formed by the disjoint
union of these transverse disks,
\[
  \ominf = \bigsqcup_{s \in \R} D(s).
\]
Locally, near any point of the curve, the assumption~\eqref{eq:3}
constrains $R$ inversely in terms of the curvature of
$\gamma$. Specifically, 
\begin{equation}
  \label{eq:R<a}
  R < \frac{\ahel^2 + \bhel^2}{\ahel} = \frac{1}{\kappa}
\end{equation}
is one of the two sufficient conditions stated in
\cite{PrzybPiera01} (see also \cite{OlsenBohr10}) to ensure that a 
helical waveguide is not self-intersecting, as required by
\eqref{eq:3}.  For an alternate justification of the
assumption~\eqref{eq:R<a}, see Remark~\ref{rem:J-positive} later. 
Assumption~\eqref{eq:3} is also a global
constraint. A uniform finite bound on the curvature alone is not sufficient to
determine an $R$ for which~\eqref{eq:3} holds.

In view of the assumption~\eqref{eq:3}, any point
$(x, y,z) \in \ominf$ belongs to exactly one disk $D(s)$ whose center
is $\gamma(s)$. Hence, the mapping that takes any
$(x, y,z) \in \ominf$ to its unique centerline point $\gamma(s)$
is well defined, namely
\begin{equation}
  \label{eq:4}
  P:\; \ominf \to \gamma, \qquad
  P(x, y, z) = \gamma(s) \quad\text{for any } (x,y,z) \in D(s).  
\end{equation}
It gives the closest point projection onto the curve $\gamma$.  We
also need a mapping from $(x, y, z)$ to its centerline arclength value
$s$, namely 
\begin{subequations}
  \label{eq:S-defn}
  \begin{equation}
    \label{eq:S-defn-1}
    S: \; \ominf \to \mathbb R, \qquad
    S(x, y, z) = s \quad \text{for any } (x,y,z) \in D(s),
  \end{equation}
  which is again a proper definition since each
  $(x, y,z) \in \ominf$ belongs to exactly one $D(s)$.  
  Since $\gamma: \R \to \crv \equiv \gamma(\R)$ is a one-to-one and
  onto map (extendable smoothly to a tubular neighborhood of the
  curve), the inverse map $\gamma^{-1}: \crv \to \R$ exists and is
  smooth.  Applying $\gamma^{-1}$ to both sides of the equation
  in~\eqref{eq:4}, we find that
  \begin{equation}
    \label{eq:S-defn-2}
      S = \gamma^{-1} \circ P,
  \end{equation}
\end{subequations}
an alternative representation of $S$.  We do not have a
closed form expression for $S$ or $P$. Except when the
helix degenerates to a straight or toroidal waveguide, one cannot expect analytical expressions for them  since 
a transcendental equation must be solved. In
\S\ref{sec:helically_waveguide_qevp_comp} we present a procedure
to approximate $S$ numerically. We proceed to use $S$ to develop a
mode ansatz in the next subsection.

\subsection{Modes as eigenfunctions}

We are interested in modes propagating along the helical
waveguide~$\ominf$.  They are modeled as solutions of the
Helmholtz equation within the waveguide with zero sources.
Specifically, given a material coefficient function
$n : \ominf \to \R$ and a wavenumber $k >0$, we are interested in
finding a function $u : \ominf \to \C$ satisfying
\begin{subequations}
  \label{eq:helmholtz'}
    \begin{align} \label{eq:helmholtz-inf'}
      \Delta u + k^2 n^2 u &= 0, \quad \text{in } \ominf, \\
      \label{eq:helmholtz-bc-inf'}
      u &= 0, \quad \text{on } \d \ominf.
    \end{align}
\end{subequations}
Such waveguide modes arise when studying time-harmonic wave
propagation within $\ominf$.
We may think of $u$ as modeling a pressure variable in acoustic wave
propagation of sound waves in tubes, or a polarization-maintaining
electromagnetic wave in coiled optical fibers. Our notation
in~\eqref{eq:helmholtz-inf'} is closer to the optical fiber
application where $n$ can be interpreted as the optical refractive
index.

Of particular interest are modes that propagate in a curvilinear direction
through the waveguide. They have the form
\begin{equation}
  \label{eq:5}
    u(x, y, z) = e^{\ii \beta \varphi(x, y, z) } U(x, y, z)
\end{equation}
for some propagation constant $\beta$ and some smooth phase function
$\varphi$ that determines the wavefronts. Here $\ii$ denotes the
imaginary unit. Given a $u$, there are obviously multiple ways to
rewrite it in terms of a $U$ and a $\varphi$ as in~\eqref{eq:5}. A
specific $\varphi$ of interest will be specified shortly, but we begin
with an elementary observation that holds for any smooth $\varphi$.

\begin{prop}
  \label{prop:U-varphi-eqn}
  If $u$ in the form~\eqref{eq:5} solves~\eqref{eq:helmholtz'},
  then $U=0$ on $\d \ominf$ and 
  \begin{gather}
    \label{eq:8}
    \Delta U + \ii \beta \div( U \nabla \varphi) +
    \ii \beta \nabla \varphi \cdot \nabla U
    + (k^2 n^2 - \beta^2 \| \nabla \varphi\|^2) U
     = 0, \quad \text{in } \ominf.
  \end{gather}
\end{prop}
\begin{proof}
  By the product rule,
  \begin{align*}
    & \nabla u = e^{\ii\beta \varphi} \left(\nabla U
      + \ii\beta U\nabla {\varphi}\right),\\
    &\Delta u = e^{\ii\beta {\varphi}}
      \left(\Delta U + \ii\beta\div(U\nabla {\varphi})
      + (\ii\beta\nabla U - \beta^2 U\nabla {\varphi})\cdot\nabla \varphi\right),
  \end{align*}
  so~\eqref{eq:8}  follows by substituting these
  into~\eqref{eq:helmholtz-inf'}.
\end{proof}

Next, we refine the ansatz~\eqref{eq:5} by selecting a phase~$\varphi$
that is constant on each waveguide cross section. Recall that the $S$
defined in~\eqref{eq:S-defn} is constant on each transverse
disk~$D(s)$. Hence a solution of~\eqref{eq:helmholtz'} of the form
\begin{equation}
\label{eq:6}
u(x, y, z) = e^{\ii \beta \,S(x, y, z)\, } U(x, y, z)
\end{equation}
represents a wave that propagates along the fiber, having phases that
continuously increase with the arclength $s$ of the waveguide
centerline. We are interested in perfectly guided modes (i.e., their
energy does not decay as it propagates along the fiber) so we proceed
with the understanding that $\beta$ in~\eqref{eq:6} is a real number.  At this
point $U$ in the ansatz~\eqref{eq:6} depends on all coordinate
variables (but we will refine it further as we proceed).  By
Proposition~\ref{prop:U-varphi-eqn}, $U$ in~\eqref{eq:6}
solves~\eqref{eq:8} with $S$ in place of $\varphi$, i.e.,
\begin{subequations}
  \label{eq:BVP-cell}
\begin{gather}
  \label{eq:BVP-cell-1}
  \Delta U + \ii \beta \div( U \nabla {S}) +
    \ii \beta \nabla {S} \cdot \nabla U
    + (k^2 n^2 - \beta^2 \| \nabla {S}\|^2) U
  = 0 \qquad \text{in } \ominf,
  \\
  \label{eq:BVP-cell-2}
  U|_{\d\ominf} = 0
\end{gather}
\end{subequations}
for a sufficiently regular function $U$.

The system~\eqref{eq:BVP-cell} is a quadratic eigenvalue problem for $\beta$ and $U$. We will have more to say about helical transverse propagating modes in \S\ref{sec:heli-transverse-modes}. For now, we conclude this section by writing out a useful  weak formulation of~\eqref{eq:BVP-cell}. Let $H^1_{0,\mathrm{loc}}(\ominf)$ denote the standard Sobolev space of locally square-integrable
complex-valued functions on $\ominf$ whose first-order derivatives are
also locally square-integrable and whose trace vanishes on the boundary
$\d \ominf$.
For scalar functions $f, g$ on some
measurable domain $O$, we use $(f, g)_O$ to denote the complex
$L^2(O)$ inner product of $f$ and $g$. Even when $f$ and $g$ are
vector fields on $O$, we continue to use the same notation to denote the integral of their appropriate product, i.e., 
$(f, g)_O = \int_O f \cdot \overline g$. All integrals are
computed using the standard Lebesgue measure $dx$ in $O$, which we
omit from the integral notation when no confusion can arise. Let 
$\TF(\ominf)$ denote the space of smooth functions with compact support in $\ominf$. Define, for any $W\in H^1_{0,\mathrm{loc}}(\ominf)$ and $V \in \TF(\ominf)$, 
the following three sesquilinear forms on $H^1_{0,\mathrm{loc}}(\ominf)\times \TF(\ominf)$.
\begin{subequations}
  \label{eq:a-b-c}  
\begin{align}
    &a(W, V) := (\nabla W,  \nabla {V})_\ominf - (k^2n^2W, V)_\ominf, 
    \\
    & b(W, V) := \ii (W\nabla S,   \nabla{V})_\ominf
      - \ii (\nabla W, {V} \nabla S )_\ominf,
    \\
    &c(W, V) := ( W \nabla S ,  V \nabla S)_\ominf.
  \end{align}
\end{subequations}

\begin{prop}[Eigenproblem on helical waveguide]
  \label{prop:weak-form}
  In weak form,  the boundary value problem~\eqref{eq:BVP-cell} is
  the problem of finding a quadratic eigenvalue $\beta$ and its
  corresponding eigenfunction $U \in H^1_{0,\mathrm{loc}}(\ominf)$ such that
  \begin{equation}
    \label{eq:7}
    a(U, V) + \beta \, b(U, V) + \beta^2\,c(U, V) = 0, \qquad
    \text{ for all } V \in \TF(\ominf).
  \end{equation}
\end{prop}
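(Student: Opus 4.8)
The plan is to obtain \eqref{eq:7} from the (classical or distributional) formulation \eqref{eq:BVP-cell} by the textbook recipe: pair the differential equation with a conjugated test function, integrate over $\ominf$, shift derivatives onto the test function by integration by parts, and recognize the resulting groups of terms as the three sesquilinear forms in \eqref{eq:a-b-c}. Because the test functions in \eqref{eq:7} lie in $\TF(\ominf)$, i.e.\ they are smooth and \emph{compactly supported} in $\ominf$, every integral below is really over a bounded subregion on which $U\in H^1$; hence the integrations by parts are justified and no boundary terms survive, and in particular the unboundedness of $\ominf$ is immaterial. This is exactly why no completion of $\TF(\ominf)$ is needed as the test space.

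First I would fix $V\in\TF(\ominf)$, multiply \eqref{eq:BVP-cell-1} by $-\overline V$, and integrate over $\ominf$. Green's identity turns the Laplacian term into $(\nabla U,\nabla V)_\ominf$, which combines with the $k^2n^2$ term $-(k^2n^2U,V)_\ominf$ to give $a(U,V)$. Integrating $-\ii\beta\,\div(U\nabla S)$ against $\overline V$ by parts produces $\ii\beta\,(U\nabla S,\nabla V)_\ominf$, while the term $-\ii\beta\,\nabla S\cdot\nabla U$ tested against $\overline V$ gives $-\ii\beta\,(\nabla U,V\nabla S)_\ominf$; here one uses that $S$ is real-valued (so that $\nabla S$ is unaffected by conjugation) to read off the sum of these two as $\beta\,b(U,V)$. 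The remaining zeroth-order contribution $\beta^2\|\nabla S\|^2U$ yields $\beta^2\,(U\nabla S,V\nabla S)_\ominf=\beta^2\,c(U,V)$. Adding the three groups gives precisely \eqref{eq:7}. The Dirichlet condition \eqref{eq:BVP-cell-2} need not be imposed separately, as it is encoded in the requirement $U\in H^1_{0,\mathrm{loc}}(\ominf)$.

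For the reverse implication I would run the same computation backwards: if $U\in H^1_{0,\mathrm{loc}}(\ominf)$ satisfies \eqref{eq:7} for every $V\in\TF(\ominf)$, then undoing the integrations by parts shows that \eqref{eq:BVP-cell-1} holds in the sense of distributions on $\ominf$, and the vanishing trace is again part of the function space, establishing the claimed equivalence. I do not expect a genuine obstacle in any of this; the only points worth stating with care are (i) that $S$, and therefore $\nabla S$ and $\div(U\nabla S)$, are well defined and smooth on $\ominf$ — guaranteed by the smoothness of $\gamma^{-1}$ and of the closest-point projection discussed around \eqref{eq:S-defn} — and (ii) the compact-support observation above, which is what lets one discard boundary terms on the infinite domain $\ominf$.
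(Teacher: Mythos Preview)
Your proposal is correct and follows essentially the same route as the paper: multiply \eqref{eq:BVP-cell-1} by a conjugated compactly supported test function, integrate by parts on the two divergence terms (using compact support to discard boundary contributions), and regroup into $a+\beta b+\beta^2 c$. Your write-up is in fact a bit more careful than the paper's, spelling out the role of real-valuedness of $S$ in matching $b(U,V)$ and sketching the converse direction, neither of which the paper makes explicit.
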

\begin{proof}  
  Multiply~\eqref{eq:BVP-cell-1} by the complex conjugate of a test
  function $V \in \TF(\ominf)$ and integrate over $\ominf$. In the resulting equation,
  replacing  $\Delta U$ by $\div(\nabla U)$, there are 
  two terms with the divergence operator. We integrate
  both by parts. Then, using the compact support of $\TF(\ominf)$, the
  boundary integrals arising from the integration by parts vanish, and
  we obtain
  \begin{align}
    \label{eq:weak_form3d'}
    \int_{\ominf}
    \nabla U\cdot\nabla\overline{V}
    +
    \ii\beta\left(U\nabla{S}\cdot \nabla\overline{V}
    -\overline{V}\nabla S\cdot  \nabla{U}\right)
    + \left(\beta^2\, \|\nabla{S}\|^2 - k^2n^2\right)
    U\overline{V}
    \;= \,0.
  \end{align}
  Rearranging, we obtain~\eqref{eq:7}.
\end{proof}

\section{An unbending map and a straightened configuration}
\label{sec:untwist}

In this section, we introduce a map $\Phi$ that ``unbends'' the
helical deformation of the waveguide.  Using it, we show that the
previous quadratic eigenproblem~\eqref{eq:7} transforms into a variable
coefficient quadratic eigenproblem on a straight cylindrical waveguide,
which we shall refer to as the {\em straightened configuration}.

Recall the normal $\FN(s)$ and binormal $\FB(s)$ fields from
\eqref{eq:helix_frenet}. 
Letting 
\begin{equation}
  \label{eq:cyl-defn}
  \cylinf = \{(\rx,\ry,\rz)\in\R^3\,:\, \rx^2+\ry^2 < R^2\}
\end{equation}
denote an upright infinite cylinder,  define an unbending map that maps the helical waveguide $\ominf$ to the straightened configuration $\cylinf$ by 
\begin{equation}
\label{eq:Phi-defn}
  \Phi:\cylinf\to\ominf,\qquad \Phi(\rx,\ry,\rz) = \curve(\rz) + \rx\,\FN(\rz) + \ry\,\FB(\rz).
\end{equation}
Let $\hat\Gamma(s)$ denote the disk obtained by the
intersection of the plane $\hat z = s$
with $\cylinf$. 
Clearly, $\Phi$ maps $\cylinf$ one-to-one onto  $\ominf$ and furthermore
\begin{equation}
  \label{eq:10a}
  \Phi \big(\hat\Gamma(s) \big) =  D(s), \qquad s\in\R.
\end{equation}
We use $(\rx, \ry, \rz)$ to denote points in $\cylinf$ and $(x,y,z)$ to denote points in $\ominf$.

Given a function $U(x,y,z)$ on $\ominf$, its pullback   by $\Phi$
on $\cylinf$ is the function
\[
  \hat U = U \circ \Phi.
\]
Derivative operators are supplied with a hat
to indicate differentiation with respect to the variables
$\rx, \ry, \rz$ in $\cylinf$, e.g.,
$\hat\nabla \hat U = [\d_{\hat x} \hat U, \d_{\hat y} \hat U, \d_{\hat
  z} \hat U ]^t$.  Let $[\hat\nabla \Phi]$ denote the Jacobian matrix
containing all first-order derivatives of $\Phi$, i.e.,
$[\hat\nabla\Phi] = [\d_{\hat x} \Phi, \d_{\hat y} \Phi, \d_{\hat z}
\Phi].$ It is invertible (since $\Phi$ is).
Let $e_{\hat z}$ denote the unit vector in the $\hat z$
direction and let
\begin{gather}
  \label{eq:C-d-J-0}
  \CG= [\hat\nabla \Phi]^t [\hat\nabla \Phi], \quad
  d = \CG^{-1} e_{\hat z}, \quad 
  J= \big|\det [\hat\nabla \Phi]\,\big|.
\end{gather}
These quantities will appear in a transformed eigenproblem for $\hat U$, so we note the following formulas first.

\begin{lemma}
  \label{lem:C-J}
  For the map $\Phi$ in~\eqref{eq:Phi-defn}, the above quantities are given by 
  $J = |1 - \rx \kappa|,$  
  \begin{gather}
    \label{eq:C-J-d}
    \CG =
    \mat{
      1 & 0 & -\ry \Ftau
      \\
      0 & 1 & \phantom{-}\rx \Ftau
      \\
      -\ry \Ftau & \rx \Ftau & \; (1 - \rx \kappa)^2
                             + (\rx^2 + \ry^2)\Ftau^2
    },
    \quad\text{and}\quad
    d = J^{-2}
    \mat{
      \phantom{-}\ry \Ftau \\ -\rx \Ftau  \\ 1 
    }.
  \end{gather}
\end{lemma}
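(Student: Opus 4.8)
The plan is to compute the Jacobian $[\hat\nabla\Phi]$ one column at a time and then exploit the orthonormality of the Frenet frame to reduce $\CG$, $J$, and $d$ to computations involving a single fixed $3\times 3$ matrix with scalar entries. From~\eqref{eq:Phi-defn} one reads off $\d_{\hat x}\Phi = \FN(\rz)$ and $\d_{\hat y}\Phi = \FB(\rz)$, while $\d_{\hat z}\Phi = \curve'(\rz) + \rx\,\FN'(\rz) + \ry\,\FB'(\rz)$. Substituting $\curve' = T$ together with the Frenet structure equations~\eqref{eq:struct_eq'} and the helix values $\kappa_g = 0$, $\kappa_n = \kappa$ from~\eqref{eq:torsion_frenet_helix}, this last column collapses to $\d_{\hat z}\Phi = (1 - \rx\kappa)\,T - \ry\Ftau\,\FN + \rx\Ftau\,\FB$. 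Hence $[\hat\nabla\Phi] = Q\,M$, where $Q = [\,T \mid \FN \mid \FB\,]$ is an orthogonal matrix and
\[
  M = \mat{0 & 0 & 1 - \rx\kappa \\ 1 & 0 & -\ry\Ftau \\ 0 & 1 & \rx\Ftau}.
\]

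Because $Q^t Q = I$ and $|\det Q| = 1$, this factorization immediately yields $\CG = [\hat\nabla\Phi]^t[\hat\nabla\Phi] = M^t M$ and $J = |\det [\hat\nabla\Phi]| = |\det M|$. Expanding $\det M$ along its first row gives $\det M = 1 - \rx\kappa$, so $J = |1 - \rx\kappa|$. The entries of $M^t M$ are the pairwise Euclidean inner products of the columns of $M$, and evaluating these reproduces exactly the matrix $\CG$ displayed in~\eqref{eq:C-J-d}; the only entry requiring any work is $\CG_{33} = (1 - \rx\kappa)^2 + \rx^2\Ftau^2 + \ry^2\Ftau^2$.

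For $d = \CG^{-1} e_{\hat z}$ I would not invert $\CG$ but instead verify the stated expression directly: forming the product $\CG\,\bigl[\,J^{-2}(\ry\Ftau,\; -\rx\Ftau,\; 1)^t\,\bigr]$, the first two components vanish by inspection, and in the third component the $\Ftau^2$ contributions cancel, leaving $J^{-2}(1 - \rx\kappa)^2$, which equals $1$ since $\det\CG = (\det M)^2 = J^2$. No step here is a genuine obstacle; the one thing demanding care is consistent bookkeeping of the moving frame — in particular the column ordering $(\FN, \FB, T)$ induced by $[\hat\nabla\Phi]$ versus the ordering $(T, \FN, \FB)$ used to write $\CG$ — and tracking the signs appearing in the structure equations~\eqref{eq:struct_eq'}.
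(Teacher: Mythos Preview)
Your proof is correct and follows essentially the same approach as the paper: compute the columns of $[\hat\nabla\Phi]$ via the Frenet structure equations, use orthonormality of the frame to obtain $\CG$, read off $J$ from the determinant, and verify $d$ by checking $\CG d = e_{\hat z}$. The only difference is cosmetic: you make the orthonormality step explicit by factoring $[\hat\nabla\Phi] = QM$ with $Q$ orthogonal, whereas the paper simply invokes the orthonormality of $T,\FN,\FB$ directly when forming $[\hat\nabla\Phi]^t[\hat\nabla\Phi]$; your final caveat about frame ordering is unnecessary, since $\CG$ is expressed in the standard $(e_{\hat x},e_{\hat y},e_{\hat z})$ basis on $\cylinf$, not in the Frenet basis.
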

\begin{proof}
  Differentiating the vector in~\eqref{eq:Phi-defn} with respect to
  $\rx, \ry,$ and $\rz$, we obtain the columns of the Jacobian matrix
  of $\Phi$, 
  \begin{align*}
    \hat \nabla \Phi
    &  =
    \mat{
        \FN & \FB & \; \gamma'(\rz) + \rx \FN'(\rz) + \ry \FB'(\rz)
      }.
  \end{align*}
  Noting that   $\gamma^\prime=T,$  the Frenet structure equations of~\eqref{eq:struct_eq'} imply 
  \begin{align}
    \label{eq:Phi-derivative}
    \hat \nabla \Phi
    & =
    \mat{
        \FN & \FB & (1 - \rx \kappa) T + \rx \Ftau \FB - \ry \Ftau \FN 
    }.
  \end{align}
  The  expression for $\CG=[\hat\nabla \Phi]^t [\hat\nabla \Phi]$ in
  \eqref{eq:C-J-d} now
  follows by the orthonormality of $T$, $\FN$, and $\FB$.  Next, a cofactor
  expansion of  the just obtained expression for $\CG$ shows that
  \begin{equation}
    \label{eq:110}
    \det \CG = (1 - \rx \kappa)^2.
  \end{equation}
  Since we also know from \eqref{eq:C-d-J-0} that
  $\det \CG = (\det [\hat\nabla\Phi])^2 = J^2$, the stated expression
  for $J$ follows from~\eqref{eq:110}.  Finally, to prove the stated
  expression for $d$, it suffices to observe that when the matrix and
  the vector expressions in~\eqref{eq:C-J-d} are multiplied and
  simplified, we find that $ \CG d = e_{\hat z}$.
\end{proof}

\begin{remark}
  \label{rem:J-positive}
  By Lemma~\ref{lem:C-J} and \eqref{eq:torsion_frenet_helix}, the determinant reads
  \begin{equation}
    \label{eq:J-helix}
    J(\rx, \ry, \rz) =
    |1-\rx\kappa| =
    1-\frac{\rx}{l}\cos\alpha.
  \end{equation}
  We have removed the absolute value in the last expression because it
  is positive: indeed, from~\eqref{eq:cyl-defn}, we know that
  $\rx^2 + \ry^2 < R^2$, so 
  \[
    \rx\kappa< R \kappa  < 1
  \]
  due to~\eqref{eq:R<a}.
  It is interesting to note that what we have just shown, namely 
  \begin{equation}
    \label{eq:J>0}
    J > 0, 
  \end{equation}
  also yields, by the implicit function theorem, the local
  invertibility of the map $\Phi$. In particular, this gives a
  justification (different from \cite{PrzybPiera01}) of the
  sufficiency of condition~\eqref{eq:R<a} to avoid self-intersections
  for nearby points in the parameter domain $\cylinf$.
\end{remark}

The eigenproblem of Proposition~\ref{prop:weak-form} on the bent helical
waveguide can now be transformed into an eigenproblem on the straight
cylinder. The pullback map $U \mapsto \hat U = U \circ \Phi$ maps elements in $H^1_{0,\mathrm{loc}}(\ominf)$ one-to-one onto $H^1_{0,\mathrm{loc}}(\cylinf)$. 
Let
\begin{gather*}
  \hat n = n \circ \Phi,
  \qquad
  \hat a(\hat U, \hat V)
  = (J \CG^{-1} \hat \nabla \hat U,  \hat \nabla \hat{V})_{\cylinf} 
  -(J k^2\hat n^2 \hat U, \hat V)_{\cylinf}, 
  \\
  \hat b(\hat U, \hat V) = \ii (J \hat U, d \cdot \hat \nabla V)_{\cylinf}
  - \ii (J d \cdot \hat \nabla \hat U, \hat V)_{\cylinf}, \qquad 
  \hat c(\hat U, \hat V) = ( J^{-1}  \hat U, \hat V)_{\cylinf}.
\end{gather*}

\begin{prop}[Eigenproblem on straightened configuration]
  \label{prop:straight-cyl-ewp}
  An eigenvalue $\beta$ and corresponding eigenfunction $U$
  solves~\eqref{eq:7} on the waveguide $\ominf$
  if and only if $\beta$ and $\hat U = U \circ \Phi \in H^1_{0,\mathrm{loc}}(\cylinf)$ solves
  \begin{equation}
    \label{eq:11}
    \hat a(\hat U, \hat V) + \beta \, \hat b( \hat U, \hat V)
    + \beta^2\, \hat c( \hat U, \hat V) = 0, \qquad
    \text{ for all } \hat V \in \TF(\cylinf),
  \end{equation}
  on the straight cylinder $\cylinf$.
\end{prop}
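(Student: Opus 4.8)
\emph{Proof proposal.} The plan is to show that, under the pullback $U \mapsto \hat U = U \circ \Phi$, the three sesquilinear forms $a,b,c$ appearing in~\eqref{eq:7} transform \emph{exactly} into $\hat a,\hat b,\hat c$. Since $\Phi$ is a smooth diffeomorphism of $\cylinf$ onto $\ominf$ with smooth inverse (by Remark~\ref{rem:J-positive} or by construction), the pullback is a linear bijection of $H^1_{0,\mathrm{loc}}(\ominf)$ onto $H^1_{0,\mathrm{loc}}(\cylinf)$ — it also preserves the vanishing trace because $\Phi(\d\cylinf) = \d\ominf$ — and it restricts to a bijection of $\TF(\ominf)$ onto $\TF(\cylinf)$ because a diffeomorphism carries compact supports to compact supports. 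Granting the three form identities, the polynomial $a(U,V) + \beta\,b(U,V) + \beta^2\,c(U,V)$ equals $\hat a(\hat U,\hat V) + \beta\,\hat b(\hat U,\hat V) + \beta^2\,\hat c(\hat U,\hat V)$ for every admissible $V$ and its pullback $\hat V$; hence one expression vanishes for all test functions iff the other does, which is precisely the asserted equivalence of~\eqref{eq:7} and~\eqref{eq:11} together with the correspondence of eigenpairs.

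First I would record the two change-of-variables ingredients, valid here since $\Phi$ restricted to any compact piece is a bi-Lipschitz $C^1$ map. The volume element transforms by $\int_\ominf f = \int_\cylinf (f\circ\Phi)\,J$ with $J = |\det[\hat\nabla\Phi]|$ as in~\eqref{eq:C-d-J-0}, and the chain rule gives $\hat\nabla\hat U = [\hat\nabla\Phi]^t\big((\grad U)\circ\Phi\big)$, i.e.\ $(\grad U)\circ\Phi = [\hat\nabla\Phi]^{-t}\hat\nabla\hat U$. Using $\CG = [\hat\nabla\Phi]^t[\hat\nabla\Phi]$ and the symmetry of $\CG^{-1}$, these combine into the pointwise identity $\big((\grad U)\cdot\overline{\grad V}\big)\circ\Phi = (\CG^{-1}\hat\nabla\hat U)\cdot\overline{\hat\nabla\hat V}$. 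Multiplying by $J$ and integrating over $\cylinf$ immediately turns $a(U,V)$ into $\hat a(\hat U,\hat V)$ (note $k^2 n^2$ becomes $k^2\hat n^2$ with $\hat n = n\circ\Phi$).

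The crucial non-routine input is the identity $S\circ\Phi = \hat z$ on $\cylinf$. This follows from~\eqref{eq:10a}: $\Phi$ maps the slice $\hat\Gamma(s) = \{\rz = s\}\cap\cylinf$ onto $D(s)$, while $S$ is the constant $s$ on $D(s)$ by~\eqref{eq:S-defn} and $\hat z$ is the constant $s$ on $\hat\Gamma(s)$; hence the two functions agree. Consequently $(\grad S)\circ\Phi = [\hat\nabla\Phi]^{-t}\hat\nabla\hat z = [\hat\nabla\Phi]^{-t} e_{\hat z}$, so that $\big((\grad S)\cdot\overline{\grad V}\big)\circ\Phi = (\CG^{-1}e_{\hat z})\cdot\overline{\hat\nabla\hat V} = d\cdot\overline{\hat\nabla\hat V}$ by the definition of $d$ in~\eqref{eq:C-d-J-0}, and likewise $\big((\grad U)\cdot(\grad S)\big)\circ\Phi = (\CG^{-1}\hat\nabla\hat U)\cdot e_{\hat z} = d\cdot\hat\nabla\hat U$. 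Feeding these into the two terms of $b$ and multiplying by $J$ yields $\hat b(\hat U,\hat V)$. For $c$, the same computation gives $\|\grad S\|^2\circ\Phi = (\CG^{-1}e_{\hat z})\cdot e_{\hat z} = d\cdot e_{\hat z}$, which equals $J^{-2}$ by the explicit formula for $d$ in Lemma~\ref{lem:C-J}; hence $c(U,V) = \int_\cylinf \hat U\,\overline{\hat V}\,J^{-2}\,J = (J^{-1}\hat U,\hat V)_\cylinf = \hat c(\hat U,\hat V)$.

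I expect no real obstacle beyond bookkeeping: apart from the geometric identification $S\circ\Phi = \hat z$, everything reduces to the chain rule and to the already-proved algebraic facts $\CG d = e_{\hat z}$ and $d\cdot e_{\hat z} = J^{-2}$ from Lemma~\ref{lem:C-J}. The one point deserving a sentence of care is that $S$ (and hence $\grad S$) is only known implicitly, so the argument must avoid any appeal to a closed form for $S$ and use only the defining property~\eqref{eq:S-defn} together with~\eqref{eq:10a}, exactly as above.
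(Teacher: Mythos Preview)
Your proposal is correct and follows essentially the same approach as the paper's proof: both arguments hinge on the chain-rule identity $(\grad U)\circ\Phi = [\hat\nabla\Phi]^{-t}\hat\nabla\hat U$, the geometric observation $S\circ\Phi = \hat z$ from~\eqref{eq:10a} and~\eqref{eq:S-defn}, and the algebraic fact $e_{\hat z}\cdot\CG^{-1}e_{\hat z} = J^{-2}$ from Lemma~\ref{lem:C-J}, applied in turn to transform $a,b,c$ into $\hat a,\hat b,\hat c$. Your write-up is slightly more explicit about why the pullback is a bijection between the relevant function spaces, but otherwise the two proofs are the same.
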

\begin{proof}
  By the chain rule, 
  \begin{equation}
    \label{eq:13}
    \hat \nabla \hat U = [\hat\nabla \Phi]^t (\nabla U)\circ\Phi. 
  \end{equation}
  Using this within a change of variables formula, we obtain 
  \begin{align*}
    \int_\ominf \nabla U \cdot \nabla {V}
    & =
    \int_{\cylinf} [\hat\nabla \Phi]^{-t} \hat\nabla \hat U
    \cdot [\hat\nabla \Phi]^{-t} \hat \nabla {V}  \, |\det [\hat\nabla \Phi] \,|
    \\
    & = \int_{\cylinf} ([\hat\nabla \Phi]^t [\hat\nabla\Phi])^{-1} \hat \nabla \hat U
      \cdot \hat \nabla {\hat {V}}\; J.
  \end{align*}
  Replacing $V$ by $\overline{V}$, this shows that
  $a(U, V) = \hat a (\hat U, \hat V)$.

  Next, consider $b(U, V)$ in~\eqref{eq:a-b-c}. Since $\Phi$ maps
  $\hat \Gamma(s)$ one-to-one onto $D(s)$---see~\eqref{eq:10a}---the
  definition \eqref{eq:4} of $P$ shows that
  $(P \circ \Phi)(\rx, \ry, \rz) = \gamma(\rz)$. Hence the
  definition~\eqref{eq:S-defn} of $S$ shows that
  \begin{equation}
    \label{eq:14}
    (S \circ \Phi )(\rx, \ry, \rz) = \rz,    
  \end{equation}
  so $\rS = S \circ \Phi$ has $\hat \nabla \rS = e_{\hat
    z}$.  Hence one of the terms in $b(U, V)$, after applying the
  change of variables formula, becomes
  \begin{align*}
    (U \nabla S, \nabla V)_\ominf
    & = (J \hat U [\hat\nabla \Phi]^{-t}\hat\nabla \rS,
    [\hat\nabla \Phi]^{-t}\hat\nabla \hat V)_{\cylinf}
    \\
    & = (J \hat U \CG^{-1} e_{\hat z}, \nabla \hat V)_{\cylinf}
    = (J \hat U , d \cdot \nabla \hat V)_{\cylinf}.
  \end{align*}
  Handling the other term of $b(U, V)$ similarly, we show that
  $b(U, V) = \hat b(\hat U, \hat V)$.
  
  Finally, for the third sesquilinear form $c(U, V)$,
  \begin{align*}
    c(U, V) = ( U \nabla S ,  V \nabla S)_\ominf
    & = (J \hat U [\hat\nabla \Phi]^{-t}\hat\nabla \rS ,
    \hat V [\hat\nabla \Phi]^{-t}\hat\nabla \rS )_{\cylinf}
    \\
    & = (J ( e_{\hat z} \cdot \CG^{-1} e_{\hat z}) \hat U ,   \hat V )_{\cylinf}.
  \end{align*}
  By Lemma~\ref{lem:C-J},
  $ e_{\hat z} \cdot \CG^{-1} e_{\hat z} = e_{\hat z} \cdot d = J^{-2}$ showing that
  $c(U, V) = \hat c (\hat u , \hat v)$.  
\end{proof}

\section{Dimension reduction}
\label{sec:dimension-reduction}

In this section, we introduce a 2D eigenproblem for transverse modes
under further assumptions and show how it can be derived from the
prior 3D eigenproblem.

A basic question we must grapple with is what constitutes a transverse
mode in a helical waveguide. The difficulty is that the torsion of the
helix creates a natural rotation of the cross section as one proceeds in
the propagating direction. Therefore, unlike straight waveguides where
transverse modes do not vary in the propagating direction, for helical
waveguides, we seek to identify transverse modes with a built-in cross
sectional rotation, as defined next.

\subsection{Helical transverse modes}
\label{sec:heli-transverse-modes}

Define a vector field $Z(x, y, z)$ on $\om$ by
\begin{align}
  \label{eq:Z-defn}
  Z = \big(T(\rz)-\rx \kappa T(\rz) + \rx \Ftau \FB(\rz) -\ry \Ftau
  \FN(\rz)\big)\circ\Phi^{-1}.  
\end{align}
It has the property that $Z = T$ along the waveguide
centerline. Moreover, its tangential components on any cross section
$D(\rz)$ form a rotational vortex of strength determined by the
torsion~$\tau$. We define a {\em propagating helical transverse mode}
as any solution of the Helmholtz equation~\eqref{eq:helmholtz'} of the
form~\eqref{eq:6} satisfying an additional transversality condition
that the directional derivative of $U$ in the $Z$~direction vanishes,
i.e., propagating helical transverse modes are of the form
\begin{equation}
  \label{eq:ansatz-transverse}
  u(x, y, z) = e^{\ii \beta \,S(x, y, z)\, } U(x, y, z),
  \quad 
  Z \cdot \nabla U = 0.  
\end{equation}
This is a strengthening  of the prior
ansatz~\eqref{eq:6}. While the ansatz~\eqref{eq:6} posed no
restriction at all on admissible solutions (since any solution can be
brought to that form), now it is not even clear if there are Helmholtz
solutions satisfying the revised ansatz~\eqref{eq:ansatz-transverse}.
Later (in \S\ref{ssec:cross-verification}), we  present ample numerical evidence 
pointing to the existence of such modes. Here we proceed to show why
this is a very natural ansatz when viewed from the straightened
configuration.

First note that coiled waveguides occurring in practice, ignoring any
stress effects, when uncoiled, make a straightened waveguide whose
material properties only vary along transverse cross sections (and not
longitudinally along the propagation direction). To express this as a
mathematical assumption, we use the previously defined straightened
configuration and assume that $n(x, y, z)$ is such that
$\hat n = n \circ \Phi$ has no longitudinal variations there, i.e., 
\begin{align}
  \label{eq:n-assumption}
  \rn(\rx,\ry,\rz) \equiv \rn(\rx,\ry)
  \quad \text{ is independent of $\rz$.}
\end{align}
It is then natural to consider modes that also satisfy a similar assumption, i.e., to  restrict the arbitrary 3D variation of
$U(x, y, z)$ by assuming that $\hat U = U \circ \Phi$ is
independent of $\rz$ in the straightened configuration, i.e.,
\begin{align}
  \label{eq:ansatz'}
  \hat U(\rx,\ry, \rz) \equiv \hat U(\rx,\ry)
  \quad \text{ is independent of $\rz$.}
\end{align}
Such an assumption is equivalent to restricting to propagating helical
transverse modes defined in~\eqref{eq:ansatz-transverse}, as we now
show.

\begin{prop}
  Let $Z$ be as in \eqref{eq:Z-defn}. The condition~\eqref{eq:ansatz'} holds   for $\hat U = U \circ \Phi$ on $\cyl$ if and only if
  \begin{equation}
    \label{eq:ansatz-in-helix}
    Z \cdot \nabla U = 0 \qquad \text{ in } \ominf.
  \end{equation}
\end{prop}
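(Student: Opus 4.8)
The plan is to compute $Z\cdot\nabla U$ by pulling back to the straightened cylinder $\cyl$, where the chain rule relates $\nabla U$ and $\hat\nabla\hat U$, and where the vector field $Z$ was constructed precisely so that it becomes (a multiple of) $e_{\hat z}$ after pullback. First I would note that since $\Phi$ is a diffeomorphism, the condition $Z\cdot\nabla U = 0$ in $\ominf$ is equivalent to the identity $(Z\cdot\nabla U)\circ\Phi = 0$ in $\cyl$, and by the chain rule \eqref{eq:13} we have $(\nabla U)\circ\Phi = [\hat\nabla\Phi]^{-t}\,\hat\nabla\hat U$. Hence
\begin{align*}
  (Z\cdot\nabla U)\circ\Phi
  &= \big(Z\circ\Phi\big)\cdot\big([\hat\nabla\Phi]^{-t}\hat\nabla\hat U\big)
   = \big([\hat\nabla\Phi]^{-1}(Z\circ\Phi)\big)\cdot\hat\nabla\hat U.
\end{align*}

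The key step is then to identify $[\hat\nabla\Phi]^{-1}(Z\circ\Phi)$. From the definition \eqref{eq:Z-defn}, the vector $Z\circ\Phi$ equals $(1-\rx\kappa)T + \rx\Ftau\FB - \ry\Ftau\FN$ evaluated at $\rz$; but comparing with the third column of $[\hat\nabla\Phi]$ in \eqref{eq:Phi-derivative}, this is \emph{exactly} $\d_{\hat z}\Phi$, i.e.\ the third column of the Jacobian. Therefore $[\hat\nabla\Phi]^{-1}(Z\circ\Phi) = e_{\hat z}$, and consequently
\begin{align*}
  (Z\cdot\nabla U)\circ\Phi = e_{\hat z}\cdot\hat\nabla\hat U = \d_{\hat z}\hat U.
\end{align*}
Since $\Phi$ is onto $\ominf$, the identity $Z\cdot\nabla U \equiv 0$ on $\ominf$ holds if and only if $\d_{\hat z}\hat U \equiv 0$ on $\cyl$, which is precisely the statement that $\hat U$ is independent of $\rz$, i.e.\ condition \eqref{eq:ansatz'}. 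This closes the equivalence in both directions.

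There is really no serious obstacle here; the content is entirely in the observation that $Z$ was \emph{defined} in \eqref{eq:Z-defn} as the pushforward under $\Phi$ of the coordinate vector field $\d_{\hat z}$, so the proposition is a reformulation of the chain rule once that is recognized. The only point needing a word of care is the comparison of $Z\circ\Phi$ with the third column of \eqref{eq:Phi-derivative}: one must check that the $T,\FN,\FB$ appearing in \eqref{eq:Z-defn} are evaluated at the $\rz$-slot after composing with $\Phi$, which follows because $\Phi$ sends $(\rx,\ry,\rz)$ to a point on $D(\rz)$ and $T,\FN,\FB$ in \eqref{eq:Z-defn} are already written as functions of $\rz$ composed with $\Phi^{-1}$. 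I would also remark in passing that this is why $Z$ restricts to $T$ on the centerline ($\rx=\ry=0$ gives $Z\circ\Phi = T$), matching the sentence following \eqref{eq:Z-defn}.
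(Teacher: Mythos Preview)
Your proof is correct and follows essentially the same approach as the paper's: both hinge on recognizing that $Z\circ\Phi$ coincides with the third column of $[\hat\nabla\Phi]$ from \eqref{eq:Phi-derivative}, so that the chain rule identifies $(Z\cdot\nabla U)\circ\Phi$ with $\partial_{\hat z}\hat U$. The only cosmetic difference is direction---the paper starts from $e_{\hat z}\cdot\hat\nabla\hat U$ and pushes forward, whereas you start from $Z\cdot\nabla U$ and pull back---but the content is identical.
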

\begin{proof}
  Condition~\eqref{eq:ansatz'} is equivalent to
  $e_{\hat z} \cdot \hat\nabla\hat U =0$, which, by~\eqref{eq:13}, is
  the same as
  \begin{align*}
    (e_{\hat z} \cdot \hat\nabla\hat U) \circ \Phi^{-1} 
    & =
      e_{\hat z} \cdot  \big( [\nabla \Phi]^t \nabla U\big) 
      =
      \big([\nabla \Phi] e_{\hat z} \big) \cdot   \nabla U. 
  \end{align*}
  Since $[\nabla \Phi] e_{\hat z} = Z$ by
  \eqref{eq:Phi-derivative}, the result follows.
\end{proof}

\subsection{Two-dimensional model}

Let $\hD = \hat\Gamma(0)$. It represents the (uniform) cross section
of the straightened configuration $\cyl$. Under the above assumptions,
we are able to reduce the 3D eigenproblem to a 2D eigenproblem on $\hD$. Define 
\begin{subequations}
  \label{eq:r-and-A}
  \begin{equation}
    \label{eq:r-A-defn}
  r =
  \frac{1}{J^2} 
  \begin{bmatrix}
    \phantom{-}\tau \ry \\
    -\tau \rx
  \end{bmatrix},
  \qquad
  \cl A =
  \frac{1}{J^2} 
  \begin{bmatrix}
    J^2 + \tau^2 \ry^2 & - \tau^2 \rx\ry
    \\
    -\tau^2 \rx\ry & J^2 + \tau^2 \rx^2 
  \end{bmatrix},    
\end{equation}
with respect to the $e_{\rx}, e_{\ry}$ basis on $\hD$.
The matrix $\cl A$ can be alternately expressed
as
\begin{equation}
\label{eq:A-alt}
\cl A = I_{2 \times 2 } + J^2 r r^t,
\end{equation}
\end{subequations}
where $I_{2 \times 2}$ is the $2 \times 2$ identity matrix.
Note that for
a function $\hat U$ like in~\eqref{eq:ansatz'}, the matrix $\cl A$ can
be naturally applied to the gradient
$\hat \nabla \hat U = e_{\rx} \d_{\rx} \hat U + e_{\ry} \d_{\ry} \hat U$ since
it has no $\rz$ component. We do not distinguish between the 2D and 3D
vectors when the extra component vanishes. The 2D eigenproblem is to find a $\hat U$
satisfying~\eqref{eq:ansatz'} together with an eigenvalue $\beta$ that 
solves
\begin{equation}
  \label{eq:2D-eigenproblem}
  \begin{aligned}
    \hat\div ( J \cl A \hat \nabla \hat U )
    + \ii \beta \hat \div(J r \hat U) +  \ii \beta J r\cdot \hat \nabla \hat U
    + J k^2 \hat n^2 \hat U
    & = \beta^2 J^{-1} \hat U,
    && \text{ in } \hD,
    \\
    \hat U
    & = 0,
    && \text{ on } \d\hD,
  \end{aligned}
\end{equation}
where the divergence and the gradient are with respect to just the two
variables $\rx$ and $\ry$.  Its weak form~\eqref{eq:12} is displayed
in the next result using $H_0^1(\hD)$, the Sobolev subspace of
$L^2(\hD)$-functions on the 2D domain $\hD$ whose derivatives are in
$L^2(\hD)$ and whose trace vanishes on the boundary $\d\hD$. Any
$\hat U$ in $H_0^1(\hD)$, being a function of $(\rx, \ry)$ in $\hD$,
is obviously independent of $\rz$ and satisfies~\eqref{eq:ansatz'}.

\begin{theorem}[2D eigenproblem for helical transverse modes]
  \label{thm:2D-reduction}
  Suppose~\eqref{eq:n-assumption} holds. Then any $\hat U$ in
  $H_0^1(\hD)$ and a real number $\beta$ satisfies
  \begin{equation}
    \label{eq:12}
    \begin{aligned}
      (J \cl A \hat \nabla\hat U, \hat \nabla \hat V)_{\hD}
      -(J k^2 \hat n^2 \hat U, \hat V)_{\hD}
      & + \beta \,\ii \Big[ (J \hat U, r \cdot \hat \nabla\hat V)_{\hD}
        -  (J r \cdot \hat \nabla \hat U, \hat V)_{\hD}\Big]
      \\
      & + \beta^2 (J^{-1} \hat U, \hat V)_{\hD}      = 0 
    \end{aligned}
  \end{equation}
  for all $\hat V \in H_0^1(\hD)$ 
  if and only if $\beta$ and $U = \hat U \circ \Phi^{-1}$
  solves~\eqref{eq:7} on the helical waveguide $\ominf$.
\end{theorem}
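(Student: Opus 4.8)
The plan is to deduce the theorem from Proposition~\ref{prop:straight-cyl-ewp}, which already equates~\eqref{eq:7} on $\ominf$ with the $3$D eigenproblem~\eqref{eq:11} on the straightened cylinder $\cylinf$ under the correspondence $\hat U = U\circ\Phi$. A function $\hat U\in H^1_0(\hD)$, viewed as a function on $\cylinf$, is independent of $\rz$, belongs to $H^1_{0,\mathrm{loc}}(\cylinf)$, and satisfies the ansatz~\eqref{eq:ansatz'}; hence it suffices to prove that, for such a $\rz$-independent $\hat U$ and a real $\beta$, identity~\eqref{eq:11} holds for all $\hat V\in\TF(\cylinf)$ if and only if~\eqref{eq:12} holds for all $\hat V\in H^1_0(\hD)$. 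The first step is to record the block structure of $\CG^{-1}$. Writing $\CG$ from Lemma~\ref{lem:C-J} with upper-left $2\times 2$ block $I_{2\times 2}$, off-diagonal block $-J^2 r$, and lower-right entry $J^2+\tau^2(\rx^2+\ry^2)$, the Schur complement of the $I_{2\times 2}$ block is $J^2$; multiplying $\CG$ by the matrix below, and using~\eqref{eq:A-alt} to recognize the entries, one checks in one line that
\[
  \CG^{-1}=\begin{bmatrix}\cl A & r\\ r^t & J^{-2}\end{bmatrix},
\]
with $\cl A$ and $r$ the $2$D objects of~\eqref{eq:r-A-defn}. In particular $d = \CG^{-1}e_{\rz}$ from Lemma~\ref{lem:C-J} has transverse part $r$ and longitudinal part $J^{-2}$.

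Next I would substitute a $\rz$-independent $\hat U$ into the $3$D forms $\hat a,\hat b,\hat c$. Then $\hat\nabla\hat U$ has no $\rz$-component, so the block form gives $\CG^{-1}\hat\nabla\hat U=(\cl A\hat\nabla\hat U,\;r\cdot\hat\nabla\hat U)$ and $d\cdot\hat\nabla\hat U=r\cdot\hat\nabla\hat U$, whereas the test function still contributes $d\cdot\hat\nabla\hat V=r\cdot\hat\nabla\hat V+J^{-2}\d_{\rz}\hat V$. Inserting these, and using~\eqref{eq:n-assumption} so that $\hat n,J,\cl A,r$ are all $\rz$-independent, each of $\hat a(\hat U,\hat V),\hat b(\hat U,\hat V),\hat c(\hat U,\hat V)$ splits into the integral over $\cylinf$ of the integrand that appears in the $2$D form~\eqref{eq:12}, plus a remainder whose integrand has the shape $(\rz\text{-independent})\cdot\d_{\rz}\overline{\hat V}$: precisely, the term $J\,(r\cdot\hat\nabla\hat U)\,\d_{\rz}\overline{\hat V}$ from $\hat a$, and the term $\beta\ii\,J^{-1}\hat U\,\d_{\rz}\overline{\hat V}$ arising from the $J^{-2}\d_{\rz}\hat V$ piece of $\hat b$.

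The crucial observation is that every such remainder vanishes: for $\hat V\in\TF(\cylinf)$ the slice $\hat V(\rx,\ry,\cdot)$ is compactly supported in $\R$, so $\int_{\R}\d_{\rz}\overline{\hat V}\,d\rz=0$ for each $(\rx,\ry)$, and Fubini together with $\rz$-independence of the coefficient kills the term. Hence, letting $F(\rz)$ denote the left-hand side of~\eqref{eq:12} evaluated at the test function $\hat V(\cdot,\cdot,\rz)\in H^1_0(\hD)$, Fubini yields the clean identity $\hat a(\hat U,\hat V)+\beta\,\hat b(\hat U,\hat V)+\beta^2\,\hat c(\hat U,\hat V)=\int_{\R}F(\rz)\,d\rz$. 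If~\eqref{eq:12} holds for all test functions in $H^1_0(\hD)$, then $F\equiv0$ and the right-hand side vanishes, so~\eqref{eq:11} holds. Conversely, if~\eqref{eq:11} holds, I would test it with separable functions $\hat V(\rx,\ry,\rz)=\phi(\rx,\ry)\chi(\rz)$, $\phi\in C^\infty_c(\hD)$, $\chi\in C^\infty_c(\R)$ with $\int_{\R}\overline{\chi}\neq0$; since then $F(\rz)=\overline{\chi(\rz)}\,G(\phi)$, where $G(\phi)$ is the left side of~\eqref{eq:12} evaluated at $\phi$, this forces $G(\phi)=0$ for every $\phi\in C^\infty_c(\hD)$, and continuity of the $2$D sesquilinear form on $H^1_0(\hD)$ (from boundedness of $J,J^{-1},\cl A,r$ on $\hD$, which uses $J>0$) together with density of $C^\infty_c(\hD)$ in $H^1_0(\hD)$ extends it to all $\hat V\in H^1_0(\hD)$. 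Composing this equivalence with Proposition~\ref{prop:straight-cyl-ewp} completes the proof.

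I expect the only genuine difficulty to be bookkeeping in the second step: one must carry the $\rz$-components of $\CG^{-1}\hat\nabla\hat U$ and of $d\cdot\hat\nabla\hat V$ through $\hat a$ and $\hat b$ precisely enough to confirm that it is exactly $r$ (respectively $J^{-1}\hat U$) that multiplies $\d_{\rz}\overline{\hat V}$, and to keep the complex conjugations aligned so that $F(\rz)$ is \emph{literally} the left side of~\eqref{eq:12} and not its conjugate. Once the identification of $\CG^{-1}$ above is in hand this is routine linear algebra; the structurally essential input is the vanishing of the $\d_{\rz}$-terms, where the compact support of $\TF(\cylinf)$ does all the work, and which is, in effect, why the cross-sectional rotation built into the helical-mode ansatz is exactly what lets the dimension reduction go through.
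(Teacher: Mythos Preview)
Your proof is correct and follows essentially the same strategy as the paper's: reduce to Proposition~\ref{prop:straight-cyl-ewp}, exploit the $\rz$-independence of $\hat U$ and of the coefficients $J,\CG^{-1},\hat n$, and use compact support of the 3D test functions to kill the longitudinal contributions, then finish by density of $C^\infty_c(\hD)$ in $H^1_0(\hD)$.

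The only organizational difference is in how the $\rz$-variable is eliminated. The paper collapses the 3D test function $W\in\TF(\cylinf)$ to a 2D one by setting $\hat V=\int_\R W\,d\rz$; since then $\d_{\rz}\hat V=0$ as well, the last row and column of $\CG^{-1}$ are never seen and no remainder terms appear at all. You instead keep the 3D test function, compute the full block form of $\CG^{-1}$ via the Schur complement, track the two $\d_{\rz}\overline{\hat V}$ remainders explicitly, and kill them with $\int_\R\d_{\rz}\overline{\hat V}\,d\rz=0$. These are dual uses of the same compact-support fact; the paper's packaging is slightly slicker (no remainders to track), while your explicit block inversion makes the identification of $\cl A$ and $r$ with pieces of $\CG^{-1}$ more transparent than the paper's one-line assertion. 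Your separable-test-function argument for the converse direction is exactly what underlies the paper's claim that every $\hat V\in\cl D(\hD)$ arises as $\int_\R W\,d\rz$.
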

\begin{proof}
  By Proposition~\ref{prop:straight-cyl-ewp},
  $U = \hat U \circ \Phi^{-1}$ solves \eqref{eq:7} on $\ominf$ if and
  only if $\hat U$ solves \eqref{eq:11} on~$\cylinf$. We proceed to
  simplify the terms of~\eqref{eq:11} when~\eqref{eq:n-assumption}
  and~\eqref{eq:ansatz'} hold. For any $W $ in
  $\TF(\cylinf)$,   
  \begin{align*}
    \hat a(\hat U, W)
    & = (J \CG^{-1} \hat \nabla \hat U,  \hat \nabla {W})_{\cylinf} 
      -(J k^2\hat n^2 \hat U, W)_{\cylinf}
    \\
    & =
      \int_{-\infty}^{\infty}\int_{\hD}
      \left( J \CG^{-1} \hat \nabla \hat U
      \cdot  \hat \nabla \overline{W} 
      - J k^2 \hat n^2 \hat U \overline W \right) \; d\rx \, d\ry\,d\rz
    \\
    & = \int_{\hD}
      \left(
      J\CG^{-1} \hat \nabla \hat U  \cdot
      \left(\int_{-\infty}^{\infty} \hat \nabla \overline{W}\,d \rz\right)
      - J k^2 \hat n^2 \hat U
      \left(\int_{-\infty}^{\infty}  \overline{W}\,
      d \rz \right) \right)\, d\rx \, d\ry
    \\
    & = \int_{\hD}
    \left(
    J\CG^{-1} \hat \nabla \hat U  \cdot \hat \nabla \overline{\hat V }
      - J k^2 \hat n^2 \hat U \overline{\hat V }
      \right)\, d\rx \, d\ry,
  \end{align*}
  where we have used in the penultimate step, the fact that $J$ and
  $\CG^{-1}$ are independent of $\rz$ (as is clear from the
  expressions in Lemma~\ref{lem:C-J}, since the torsion and curvature
  of a helix are constant) as well as the $\rz$-independence of
  $\hat U$ and $\hat n$ given by 
  \eqref{eq:n-assumption}--\eqref{eq:ansatz'}, and have put
  \begin{equation}
    \label{eq:hatV-W}
    \hat V = \int_{-\infty}^{\infty} W\,d\rz
  \end{equation}
  in the last step.  Obviously, $\hat V$ is smooth, has no
  $\rz$-dependence, and is in $\cl D(\hD)$. Moreover, since $\hat W$
  has compact support, by the fundamental theorem of calculus,
  $\d_{\rz} \hat V = \int_{-\infty}^{\infty}\d_{\rz}\hat{W}\;
  d\rz=0$. Also noting that $\hat\nabla \hat U$ has no
  $e_{\rz}$~component, we conclude that while computing the value of
  the integrand term
  $\CG^{-1} \hat \nabla \hat U \cdot \hat \nabla \overline{\hat V}$,
  the last row and column of $\CG^{-1}$ are multiplied by zero and are
  not seen. So we may replace $\CG^{-1}$ with its $2 \times 2$
  submatrix obtained by removing the last row and column of
  $\CG^{-1}$, which is precisely the matrix $\cl A$. Thus we have
  proven that the first term of~\eqref{eq:11} simplifies to
  \[
    \hat a(\hat U, W) = (J \cl A \hat \nabla\hat U, \hat \nabla \hat V)_{\hD}
      -(J k^2 \hat n^2 \hat U, \hat V)_{\hD}.
  \]
  Analogously, since $r$ is the vector obtained from $d$ by removing
  its last component, we prove that
  \begin{align*}
    \hat b(\hat U, W)
    & = \ii (J \hat U, r \cdot \hat \nabla\hat V)_{\hD}
      -  \ii(J r \cdot \hat \nabla \hat U, \hat V)_{\hD},
    \\
    \hat c (\hat U, W)
    & =  (J^{-1} \hat U, \hat V)_{\hD}.
  \end{align*}  

  Thus, given a $\rz$-independent $\hat U$, if it satisfies
  \eqref{eq:11} for all $W \in \cl D (\cyl)$, then~\eqref{eq:12} holds
  for all $\hat V$ of the form~\eqref{eq:hatV-W} and hence for all
  $\hat V \in \cl D(\hD)$. Since $\cl D(\hD)$ is dense in
  $H_0^1(\hD)$, one implication is proved. For the converse, if
  $\hat U $ in $H_0^1(\hD)$ solves~\eqref{eq:12}, then retracing the
  above argument for $\hat U(\rx,\ry)$ extended to $\cyl$ constantly
  in $\rz$, we find that~\eqref{eq:11} must hold for all 3D test
  functions $W \in \cl D (\cyl)$. Hence, the result is proved.
\end{proof}

\subsection{The toroidal limiting case}
\label{sec:torus}

In the limiting case when the angle $\alpha$ goes to zero, the helical
waveguide becomes a ring, or a torus.  Then~\eqref{eq:sin-cos-alpha}
shows that $\bhel \to 0$ and both $\ahel$ and $l$ approach the same
value. (In this limit, as is clear from
Figure~\ref{fig:helix_geometry'}, the helical centerline becomes a
circle, so the waveguide becomes a torus.) In this limit, $\tau \to 0$
and $\kappa \to 1/\ahel$, as seen from~\eqref{eq:torsion_frenet_helix}. Hence, by
\eqref{eq:r-and-A},  we have 
\begin{align}
  \label{eq:J-limit-torus}
  \lim\limits_{\alpha\to 0} J= 1-\frac{\rx}{\ahel},\qquad
  \lim\limits_{\alpha\to 0}r=
  \begin{bmatrix}
    0 \\ 0 
  \end{bmatrix},
  \qquad
  \lim\limits_{\alpha\to 0}\cl A
  = I_{2 \times 2}.
\end{align}
Then the 2D {quadratic} eigenproblem~\eqref{eq:12} reduces to the
following {\em linear eigenproblem}
\begin{align}
  (J  \hat \nabla\hat U, \hat \nabla \hat V)_{\hD}
  -(J k^2 \hat n^2 \hat U, \hat V)_{\hD}
  =-  \beta^2 (J^{-1} \hat U, \hat V)_{\hD}     
  \label{eq:torus_lin_evp}
\end{align}
for the eigenvalue $-\beta^2$.
Since there is no torsion in this torus limit case, there is no need for a
cross-sectional rotation when considering 
mode transversality. Indeed, our general 
transversality condition $Z \cdot \nabla U = 0$  in
\eqref{eq:ansatz-transverse}, in the absence of torsion, simply
reduces to the standard condition
\begin{equation}
\label{eq:torsion-free-tranversality}
  T \cdot \nabla U = 0.
\end{equation}
We caution however that~\eqref{eq:torsion-free-tranversality} may not be appropriate
for general helical waveguides.

Prior works on bent optical fiber waveguides---see~\cite{ScherCole07}
and its antecedents~\cite{Marcu82, HeiblHarri75}---have considered
this toroidal case, where the fiber is deformed 
so that its inlet and outlet are fused to form a torus. For ``slow
bends,'' they proposed to account for the geometry of bending solely
by replacing $n$ with a modified effective index. In our notation
(taking into account the direction of our $N$), their modified
effective index equals the original index $n$ scaled by a factor of
$( 1 - \rx / \ahel)$, which matches the above-found limiting value of
$J$ from our analysis. Nonetheless, as seen
from~\eqref{eq:torus_lin_evp}, even for this simpler toroidal bending
model, to be fully accurate, one must modify not only the term
with~$n$, but also the remaining two terms in the equation for modes.

\subsection{The upright limit}
\label{sec:upright-limit}

The limiting case as $\alpha$ approaches $\pi/2$ is what we refer to
as the upright limit.  From~\eqref{eq:sin-cos-alpha}, we see that this
limit can be achieved either by fixing $\ahel$ to some positive number
and letting $\bhel$ grow, 
\begin{equation}
\label{eq:upright-1}
\ahel>0, \qquad \bhel \to \infty, \qquad \alpha \to \frac{\pi}{2},
\end{equation}
or, by fixing $\bhel$ to some positive number and making 
$\ahel$ small, 
\begin{equation}
  \label{eq:upright-2}
  \bhel > 0, \qquad \ahel \to 0, \qquad \alpha \to \frac{\pi}{2}.
\end{equation}
In both cases, the helical waveguide approaches the shape of a straight
upright cylinder centered around the $z$-axis. However, the two cases
differ in their {\em limiting torsion:} in the case of~\eqref{eq:upright-1},
since $l \to \infty$, we see from~\eqref{eq:torsion_frenet_helix}
that the torsion $\tau$ approaches zero. In the case
of~\eqref{eq:upright-2}, the torsion approaches $1/ \bhel.$ In both
cases, the curvature $\kappa$ approaches zero, and \eqref{eq:r-and-A} gives 
\begin{align*}
  \lim\limits_{\alpha\to \pi/2} J=1,
  \qquad
  \lim\limits_{\alpha\to \pi/2} r =
  \begin{bmatrix}
    \phantom{-}\tau y \\ -\tau x
  \end{bmatrix},\qquad
  \lim\limits_{\alpha\to \pi/2}
  \cl A = I_{2\times 2 } + r r^t.
\end{align*}
where $\tau $ is understood to be the limiting torsion. 
The 2D {quadratic} eigenproblem~\eqref{eq:12}, in this upright
limiting case, reduces to
\begin{equation}
    \label{eq:upright_evp}
\begin{aligned}
  (  \hat \nabla\hat U, \hat \nabla \hat V)_{\hD}
  & +
  ( r \cdot  \hat \nabla\hat U, r \cdot \hat \nabla \hat V)_{\hD}
  -(k^2 \hat n^2 \hat U, \hat V)_{\hD}
  \\
  & + \beta \,\ii \Big[ (\hat U, r \cdot \hat \nabla\hat V)_{\hD}
  -  (r \cdot \hat \nabla \hat U, \hat V)_{\hD}\Big]
    +  \beta^2 (\hat U, \hat V)_{\hD} = 0.     
\end{aligned}  
\end{equation}
In the case the upright limit is approached as
in~\eqref{eq:upright-1}, since the limiting torsion $\tau$ vanishes,
$r$ becomes the zero vector, and~\eqref{eq:upright_evp} becomes a
linear eigenproblem with $-\beta^2$ as the eigenvalue,
\[  (  \hat \nabla\hat U, \hat \nabla \hat V)_{\hD}  
  - (k^2\hat n^2 \hat U, \hat V)_{\hD}
  =- \beta^2 (\hat U, \hat V)_{\hD}.
\]
This is the familiar {\em linear eigenproblem for modes of an unbent
  straight waveguide}~\cite{Reide16} with longitudinally constant
material properties; and moreover, vanishing torsion also implies that
the transversality condition reduces to the standard
one~\eqref{eq:torsion-free-tranversality}.

However, if the upright limit is approached as in the other
case~\eqref{eq:upright-2}, we must solve the quadratic
eigenproblem. Note that we have {\em not} assumed that the material
coefficient $n(x, y, z)$ is independent of $z$ in the physical
configuration $\om$ (even when $\om$ is a straight cylinder in the
upright limit case). We have only assumed that~\eqref{eq:n-assumption}
holds in the straightened configuration~$\cyl$. The map $\Phi$,
just (un)twists the straight cylinder $\om$ onto itself  $\cyl=\om$
in the situation of~\eqref{eq:upright-2}.
Our setting allows inclusion of twisted
material configurations contained within a straight cylinder, as in
Figure~\ref{fig:straight_twisted}, in this upright limit.
\begin{figure}[htb]
  \centering
  \includegraphics[width=0.3\textwidth]{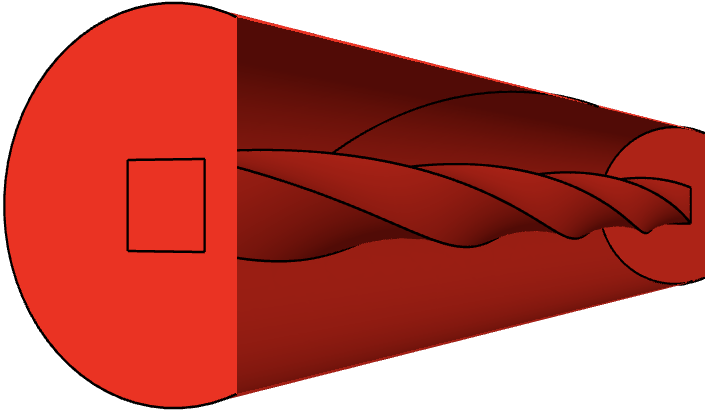}  
  \qquad \qquad \qquad 
  \includegraphics[width=0.3\textwidth]{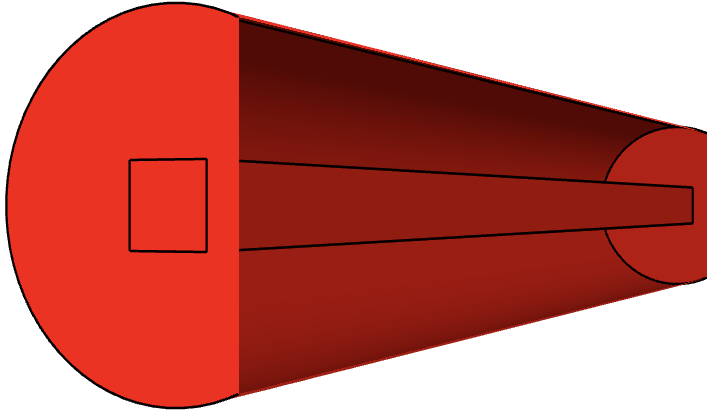}
  \caption{Assumption \eqref{eq:n-assumption} allows for $z$-varying
    material configurations like the twisted rectangular core in the
    left figure ($\om$) visible after clipping. The $z$-axis is into
    the page. The map $\Phi$ untwists it to the straightened
    configuration $\cyl$ on the right. }
\label{fig:straight_twisted}  
\end{figure}

\section{Numerical techniques and results}
\label{sec:computing_3D}

In this section, we use established numerical methods combined with a
few new tricks to solve the previously formulated eigenproblems, and
discuss the computational results, relating some to prior findings in
the optics literature.
In prior sections, we have introduced three eigenproblems for modes of
a helical waveguide: {\em (i)}~Proposition~\ref{prop:weak-form} gave a
3D eigenproblem on the physical helical waveguide, {\em
  (ii)}~Proposition~\ref{prop:straight-cyl-ewp} gave a 3D eigenproblem
on straightened configuration, and {\em
  (iii)}~Theorem~\ref{thm:2D-reduction} gave a 2D eigenproblem on a
waveguide cross section after limiting the longitudinal variation of
material coefficient by~\eqref{eq:n-assumption}.  In this section, we
briefly describe how each of these eigenproblems can be numerical
solved.

Of course, the 2D quadratic eigenvalue formulation~\eqref{eq:12} is
the most attractive in practice since it is the least expensive and
can be numerically solved with minimal tools. But we also compute with
the 3D quadratic eigenvalue formulations in~\eqref{eq:11}
and~\eqref{eq:7} for illustration and cross verification. The
additional tools needed for the 3D eigenproblems are described in
\S\ref{sec:helically_waveguide_qevp_comp} (and the reader only
interested in the 2D results may skip
\S\ref{sec:helically_waveguide_qevp_comp}).

\subsection{Common tools for all three eigenproblems}
\label{sec:common}

For all three eigenproblems, we discretize the underlying 2D or 3D
spatial domain using a simplicial conforming finite element mesh
$\cl T$, using curved elements as needed to fit the geometry
accurately. Let $\cl P_\ell(\cl T)$ denote the space of functions that
are polynomials of degree at most $\ell$ in each mesh element.  Each
of the eigenproblems~\eqref{eq:7}, \eqref{eq:11}, and \eqref{eq:12},
are discretized by replacing the infinite-dimensional Sobolev space
there by its intersection with $\cl P_\ell(\cl T)$, i.e., the Lagrange
finite element subspace of order $\ell$ on the mesh. The default
setting in our computations is $\ell=4$. Using the Lagrange basis, one
obtains in a standard way~\cite{Ihlen98}, a numerically solvable
matrix version of the quadratic eigenproblem
\begin{equation}
  \label{eq:matrix-quad-ewp}
  (A_0 + \beta A_1 + \beta^2 A_2) x =0,
\end{equation}
where $x$ is the vector of coefficients in the basis expansion of the
wanted eigenfunction, and $A_i$ are the finite element stiffness
matrices of the three sesquilinear forms in the prior eigenproblem
formulations. Since this process is completely standard, we omit
further details.

To solve the sparse quadratic eigenproblem \eqref{eq:matrix-quad-ewp},
we use the ``FEAST algorithm''~\cite{GopalGrubiOvall20a, Poliz09},
specifically its adaptation for polynomial eigenproblems
in~\cite{GopalParkeVande22}.  Finite element basis function
calculations, matrix assembly, and visualization are performed using
the open-source finite element library NGSolve~\cite{ngsolve,
  Sch97}. For reproducibility, the code for the benchmarks and
computational results are publicly available in~\cite{GN2025}.

\subsection{Further computational techniques for the 3D formulations}
\label{sec:helically_waveguide_qevp_comp}

The 3D eigenvalue formulation in~\eqref{eq:7} was posed on an
unbounded domain~$\om$. We need to rewrite it on a bounded domain
before discretizing with finite elements. This can be done by
exploiting the periodicity of the helix.  A single period cell of the
helical waveguide is obtained by starting from a cross section, say
$D(0)$, and performing one complete round of the helix to arrive at
$D(2\pi l)$. Note that the planes containing $D(0)$ and $D(2\pi l)$
are parallel (since $T(0) = T(2\pi l)$). The region of the waveguide
$\om$ in between these parallel planes is the single period cell on
which we compute, with periodic identification of corresponding points
in $D(0) $ and $D(2 \pi l)$.

To compute with~\eqref{eq:7}, we also need the phase function $S$ defined in
\eqref{eq:S-defn} and its gradient, or at least approximations of
them. Since $S$ is not available analytically, this poses a
problem. Our strategy to overcome it is to construct a finite element approximation
to $S$ using its values at some points.  Since $\gamma \circ S$ equals
the (orthogonal) closest point projection~$P$,
the value of $S(p)$ at any point  $p \in \om$ must satisfy
\begin{align}
  \label{eq:closest_point_projection}
  (p-\gamma(S(p)))\cdot T(S(p))=0,\qquad p\in \om,
\end{align}
where $\gamma$ and $T$ are
as in~\eqref{eq:helix'} and~\eqref{eq:helix_frenet}. On the tetrahedral mesh  $\mathcal{T}$, we solve \eqref{eq:closest_point_projection} on a collection of points $\{p_i\}$ with Newton's method. The points are selected to  be the  Gauss points used for numerical integration on the tetrahedra. To provide a good initial guess for the Newton iteration, we observe that in the simple case of a straight waveguide,  $S(p)=z$,
while in the other limit case of a torus, we have
$S(p)=\ahel\,\theta(p)$, where $\theta(p)$ is the cylindrical angle of
the point $p$ measure from the $x$-axis in the $xy$-plane.  Together
they lead us to choose as initial guess
$S_0(p) = z\,\sin\alpha+\ahel\,\theta(p)\, \cos\alpha$.
Our numerical experience shows  this  to be a 
solid choice as long as $\ahel>R$, i.e., a ``hole'' is present when
looking at the waveguide from above; otherwise, more sophisticated
guesses or post-processing schemes might be needed.

Once $S_i:=S(p_i)$ at the Gauss points have been (approximately) found
by the Newton iterations, we construct the $L^2$ best approximation
$S_h\in \mathcal{P}_\ell(\mathcal{T})$, with an $\ell$ that is
consistent with the order of the selected Gauss points $\{p_i\}$, by
\begin{align}
  S_h=\argmin_{\substack{R_h\in \mathcal{P}_\ell(\mathcal{T})\\ R_h(x_i)=S_i}}\frac{1}{2}\|R_h\|_{L^2(\om)}^2. 
\end{align}
Having the approximation $S_h$ at hand, we can compute its element-wise gradient $\nabla S_h$ enabling us to solve the quadratic eigenproblem \eqref{eq:BVP-cell} approximately.

The next difficulty specific to the 3D case is that there are
irrelevant (standing wave and non-propagating type) solutions
of~\eqref{eq:7}. We need a filtering mechanism to isolate the modes of
interest from all computed modes. To isolate the propagating helical
transverse modes, we use the
definition~\eqref{eq:ansatz-transverse} which uses
the vector field $Z$ in~\eqref{eq:Z-defn}.
However, computation of~$Z$ requires the inverse of $\Phi$. It is not available
analytically, but can be expressed using $S$. For any point $p$ in
$\om$, $\hat p = \Phi^{-1}(p)$ in $\cyl$ is given by 
\begin{align*}
  \Phi^{-1}(p)
  =\left((p-\gamma(S(p)))\cdot \FN(S(p)),(p-\gamma(S(p)))\cdot \FB(S(p)),S(p) \right).
\end{align*}
Using the approximation $S_h$ in place of $S$ in this expression,
we compute an approximation $Z_h$ of $Z$ and
use  $Z_h \cdot \nabla U =0$ to isolate  the relevant modes~$U$.

We now turn to the 3D eigenproblem~\eqref{eq:11} on the straightened
configuration.  Here again, to get a computable version, we may
consider one period cell, which in this case would be the bounded
subset $ \cyl_1 = \hat \Gamma(0) \times [0, 2\pi l]$ of the unbounded
domain~$\cyl$. However, now the periodicity is no longer relevant to
the straightened geometry nor seen in the coefficients
of~\eqref{eq:11}. Hence, for efficiency, we compute on a smaller
rescaled domain in $\rz$ direction,
$\hat \om_\sigma:=\hat \Gamma(0) \times [0, 2\pi\,l/\sigma]$ with $\sigma>1$.  When
\emph{reducing} the height by a factor $\sigma$, the eigenvalues $\beta$ of
interest \emph{scale} to $\sigma\beta$, since $\hat c(\hat U, \hat V)$
scales by $\sigma^{-2}$, $\hat b (\hat U, \hat V)$ scales by $\sigma^{-1}$, and
$\hat a(\hat U, \hat V)$ does not change for $\hat U$ and $\hat V$
that are constant in $\rz$, a property that modes of interest possess.
The identification of relevant modes is easy in this case: we only
need to check that assumption~\eqref{eq:ansatz'} holds true, i.e., for
a sequence of refined meshes we isolate modes $\hat U$ that satisfy
$ e_{\rz} \cdot \hat\nabla \hat U = 0.  $

\subsection{An optical fiber example}
\label{ssec:optical-fiber}

Some of our numerical results are with parameters from a fiber optic
waveguide. To facilitate description of such parameters, it is useful
to note that a ``step-index optical fiber'' consists of a cylindrical
core $D_0$ with cross section radius $r_0$ and a surrounding
cladding $D_1=D\setminus D_0$ of radius $r_1>r_0$.  Its
refractive index $n$ is modeled as a piece-wise constant function
\begin{align}
  \label{eq:ref-index-helix}
    n(x,y,z) = \begin{cases}
        n_0, & (x,y,z)\in D_0,\\
        n_1, & (x,y,z)\in D_1,
    \end{cases}
\end{align}
where $n_0$ and $n_1$ are positive constants.
Guided modes~\cite{Bures09} of a (straight, unbent) step-index fiber
can be analytically computed in terms of Bessel functions. They are
localized around the core and exponentially decay away from the core
within the cladding $D_1$. The cladding radius is usually an order
of magnitude higher than the core radius, which permits the use of zero
Dirichlet boundary conditions at the cladding's boundary.  The
operating signal wavelength determines the wavenumber~$k$. Together
these quantities define the nondimensional ``$V$-number'' (also called the ``normalized frequency'') \cite{Reide16} of the
fiber,
\begin{equation}
  \label{eq:V-number}
  V=r_0 k(n_0^2-n_1^2).
\end{equation}
The number of guided modes in the core
depends \cite[\S5.2.1]{Reide16} on how large $V$ is (and when $V$ is
smaller than the first root of the Bessel function $J_0$, there can
only be one guided mode). The guided mode with the largest propagation
constant $\beta$ is called the fundamental mode.

In some numerical simulations (see \S\ref{ssec:pitch}) we use
realistic values of $r_i$ and $n_i$, but we begin reports of
computational results in the next subsection by showing results from a
simulation with unit-sized parameters.  We shall use the above setting
to compare the results from the 3D helical waveguide
model~\eqref{eq:7}, the 3D straightened configuration
model~\eqref{eq:11}, and the 2D~model~\eqref{eq:12}. Results from the
2D model are presented first.

\subsection{Numerical results from the 2D eigenproblem} 
\label{ssec:2d-results}

We use the
setting from \S\ref{ssec:optical-fiber} but with (artificial) unit-sized
parameters that correspond to a tightly wound thick helically coiled
waveguide (see Figure~\ref{fig:cross_verification_2D_results}), namely
\begin{align}
  \label{eq:cross_verification_parameters}
  r_0=1,\quad r_1=2.2,\quad \ahel=3,\quad \bhel=5/(2\pi),
  \quad
  V = 15, 
  \quad
  n_1=1,
  \quad
  k = 1.
\end{align}
The value of $n_0$ is determined from~\eqref{eq:V-number}. Thus, $n$ is
fixed from~\eqref{eq:ref-index-helix}.

First, we report results obtained by solving the quadratic
eigenproblem of the 2D model~\eqref{eq:12} with polynomial order
$\ell=4$ and a sequence of meshes where a representative mesh-size $h$ is halved in
each step.  In the core, we used the mesh-size $h_c=h/2$.
Five eigenmodes localized near the core that were found
are displayed in Figure~\ref{fig:cross_verification_2D_results} and
their corresponding propagation constants $\beta_i$ are reported in
Table~\ref{tab:cross_verification_2D_results}. A clear
convergence is observed in the table as the mesh size $h$ is halved.
\begin{table}[ht!]
  \centering
  \begin{footnotesize}    
  \begin{tabular}{c|ccccccc}
    \toprule
    $h$  &  ne  &  ndof  &  $\beta_1^2$ &  $\beta_2^2$  &  $\beta_3^2$  &  $\beta_4^2$  &  $\beta_5^2$ \\
    \midrule
    1.6 & 72 & 605 & 13.735725166 & 7.554370381 & 6.381870597 & 1.213584120 & 0.599749013 \\
    0.8 & 91 & 763 & 13.735750272 & 7.554465932 & 6.381985947 & 1.213910004 & 0.599975761 \\
    0.4 & 349 & 2863 & 13.735759236 & 7.554485905 & 6.382002118 & 1.213986996 & 0.600030486 \\
    0.2 & 1509 & 12211 & 13.735759478 & 7.554486208 & 6.382002348 & 1.213987628 & 0.600030938 \\
    0.1 & 6108 & 49141 & 13.735759478 & 7.554486208 & 6.382002348 & 1.213987629 & 0.600030939 \\
    0.05 & 23842 & 191289 & 13.735759478 & 7.554486208 & 6.382002348 & 1.213987629 & 0.600030939 \\
    \bottomrule
  \end{tabular}
  \end{footnotesize}
  \caption{Eigenvalues $\beta^2$ of the 2D model for a sequence of meshes with mesh-size $h$, number of triangular elements ``ne'', and  number of degrees of freedom ``ndof''.}
  \label{tab:cross_verification_2D_results}
\end{table}

The corresponding eigenmodes are displayed (together with the 2D mesh)
in the bottom row of
Figure~\ref{fig:cross_verification_2D_results}. All mode plots show
the absolute value of the computed complex modes.  The shift in
localization from the core is clearly visible in these mode profiles.
We have also mapped the 2D mode profiles onto a 3D waveguide schematic
(top row of Figure~\ref{fig:cross_verification_2D_results}) so the
off-centered modal features can be placed correctly in the context of
the physical waveguide's bend. Note that the Frenet normal, being the
$\rx$-axis in the 2D model, points to the origin of the
helix. Therefore, modes which shift to the left in the 2D setting,
shift {\em outwards}, away from the coiling center,
when placed in the 3D waveguide.
\begin{figure}[ht!]
  \includegraphics[width=0.19\textwidth]{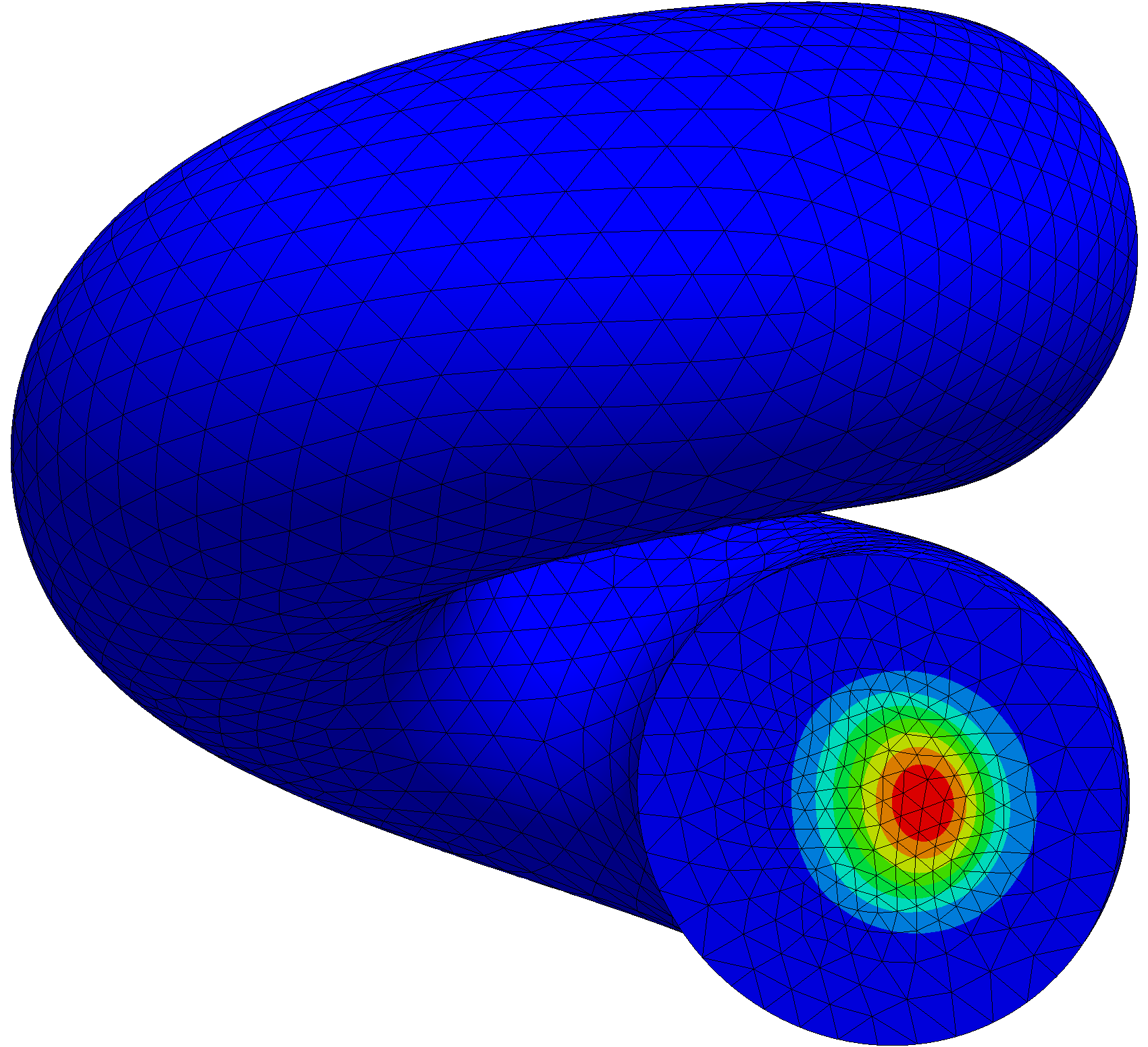}
  \includegraphics[width=0.19\textwidth]{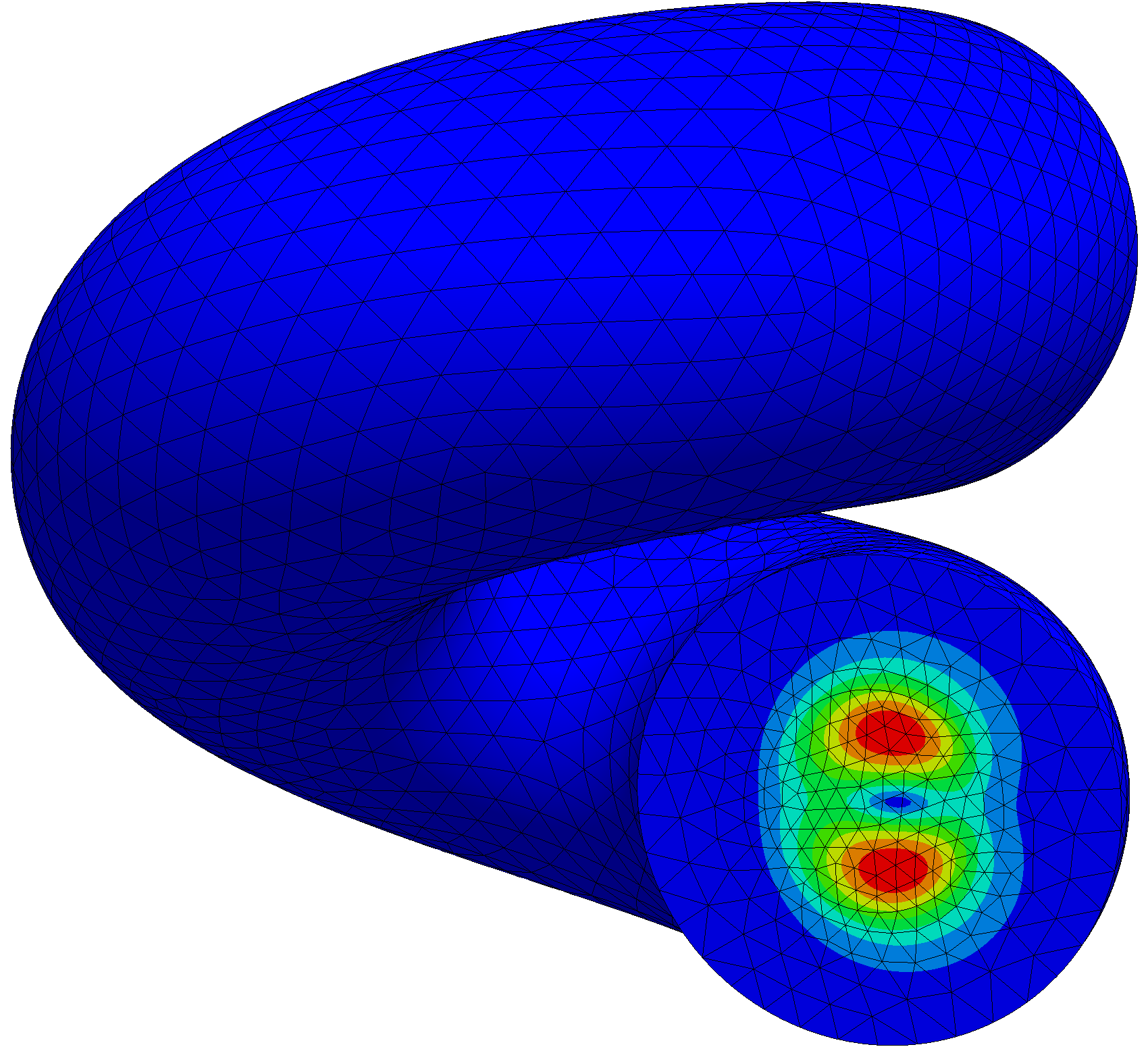}
  \includegraphics[width=0.19\textwidth]{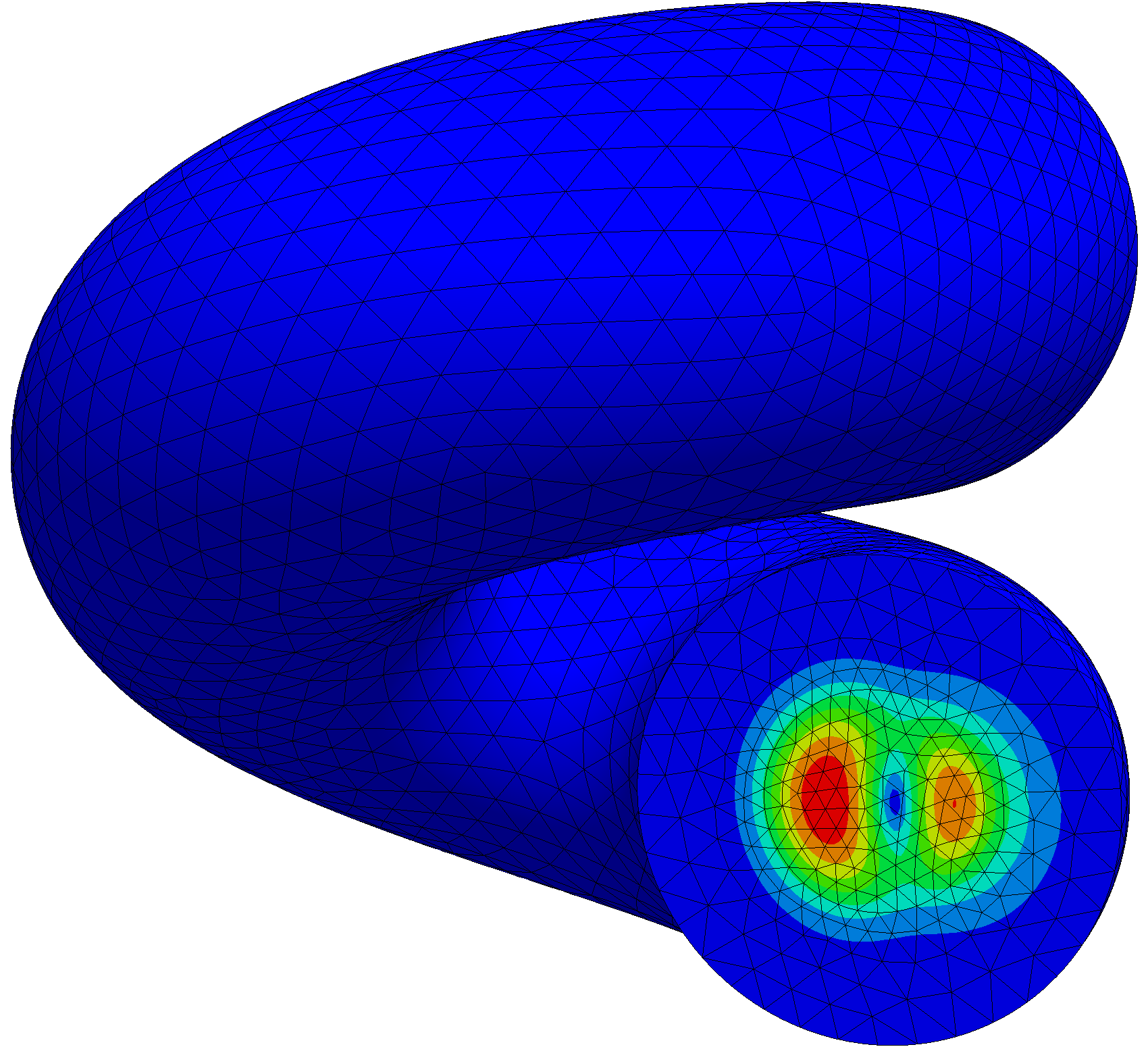}
  \includegraphics[width=0.19\textwidth]{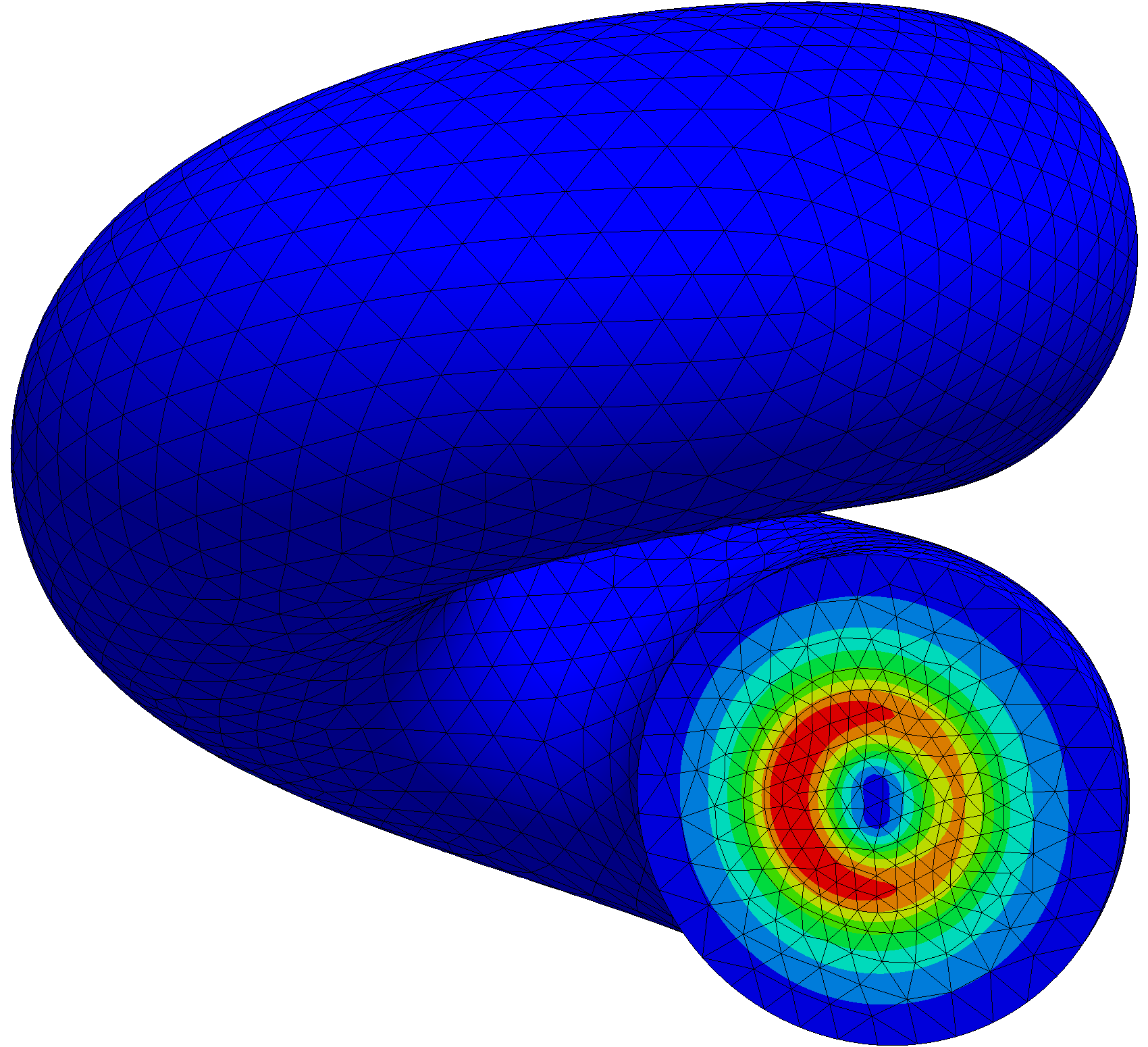}
  \includegraphics[width=0.19\textwidth]{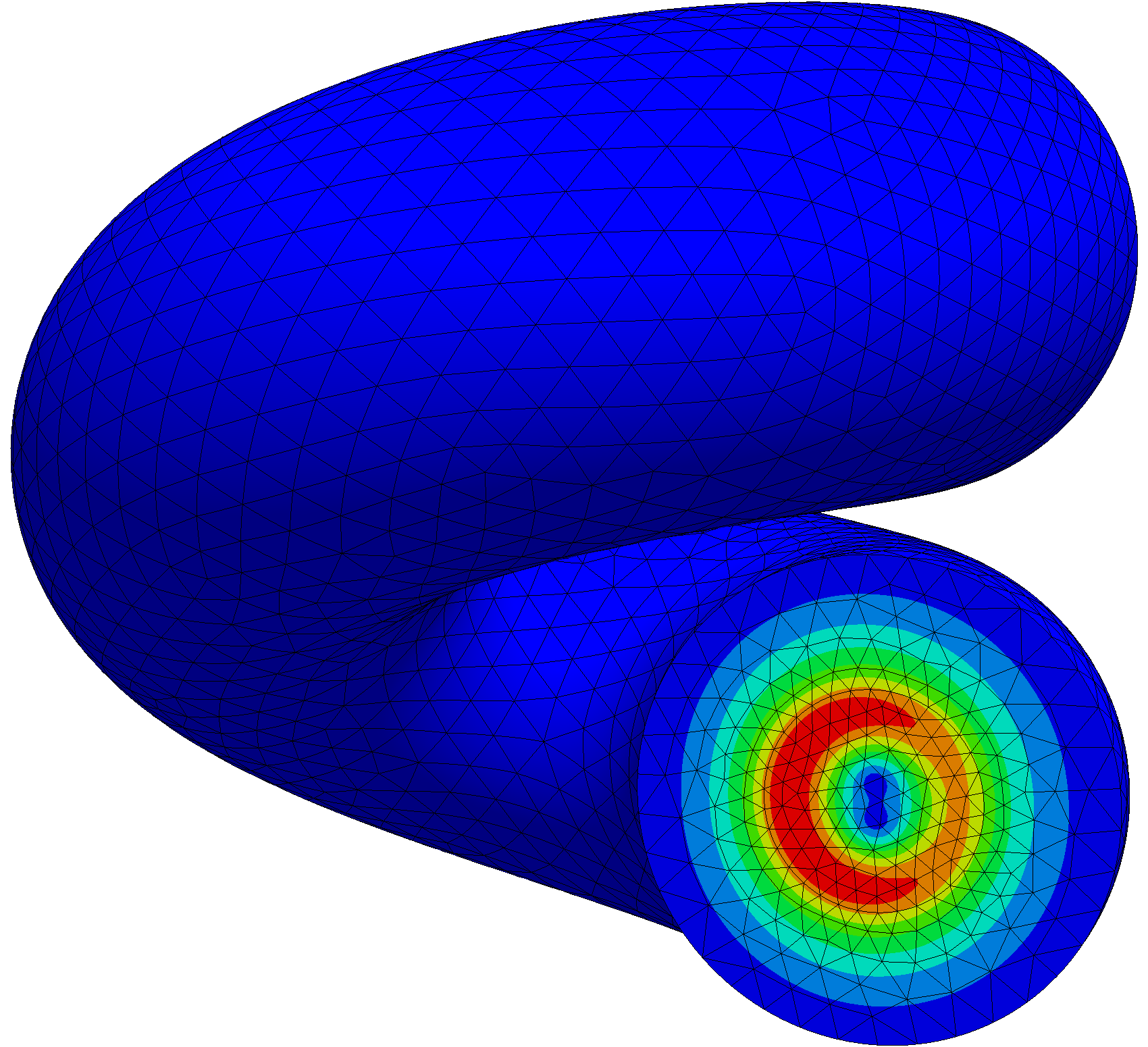}
  
  \includegraphics[width=0.19\textwidth]{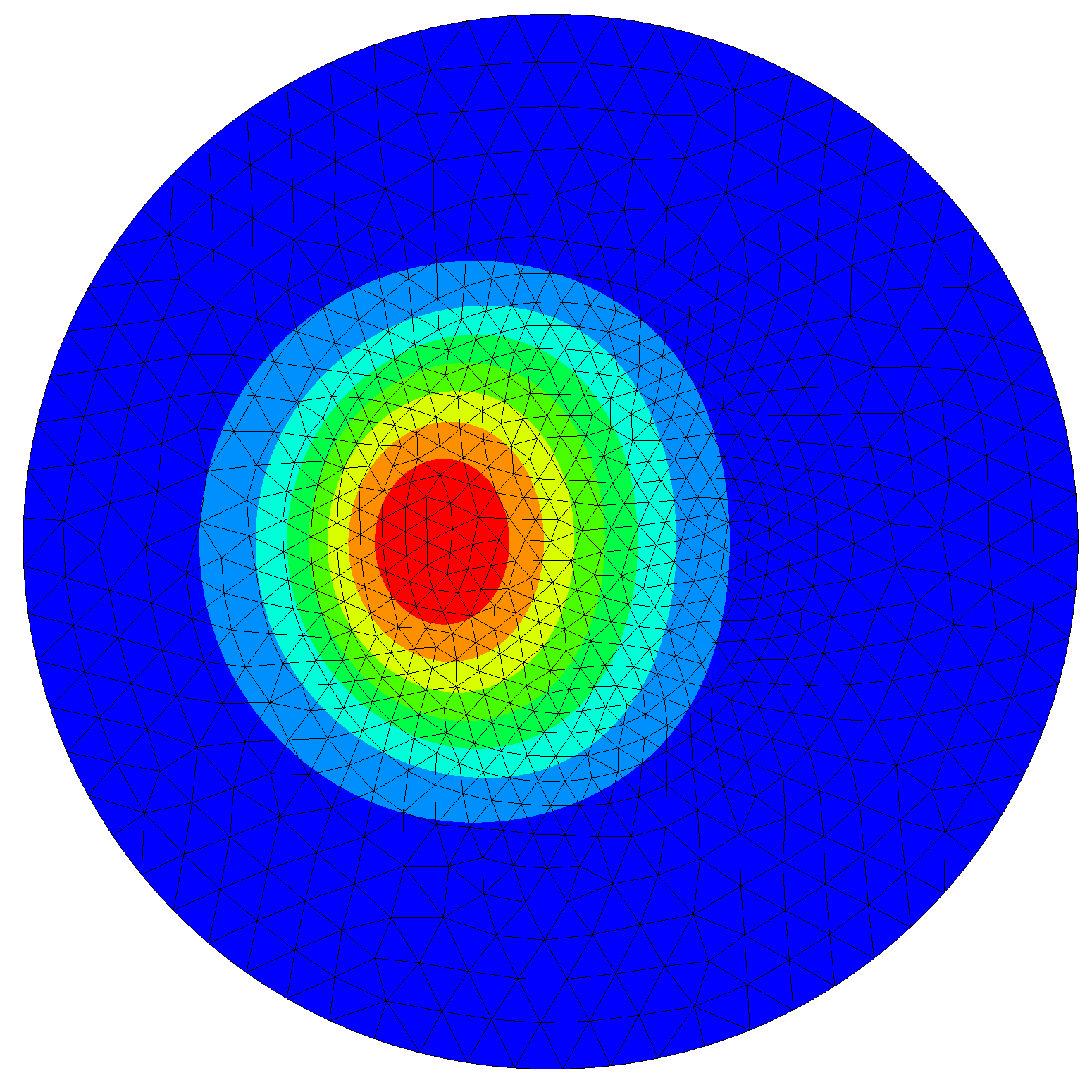}
  \includegraphics[width=0.19\textwidth]{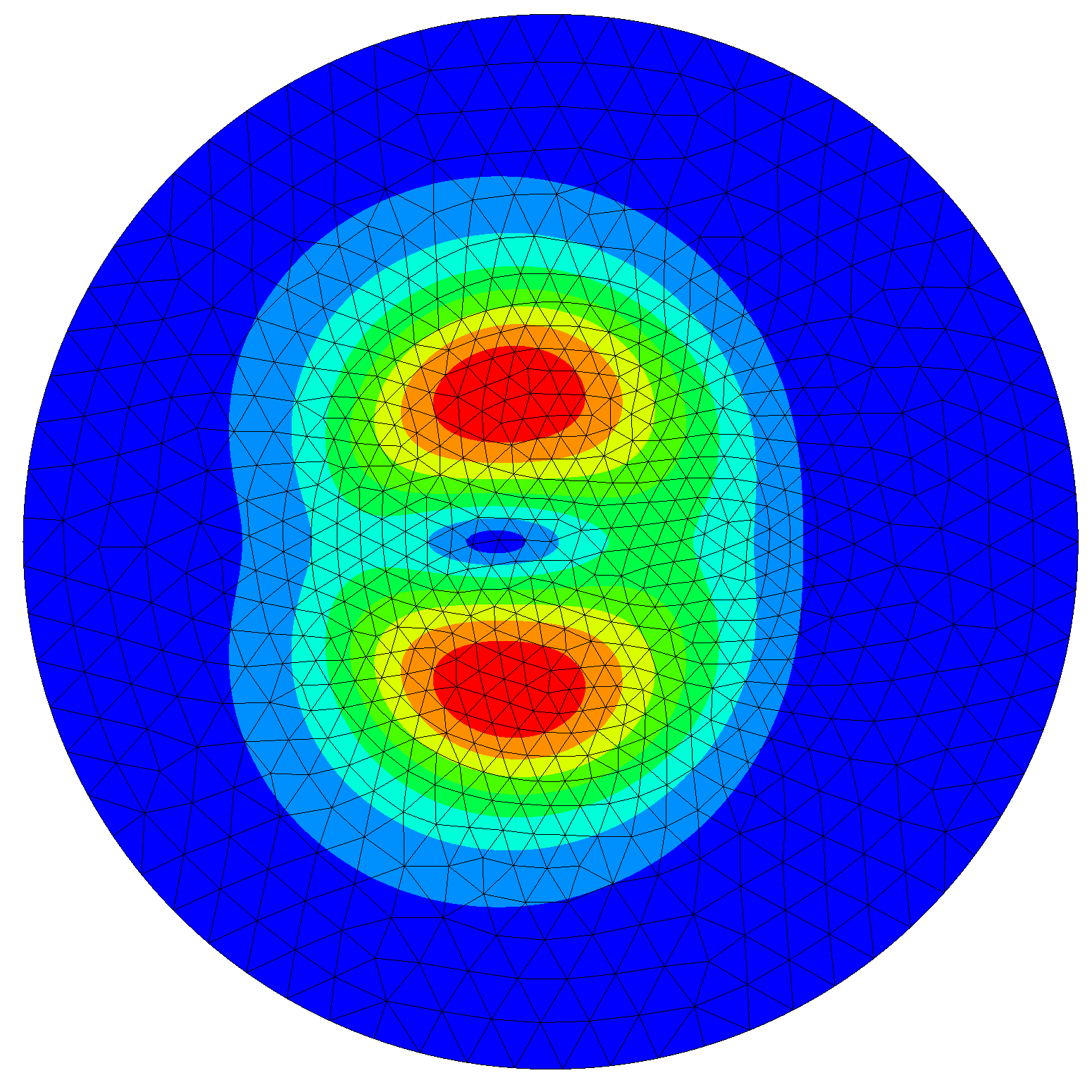}
  \includegraphics[width=0.19\textwidth]{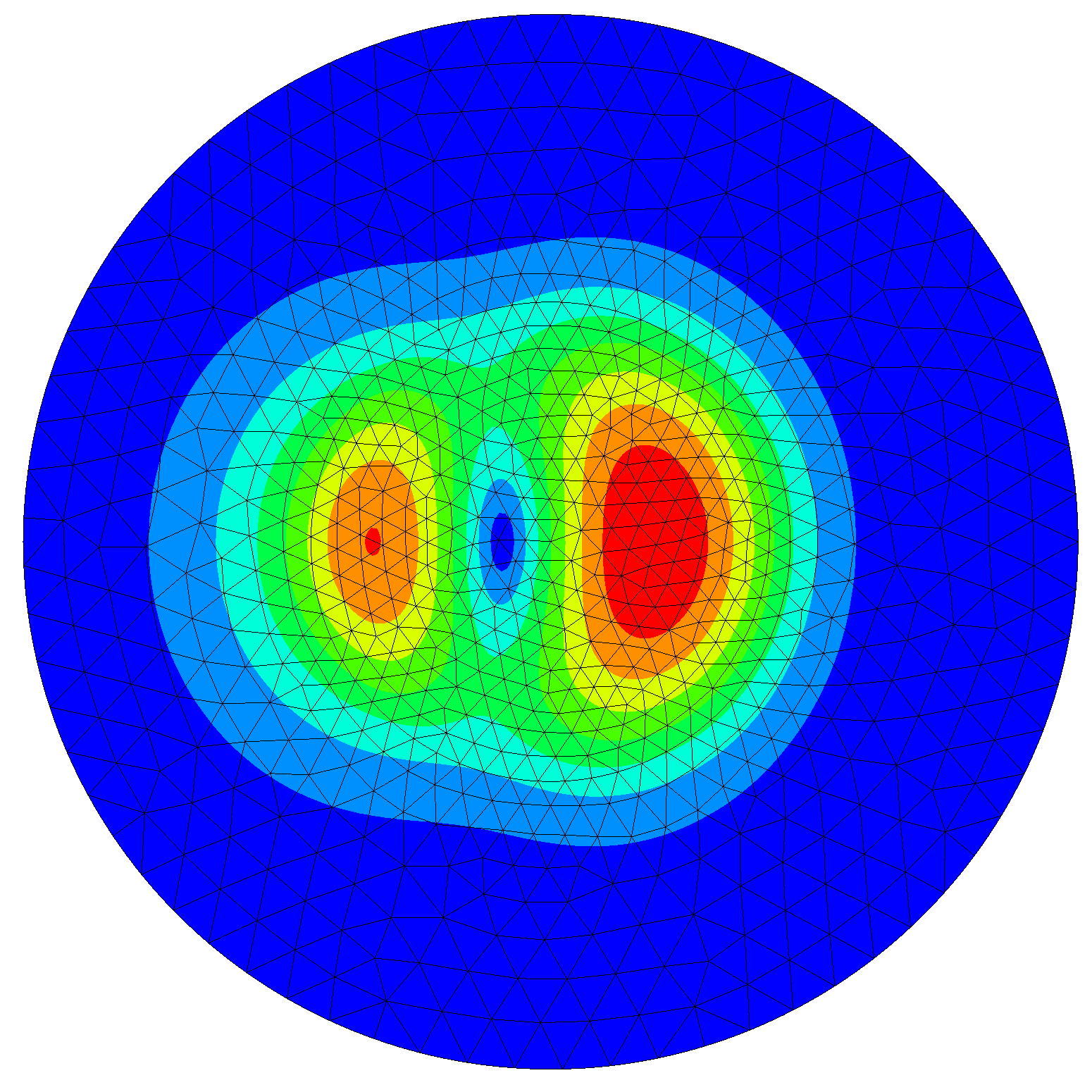}
  \includegraphics[width=0.19\textwidth]{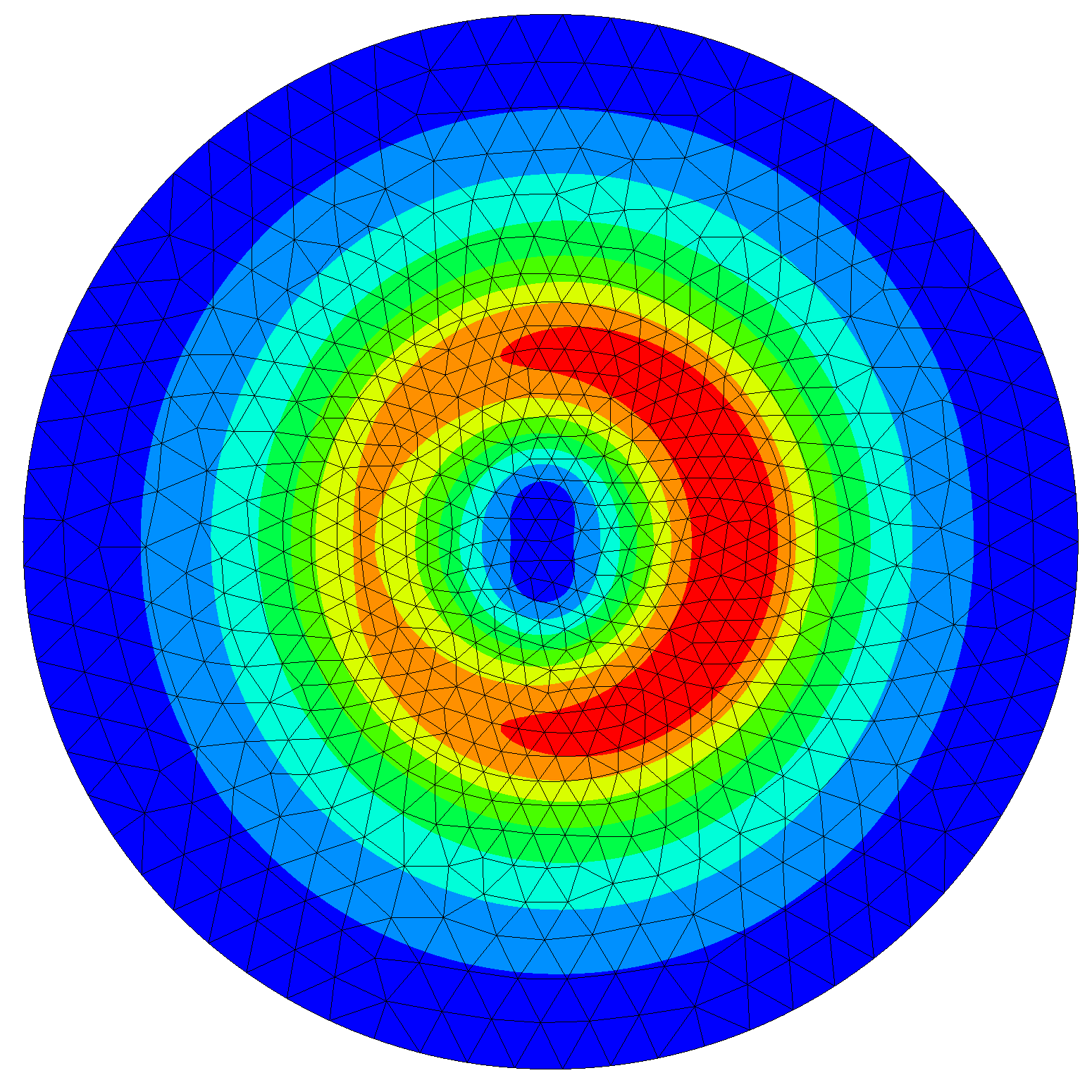}
  \includegraphics[width=0.19\textwidth]{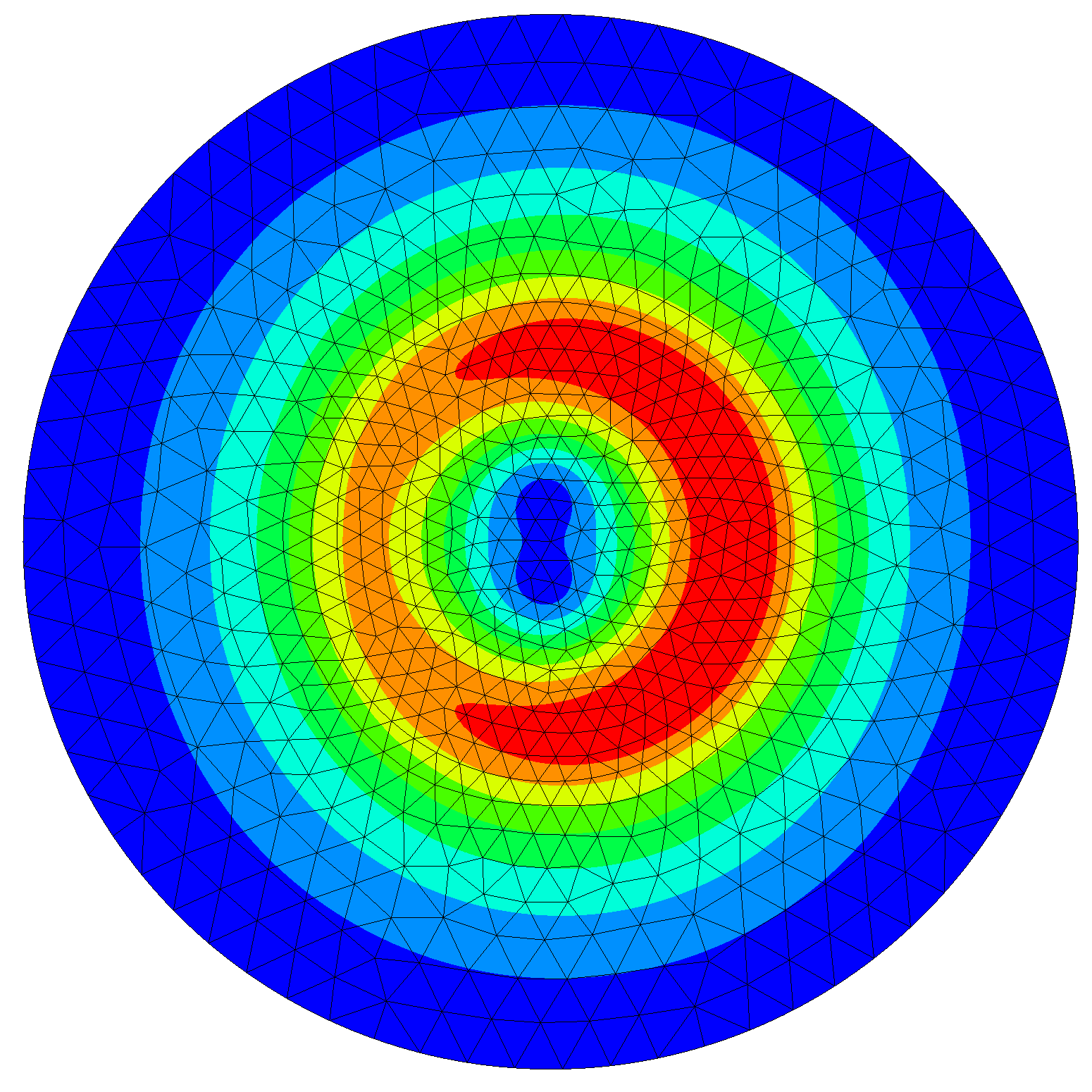}
  \caption{Magnitude of eigenmodes
    computed from the 2D model are illustrated in
    two ways: the bottom row shows them on the computed mesh; the top
    row shows them mapped onto the helically coiled waveguide. They
    are ordered left to right in descending order of their
    corresponding $\beta$-values in
    Table~\ref{tab:cross_verification_2D_results}. }
  \label{fig:cross_verification_2D_results}
\end{figure}

One reason for curving waveguides is to separate the fundamental mode
from the higher-order modes. For a straight fiber with parameters
\eqref{eq:cross_verification_parameters}, the eigenvalue of
fundamental mode (LP01) can be analytically computed to be
approximately 12.415638, whereas the next two higher order modes
(LP11) are near 7.188311 (a difference of about 5.22) and the final
two higher modes (LP21) are around 0.980593. From
Table~\ref{tab:cross_verification_2D_results} we see that upon helical
coiling, the fundamental mode's eigenvalue increased to 13.735759.
Its closest higher-order mode is at 7.554486, making a separation of
about 6.18. We see that the geometric effect of the helical coiling
leads to better separation of the fundamental mode from the
higher-order modes.

\subsection{Cross verification of results from the three eigenproblems} 
\label{ssec:cross-verification}

Next, we proceed to cross verify the above results from the 2D model
with 3D numerical
results from the straightened configuration~\eqref{eq:11} and the
physical helical waveguide~\eqref{eq:7}.
When computing with the straightened configuration we choose the scaling
$\sigma=4$ (see \S\ref{sec:helically_waveguide_qevp_comp}) to quarter the length of the cylinder in $\rz$-direction
(allowing for resources to be used to obtain better
cross section resolution). We also compute
the relative $L^2$-contribution of the $\rz$-derivative of the
eigenmodes, via
\begin{align}
  \mathrm{err}_{\rz}:=\|\hat\nabla \hat U\cdot e_{\rz}\|_{L^2(\hat\om)}/\|\hat\nabla \hat U\|_{L^2(\hat\om)}
  \label{eq:rel_z_contribution}
\end{align}
to isolate modes of interest.

In Table~\ref{tab:cross_verification_intermediate_results} we show the results for the straightened configuration for three different meshes. We setup FEAST to search around the 2D eigenmodes with a search radius of 1 percent of the eigenvalue. In contrast to the 2D computations, we now obtain several eigenfunctions close to the desired eigenvalues. However, we found that
 \eqref{eq:rel_z_contribution} goes to zero
only for five of them.
The eigenvalues for those five modes converge to the 2D results in Table~\ref{tab:cross_verification_2D_results}. This can be viewed as  a numerical verification of our assumption~\eqref{eq:ansatz'}
that there exists $\rz$-independent eigenfunctions in the straightened configuration.
The other computed modes, which have a high $\rz$-derivative, are not propagating modes (but appear to be modes like standing waves). Their eigenvalues cluster  around the relevant ones making it difficult to separate them looking only at the eigenvalues. However, \eqref{eq:rel_z_contribution} does effectively filter  out the propagating modes of interest. 
The five modes are shown in Figure~\ref{fig:straightened_shift}. We see the fundamental mode shifts along the Frenet normal, which is the $\rx$-axis in the straightened configuration.

\begin{table}[ht!]
  \centering
  \begin{footnotesize}   
\begin{tabular}{c|cc||cc||cc}
  \toprule
    &ne=1376 & ndof=15402 &  ne=9650 & ndof=105604 &  ne=23309 & ndof=253058  \\
    & $\beta^2$ & $\mathrm{err}_{\rz}$ & $\beta^2$ & $\mathrm{err}_{\rz}$ & $\beta^2$ & $\mathrm{err}_{\rz}$ \\
  \midrule
  $\beta_1^2$ & 13.733441557  &  0.076566 &  13.735758524  &  0.000252 &  13.735759215  &  0.000125 \\
  & 13.763696206  &  0.866226 &  13.802052436  &  0.834760 &  13.522030691  &  0.957915 \\
   & &   &  &   & 13.802079895  &  0.834752 \\
  &  &   &  &   & 13.664261602  &  0.993535 \\
   \midrule
  $\beta_2^2$ & 7.547275365  &  0.522232 &  7.554484798  &  0.000232 &  7.554485858  &  0.000112 \\
  & 7.701110444  &  0.965881 &  7.505431057  &  0.904869 &  7.505437106  &  0.904868 \\
  & 7.494815328  &  0.961366 &   &  &   &  \\
  \midrule
  $\beta_3^2$ & 6.374972271  &  0.170528 &  6.382001254  &  0.000220 &  6.382002007  &  0.000110 \\
   & &   & 6.338427995  &  0.972767 &  6.261246381  &  0.991823 \\
   \midrule
   $\beta_4^2$ & 1.196704393  &  0.126262 &  1.213983256  &  0.000268 &  1.213986873  &  0.000130 \\
  & 1.232101019  &  0.933150 &  1.202981316  &  0.383337 &  1.202986086  &  0.383336 \\
  \midrule
  $\beta_5^2$ & 0.594864213  &  0.389903 &   0.600027824  &  0.000271 &  0.600030404  &  0.000130 \\
   & &   & 0.607803431  &  0.383419 & 0.607806776  &  0.383419 \\
  \bottomrule
  \end{tabular}
\end{footnotesize}
  
\caption{Eigenvalues $\beta^2$ of the 3D
  straightened configuration and their  $\mathrm{err}_{\rz}$ (defined in~\eqref{eq:rel_z_contribution}) on
  a mesh with 1376 elements (15402 ndof, left), a mesh with 9650 elements (105604 ndof, middle), and a mesh with 23309 elements (253058 ndof, right).}
  \label{tab:cross_verification_intermediate_results}
\end{table}

\begin{figure}[ht!]
  \centering
  \includegraphics[width=0.19\textwidth]{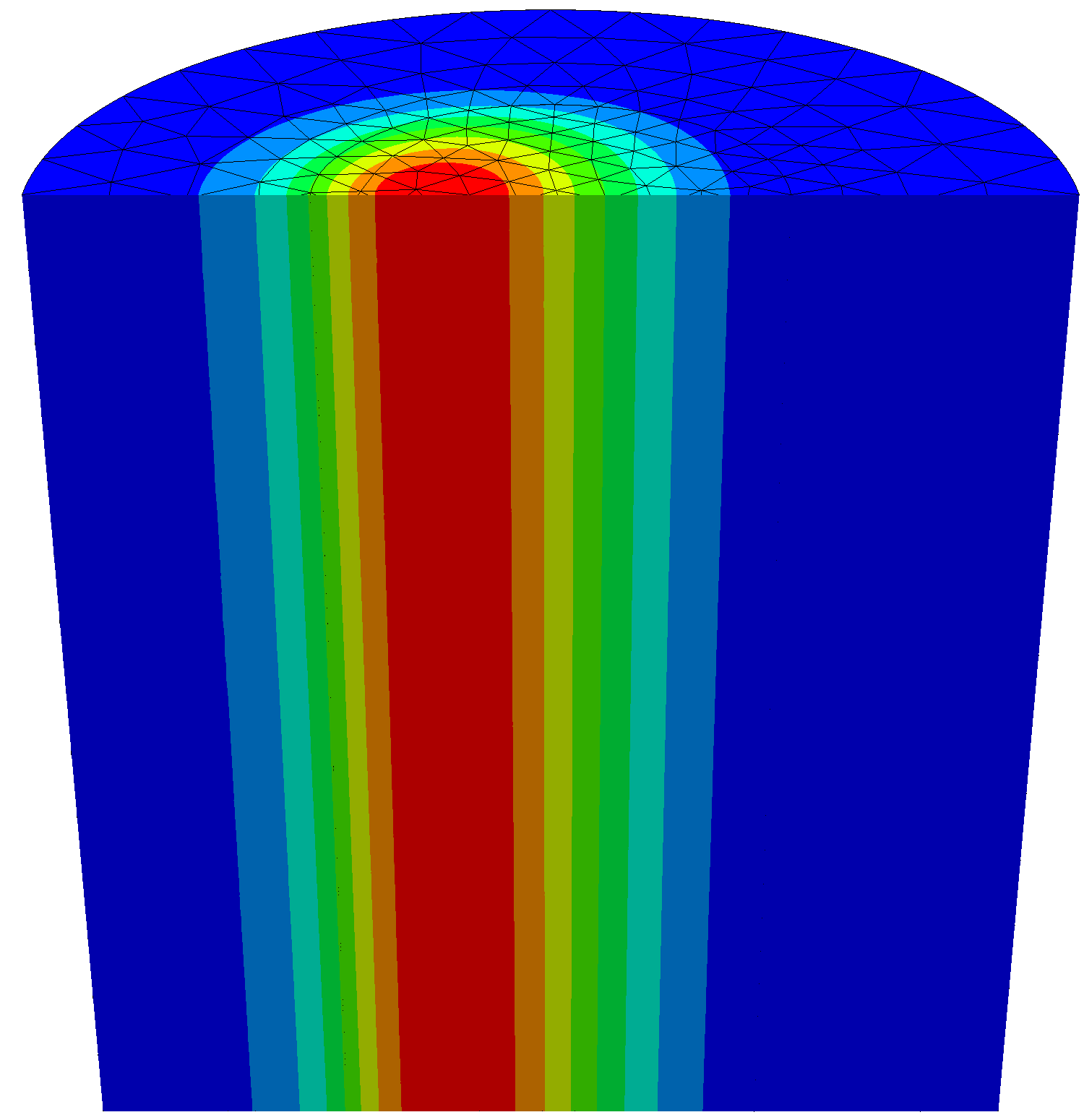}
  \includegraphics[width=0.19\textwidth]{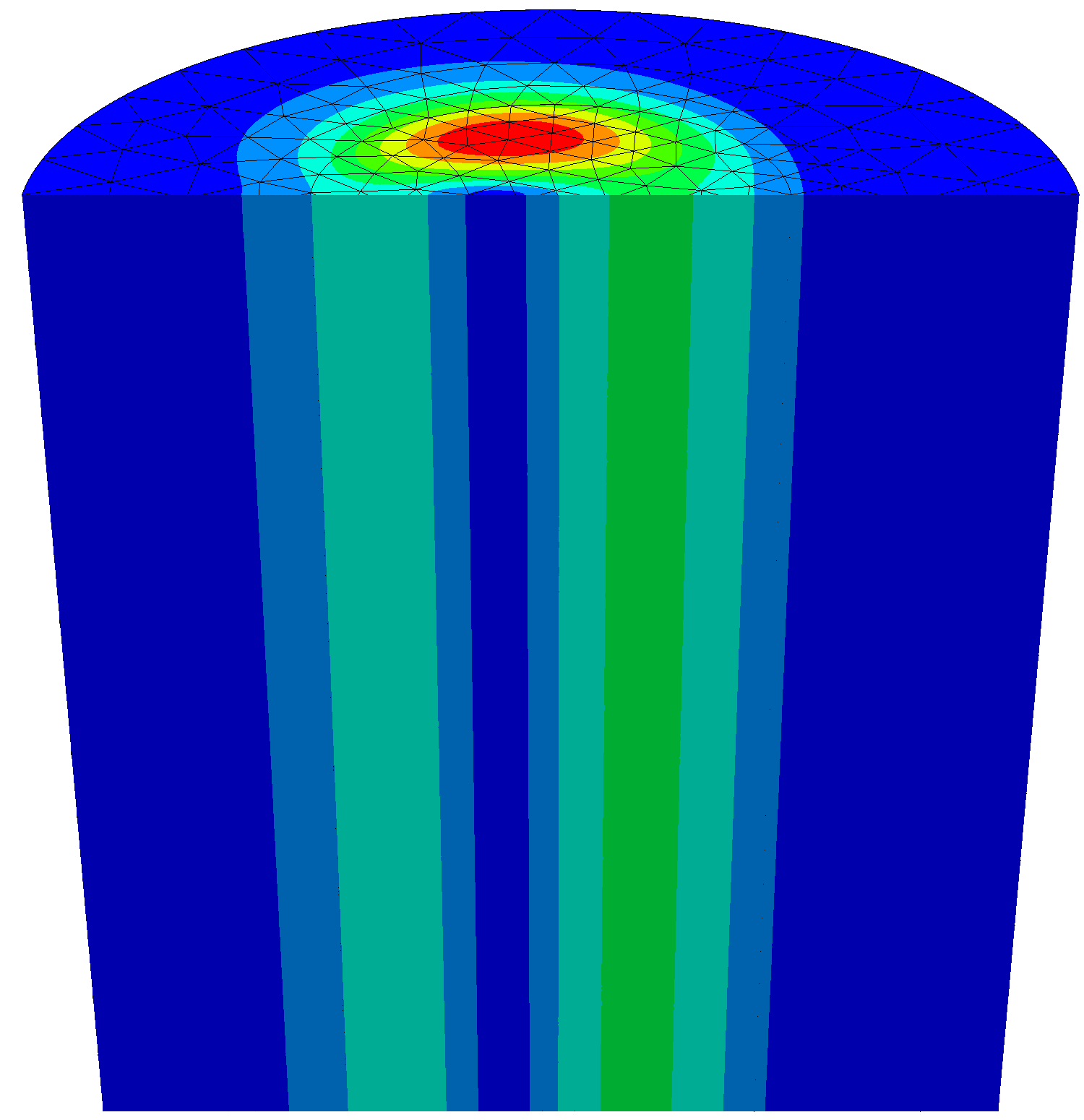}
  \includegraphics[width=0.19\textwidth]{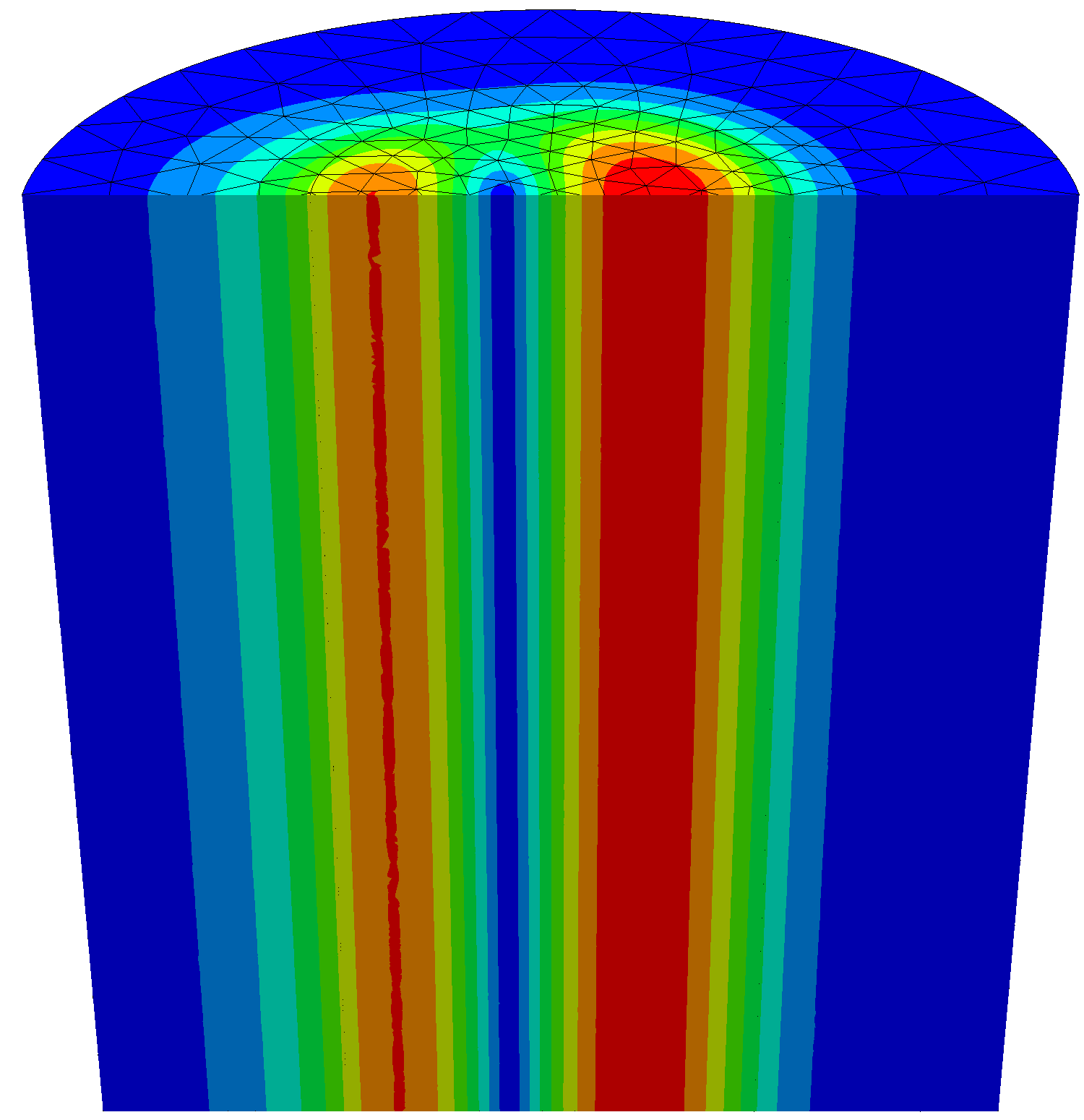}
  \includegraphics[width=0.19\textwidth]{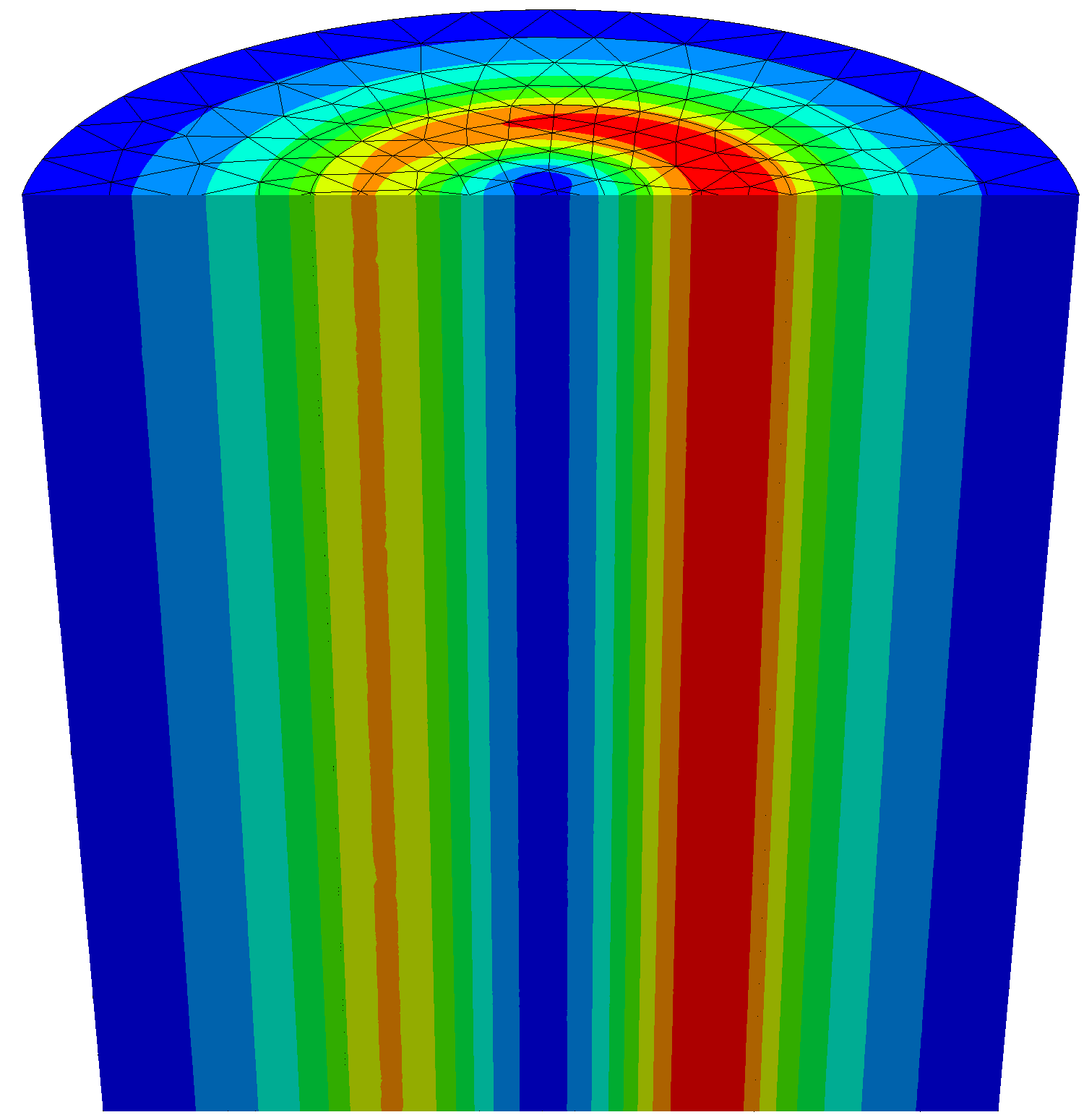}
  \includegraphics[width=0.19\textwidth]{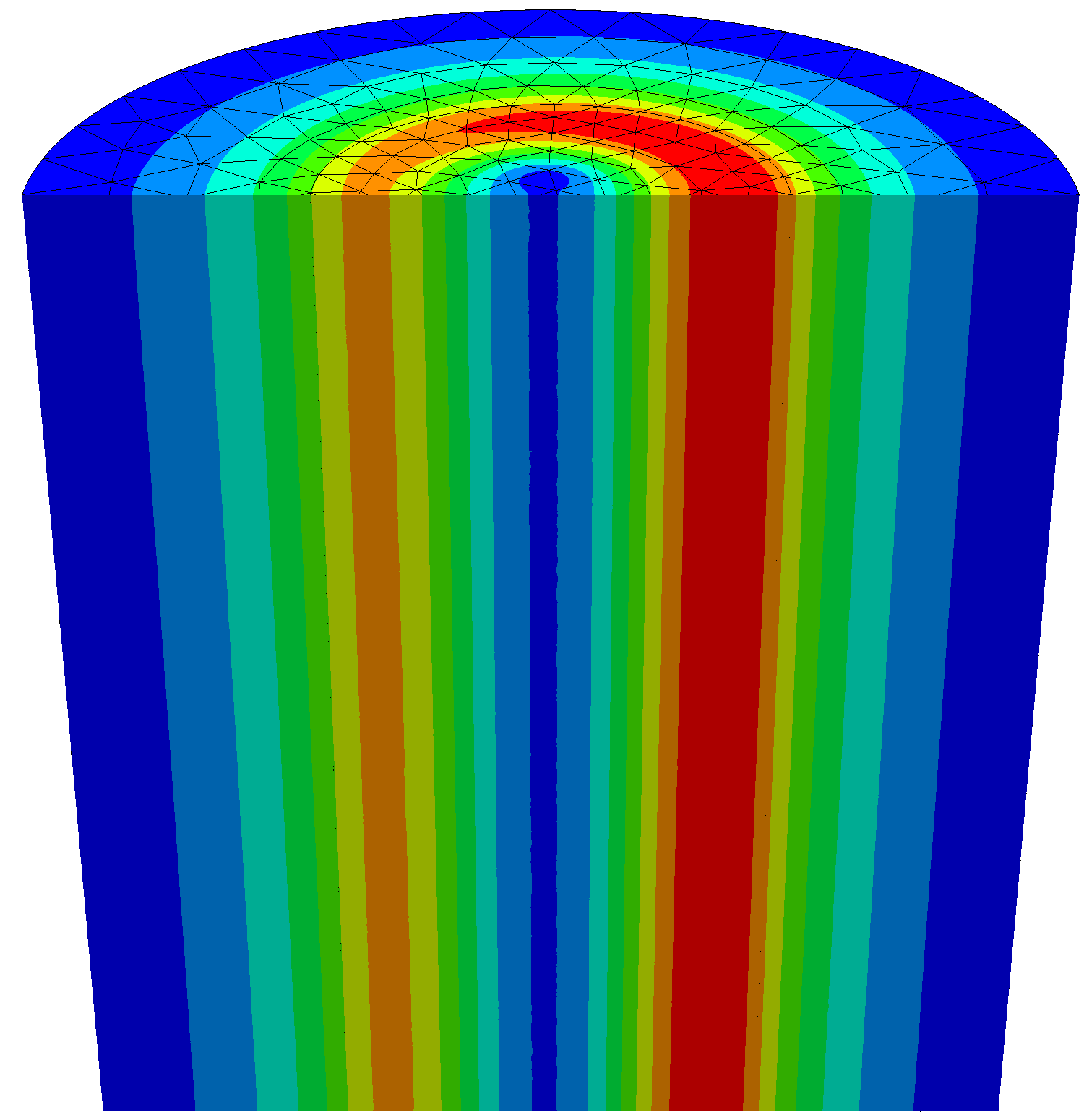}
  \caption{Visualization of selected modes and their off-center shift in the  straightened configuration on the finest mesh. The  $\rz$-independence of these modes is apparent.}
  \label{fig:straightened_shift}
\end{figure}

As an additional cross verification,  we compare the 2D results to the 3D modes computed using the original helical waveguide geometry~\eqref{eq:7}. Again, three different meshes are used and the results from all three are  in Table~\ref{tab:cross_verification_undeformed_results}. 
To filter out propagating modes, we implement a numerical verification of~\eqref{eq:ansatz-transverse} using 
the relative measure 
\begin{align}
  \mathrm{err}_Z := \|\nabla U\cdot \tilde{Z}_h\|_{L^2(\om)}/\|\nabla U\|_{L^2(\om)},
  \label{eq:rel_z_contribution2}
\end{align}
where
$\tilde{Z}_h:= Z_h/\|Z_h\|$ and 
$Z_h$ is as described in \S\ref{sec:helically_waveguide_qevp_comp}.
Filtering out the modes using \eqref{eq:rel_z_contribution2} we again
find five modes.  As can be seen from
Table~\ref{tab:cross_verification_undeformed_results}, the eigenvalues
computed on the bent waveguide do appear to converge to the 2D results
of Table~\ref{tab:cross_verification_2D_results}. The 3D results
appear to be not as accurate because the mesh is comparably
coarse. Also note that finer meshing of the waveguide appear to
produce many more irrelevant eigenfunctions near the relevant ones,
but we are able to use~\eqref{eq:rel_z_contribution2} to identify the
propagating modes and only their eigenvalues appear to converge to the
corresponding 2D eigenvalues.  Needless to mention are the tremendous
computational gains brought about by the 2D reduction (e.g., using a Mac
M2 processor, the 2D computations for Table~\ref{tab:cross_verification_2D_results} took about 30 second, whereas
the 3D computations in the straightened configuration to generate the results in Table~\ref{tab:cross_verification_intermediate_results} took about 1.25 hours).
\begin{table}[ht!]
  \centering
  \begin{footnotesize}
\begin{tabular}{c|cc||cc||cc}
  \toprule
  &ne=4607 & ndof=52318 &  ne=7904 & ndof=88372 &  ne=17342 & ndof=191630  \\
  & $\beta^2$ & $\mathrm{err}_{Z}$ & $\beta^2$ & $\mathrm{err}_{Z}$ & $\beta^2$ & $\mathrm{err}_{Z}$ \\
  \midrule
  $\beta^2_1$&13.733206936     &  0.0953 &  13.734946168    &  0.0037 &  13.735732309   &  0.0011 \\
  &13.791778155     &  0.3427 &  13.480673401   &  0.7176 &  13.992433105     &  0.8126 \\
  &13.708566140    &  0.9150 &  13.518840973   &  0.6388 &  13.950479969    &  0.8705 \\
  & &   & 14.004442036    &  0.9569 &  13.499207778   &  0.7173 \\
  & &   & 13.860708691     &  0.9292 &  13.535777888    &  0.6385 \\
  & &   &   13.800827527   &  0.3403 &  13.677100791   &  0.9296 \\
  & &   &  &   & 13.753184231    &  0.9633 \\
  &    &   &     &   & 13.802025335  &  0.3401 \\
   \midrule
  $\beta^2_2$& 7.547335104    &  0.0237 &  7.553670129  &  0.0060 &  7.554449886  &  0.0008 \\
  & 7.680087108  &  0.9041 &  7.688733573   &  0.8144 &  7.675878453    &  0.7443 \\
  & 7.503380976   &  0.4295 &  7.570153365    &  0.9518 &  7.647479028   &  0.8140 \\
  & &   & 7.504762148   &  0.4299 &  7.450916889   &  0.9057 \\
  & &   & 7.500999833   &  0.9060 &  7.502148982    &  0.9528 \\
  & &   &   &   & 7.505415507     &  0.4289 \\
   \midrule
  $\beta^2_3$& 6.374600241   &  0.0190 &  6.381326289  &  0.0031 &  6.381974976   &  0.0007 \\
  \midrule
  $\beta^2_4$& 1.198328310     &  0.0244 &  1.212183510  &  0.0040 &  1.213903444  &  0.0009 \\
  & &  & 1.201164013   &  0.1087 &  1.202903051    &  0.1085 \\
   \midrule
  $\beta^2_5$&  0.588948792   &  0.0252 &  0.598743993    &  0.0039 &  0.599970767   &  0.0009 \\
  & 0.596613234  &  0.1115 & 0.606525591   &  0.1091 &  0.607748146   &  0.1090 \\
  \bottomrule
  \end{tabular}
\end{footnotesize} 
\caption{Eigenvalues $\beta^2$ of the 3D physical helical waveguide
  and their $\mathrm{err}_Z$ (defined in
  \eqref{eq:rel_z_contribution2})
  on a mesh with 4607 elements (52318 ndof, left),
  a mesh with 7904 elements (88372 ndof, middle), and
  a mesh with 17342 elements (191630 ndof, right).}
\label{tab:cross_verification_undeformed_results}
\end{table}

\subsection{Comparing quadratic and linear eigenproblems for toroidal waveguide}
\label{ssec:comparison-qevp-levp}

Consider the toroidal limit case $\bhel \to 0$ case described in \S\ref{sec:torus}.
The
limiting values of the geometric coefficients are described
in~\eqref{eq:J-limit-torus}. We numerically verify that in this case the 2D quadratic eigenproblem~\eqref{eq:12} is equivalent to the linear eigenproblem~\eqref{eq:torus_lin_evp}. We set the following parameter values 
\begin{align}
  \label{eq:torus-eg-prm}
  r_0=1,\quad r_1=2.2,\quad \ahel=3,\quad \bhel=0,\quad n_0=4,\quad n_1=1,\quad k=1.
\end{align}
Note that this fiber has exactly the same $V$-number $V=15$ as the previously considered case~\eqref{eq:cross_verification_parameters}.

\begin{figure}[ht!]
  \centering
  \includegraphics[width=0.19\textwidth]{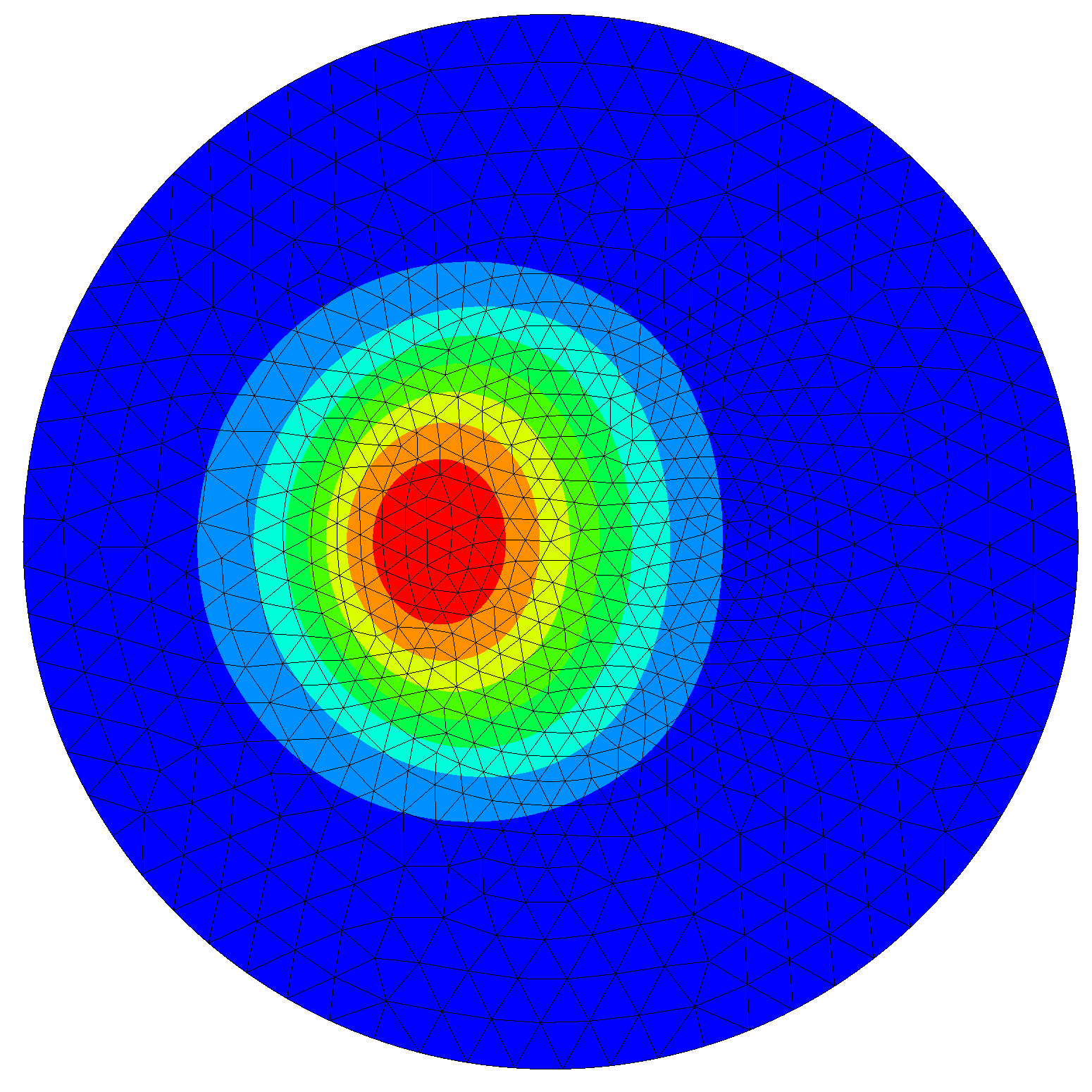}
  \includegraphics[width=0.19\textwidth]{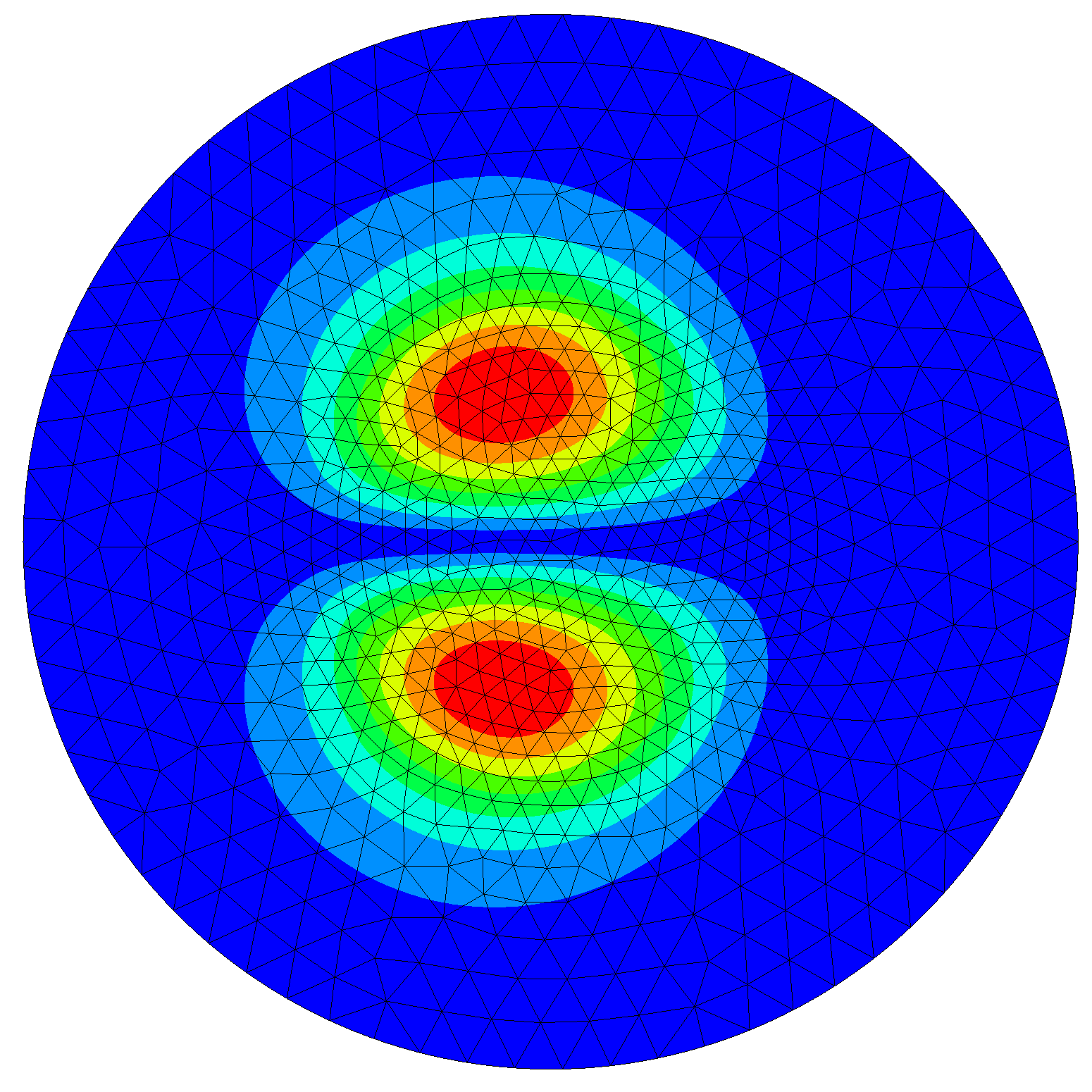}
  \includegraphics[width=0.19\textwidth]{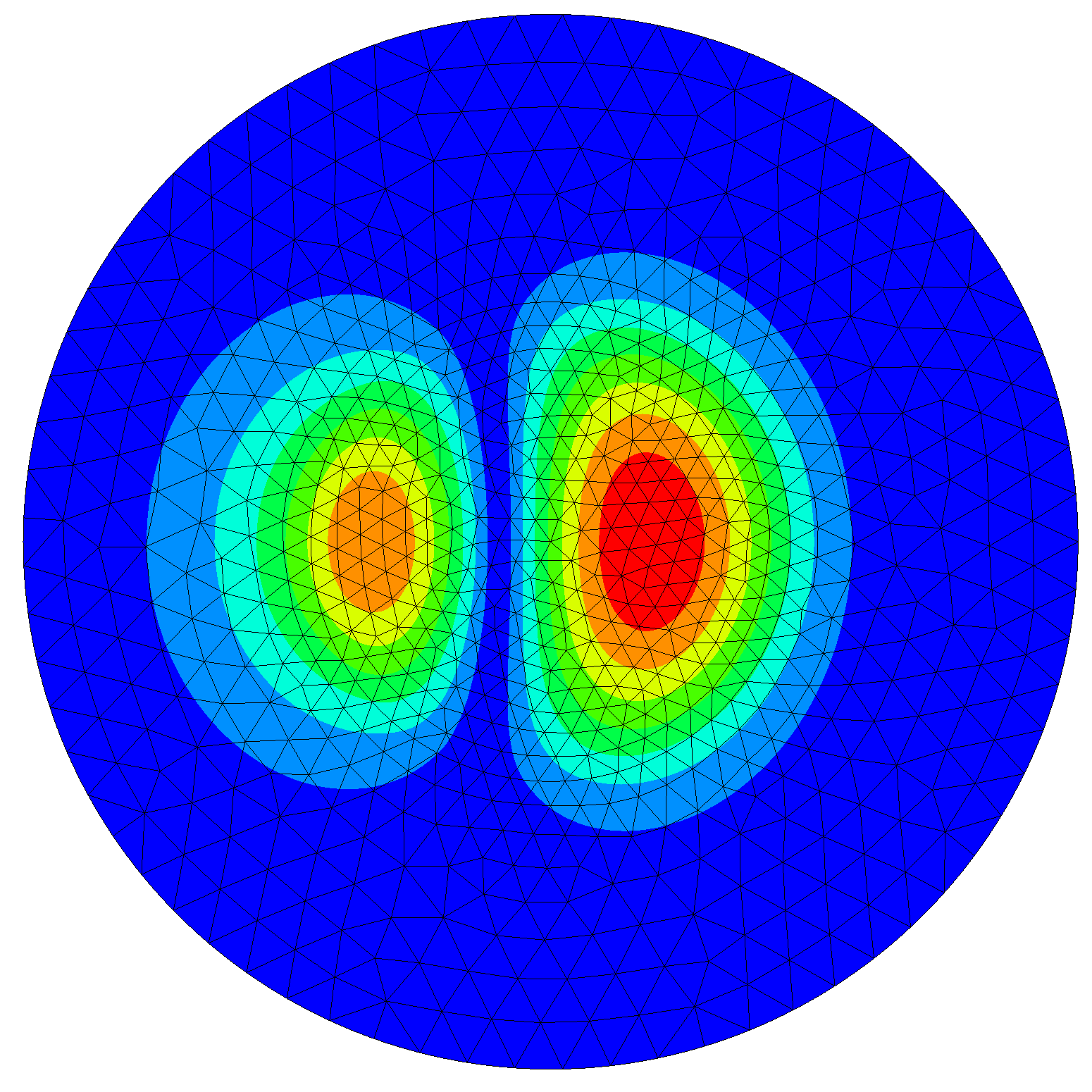}
  \includegraphics[width=0.19\textwidth]{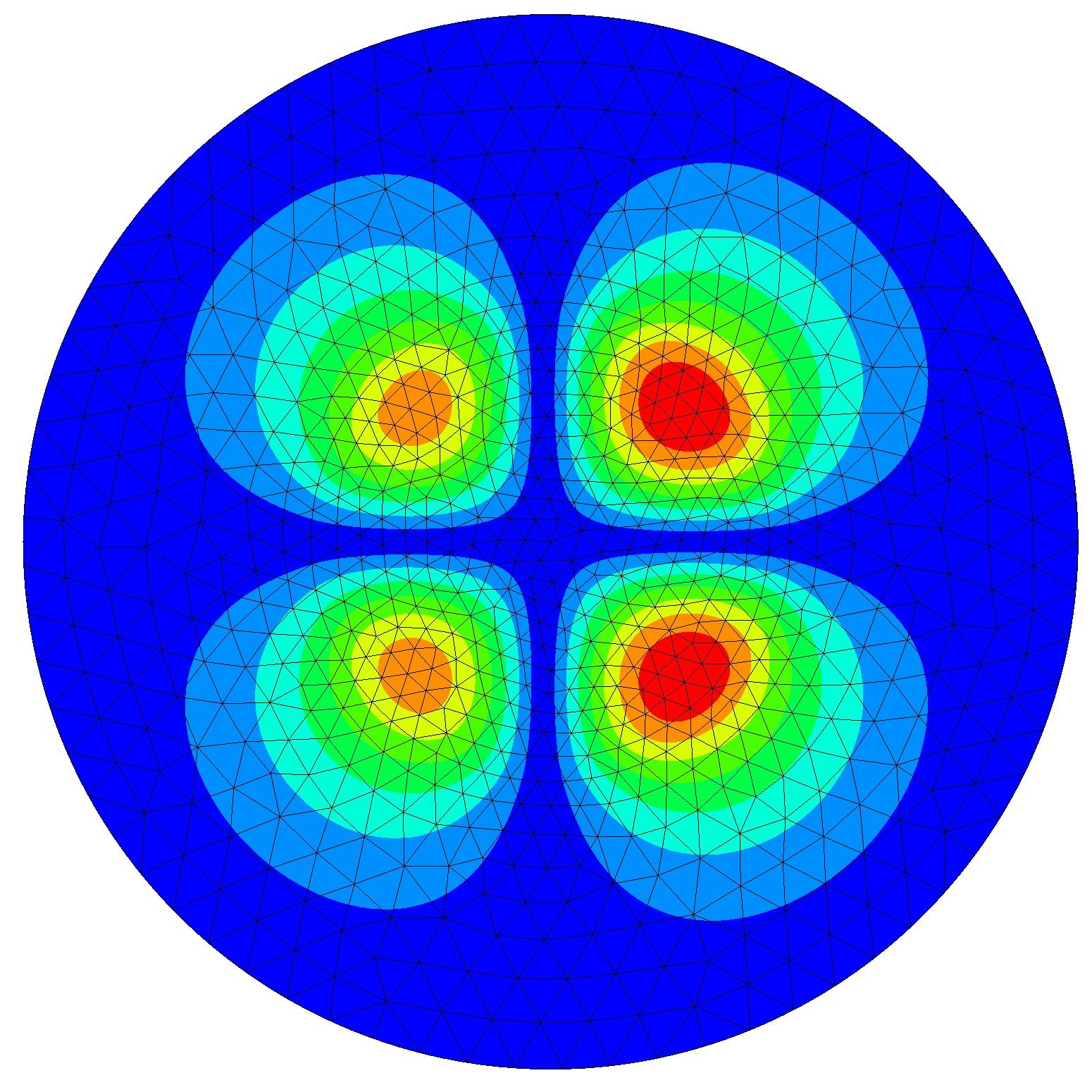}
  \includegraphics[width=0.19\textwidth]{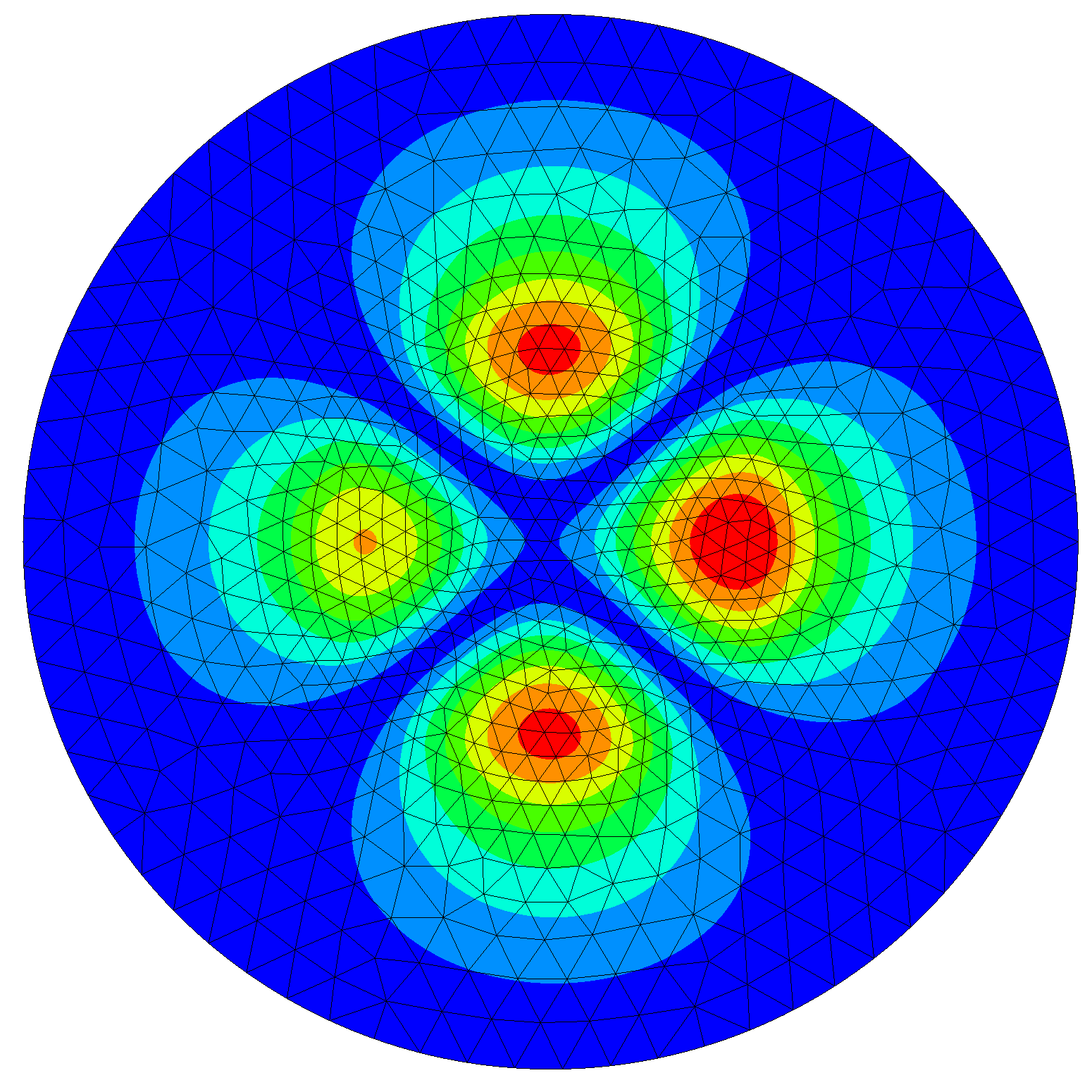}

\caption{Magnitudes of the five  eigenmodes on the cross section of the toroidal waveguide of \eqref{eq:torus-eg-prm}. Results from the quadratic and linear eigenproblems are indistinguishable.}
\label{fig:torus_result}
\end{figure}

\begin{table}[ht!]
  \centering
  \begin{footnotesize}
  \begin{tabular}{c|ccccccc}
    \toprule
    h  &  ne  &  ndof  &  $\beta_1^2$  &  $\beta_2^2$  &  $\beta_3^2$  &  $\beta_4^2$  &  $\beta_5^2$ \\
    \midrule
    1.6 & 72 & 605 & 13.896655945 & 7.417666545 & 6.466435528 & 0.866180693 & 0.865656552 \\
    0.8 & 91 & 763 & 13.896678708 & 7.417750941 & 6.466551350 & 0.866426447 & 0.865937597 \\
    0.4 & 349 & 2863 & 13.896688049 & 7.417771578 & 6.466567837 & 0.866493270 & 0.865999900 \\
    0.2 & 1509 & 12211 & 13.896688301 & 7.417771898 & 6.466568067 & 0.866493808 & 0.866000431 \\
    0.1 & 6108 & 49141 & 13.896688301 & 7.417771898 & 6.466568067 & 0.866493809 & 0.866000431 \\
    \bottomrule
    \end{tabular}
  \end{footnotesize}
    \caption{Eigenvalues $\beta^2$ corresponding to the modes in Figure~\ref{fig:torus_result} for the toroidal waveguide of \eqref{eq:torus-eg-prm}.
    }
    \label{tab:torus_results}
\end{table}

Figure~\ref{fig:torus_result} shows the modes computed from the 2D
linear eigenproblem~\eqref{eq:torus_lin_evp}, where the limiting
coefficient expressions for the torus from~\eqref{eq:J-limit-torus}
are substituted. As the results from the quadratic eigenvalue
problem~\eqref{eq:12} coincide up to numerical rounding error
precision, we do not show them. A convergence study can be found in
Table~\ref{tab:torus_results}. A comparison of the prior eigenvalues
for the helical case in Table~\ref{tab:cross_verification_2D_results}
show that the eigenvalues have indeed changed.

From the mode plots in Figure~\ref{fig:torus_result}, we observe a
squeezing of the mode profiles to the left (which translates to away
from the center of the torus).  Such shifts in core localization can
also be seen in the prior work of~\cite{ScherCole07}
which also considered the toroidal bending case. 
It is interesting to compare the last
two eigenmodes in the toroidal case (Figure~\ref{fig:torus_result})
with their previous analogues in the helical case (in
Figure~\ref{fig:cross_verification_2D_results}). Remnants of the typical four-leaf
pattern of the LP21 modes of the unbent fiber are still visible in the
toroidal case in Figure~\ref{fig:torus_result}, but are harder to
discern in Figure~\ref{fig:cross_verification_2D_results}
where they appear to have been partially fused together. The corresponding
eigenvalues in the last two columns of Table~\ref{tab:torus_results}
are closer to each other than their two analogues in the last two columns of Table~\ref{tab:cross_verification_2D_results}.

\subsection{Parameter study of eigenmodes for varying helix pitch}
\label{ssec:pitch}

In this subsection, we investigate how modes change
with the pitch when an optical fiber is helically coiled.  The
influence of the bend radius on mode profiles has been studied in the
toroidal case (where there is no pitch).  Hence, we only report our
investigations into the previously unknown case of variations with the
pitch, holding the bend radius fixed.  We now  consider more
realistic parameters, with all length units set to micrometers,
\begin{align*}
  r_0 = 12.5,\quad r_1 = 20r_0,\quad n_1 = 1.52,\quad n_0=\sqrt{0.1^2+n_1^2},\quad k=\frac{2\pi}{1.064},
\end{align*}
(so the $V$-number is  $V= 0.1kr_0$ and  the wavelength is $1.064$).
The bend radius
is fixed to $\ahel=15500$.

\begin{figure}
  \centering
  \begin{tabular}{ccccc}
    \includegraphics[width=0.15\textwidth]{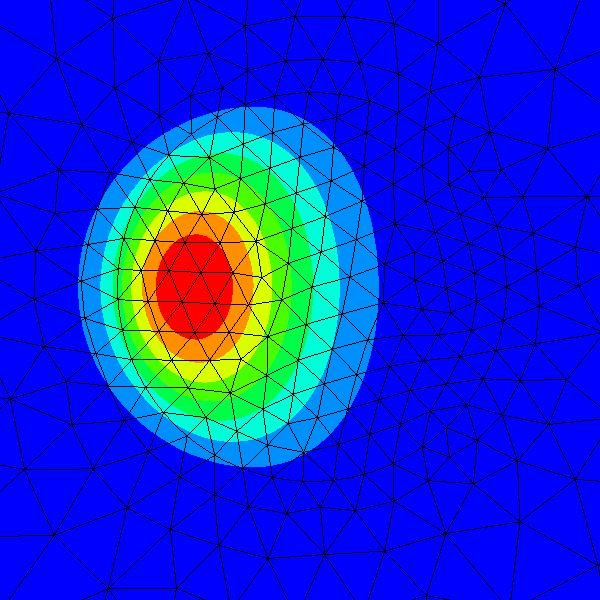}&
    \includegraphics[width=0.15\textwidth]{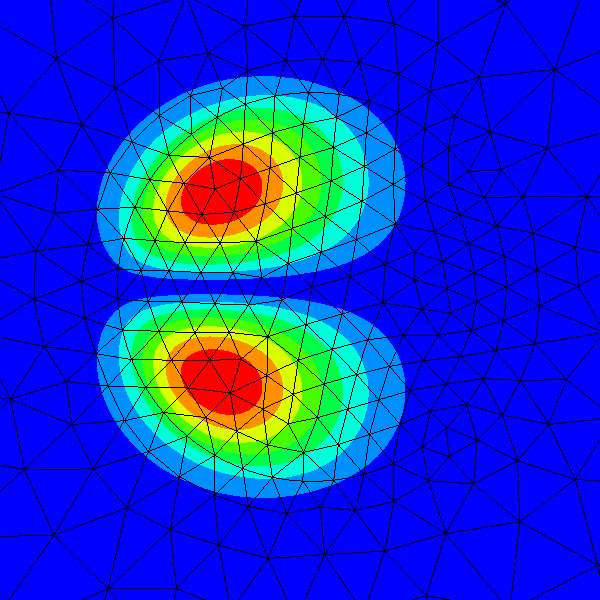}&
    \includegraphics[width=0.15\textwidth]{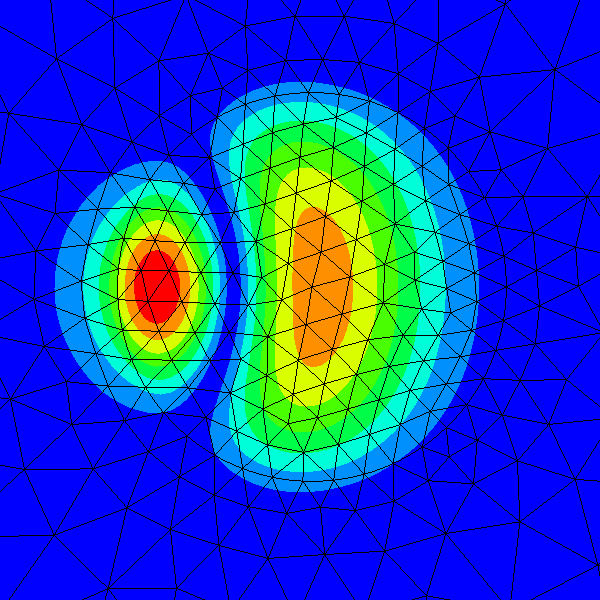}&
    \includegraphics[width=0.15\textwidth]{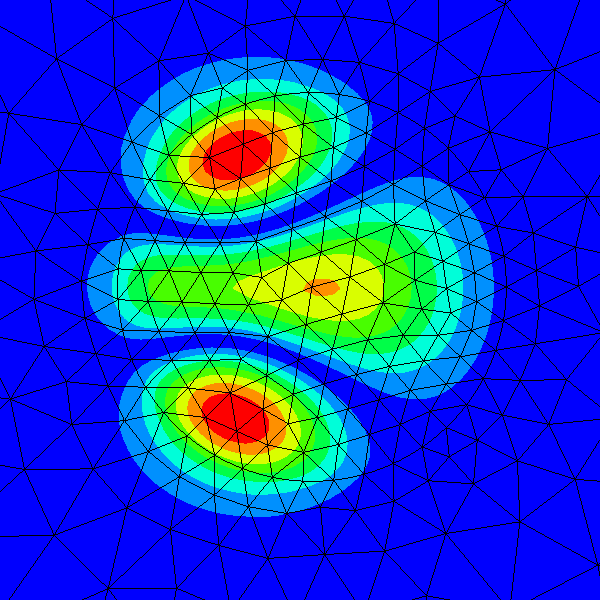}&
    \includegraphics[width=0.15\textwidth]{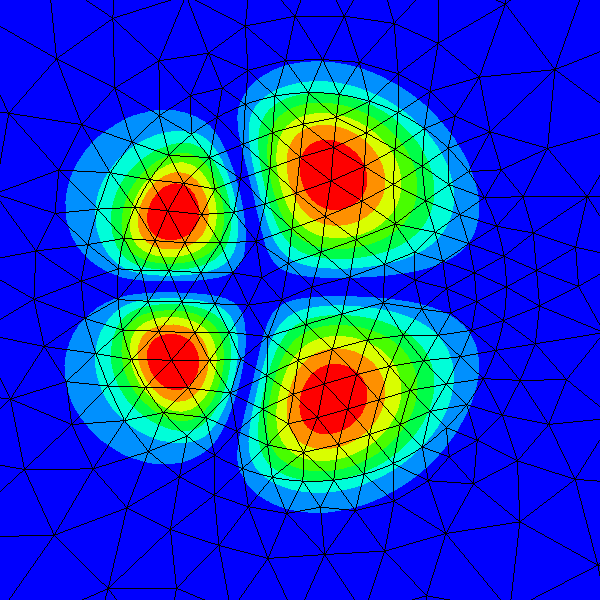}\\

    \includegraphics[width=0.15\textwidth]{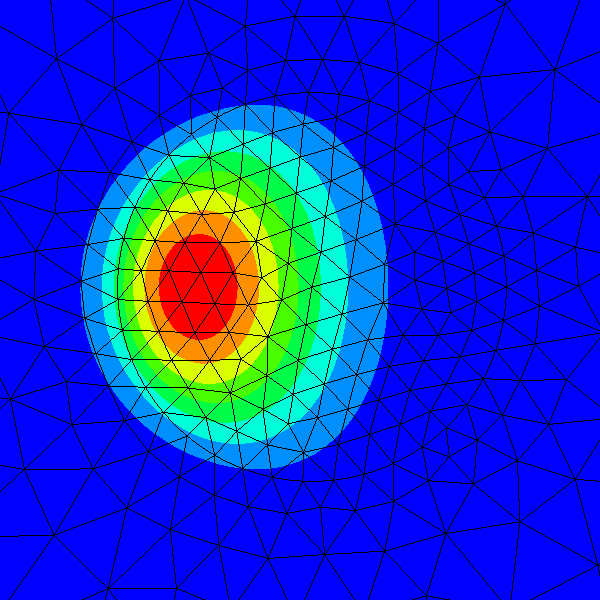}&
    \includegraphics[width=0.15\textwidth]{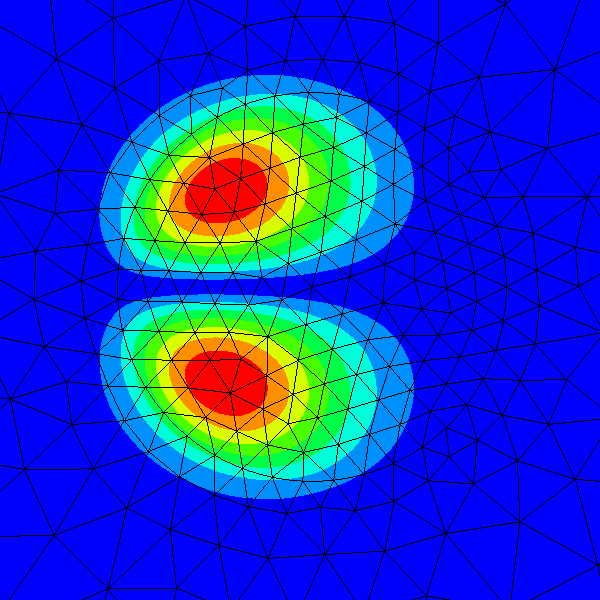}&
    \includegraphics[width=0.15\textwidth]{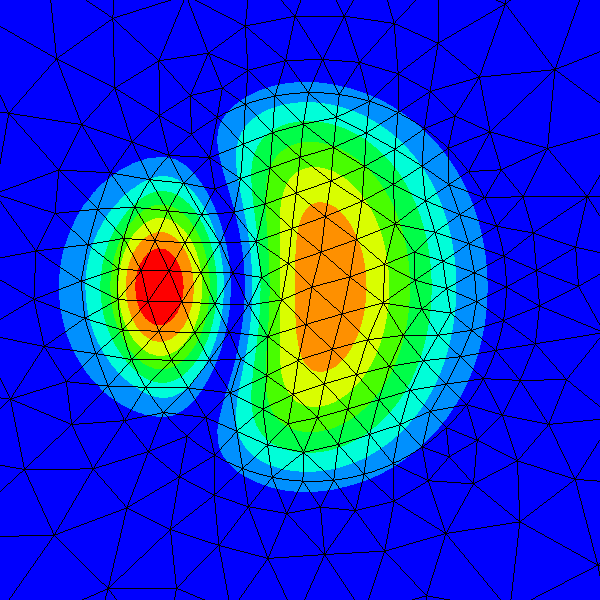}&
    \includegraphics[width=0.15\textwidth]{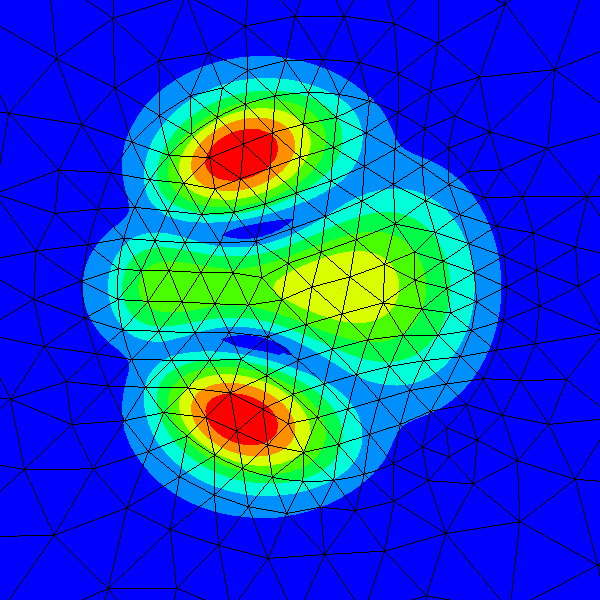}&
    \includegraphics[width=0.15\textwidth]{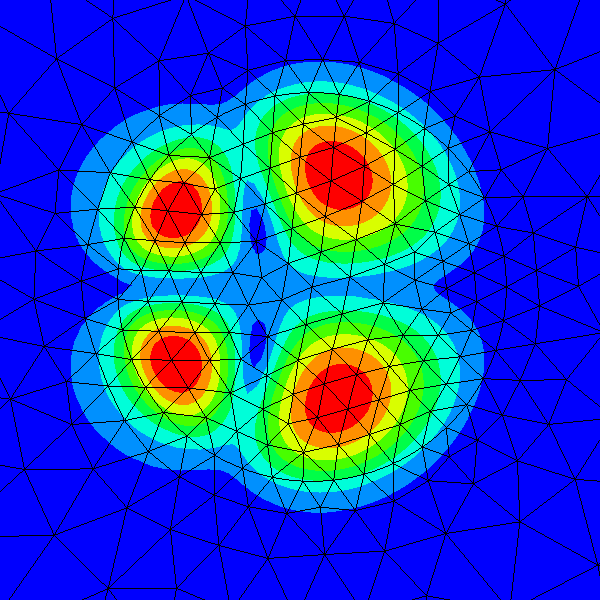}\\

    \includegraphics[width=0.15\textwidth]{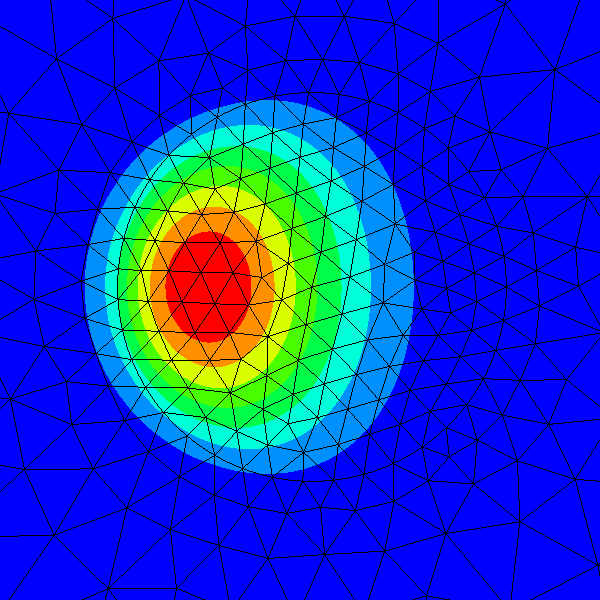}&
    \includegraphics[width=0.15\textwidth]{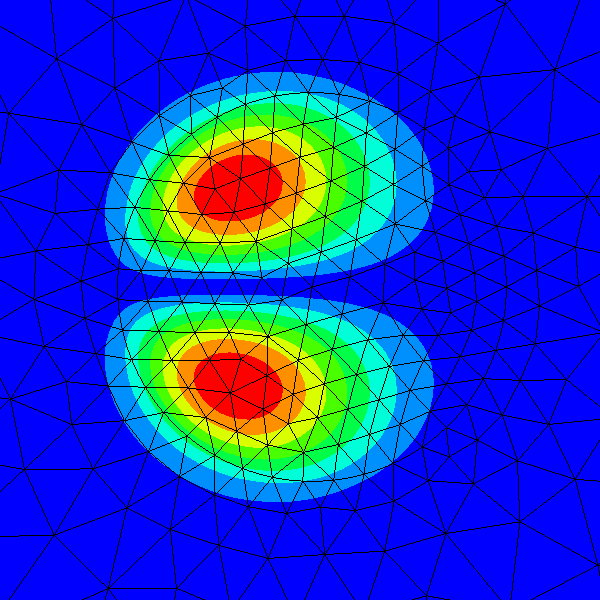}&
    \includegraphics[width=0.15\textwidth]{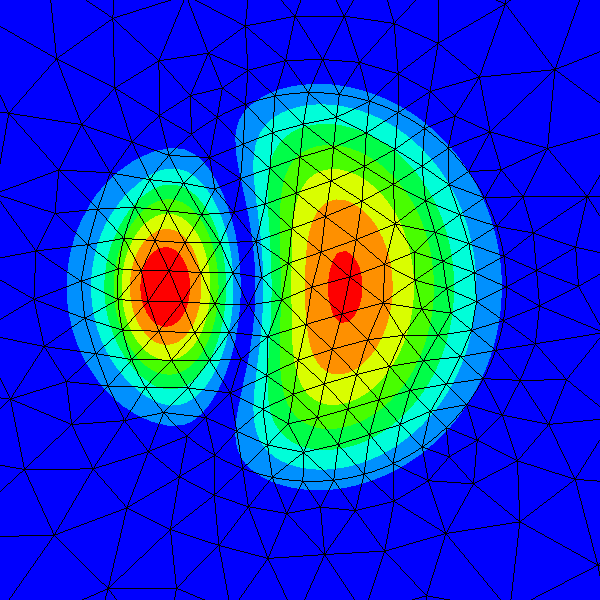}&
    \includegraphics[width=0.15\textwidth]{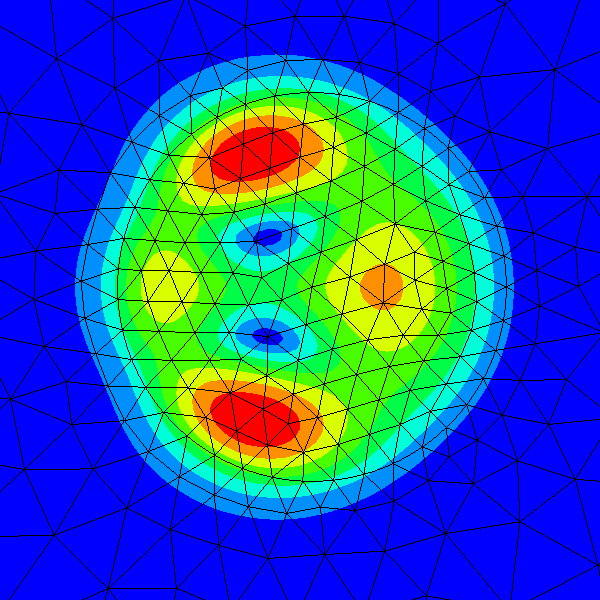}&
    \includegraphics[width=0.15\textwidth]{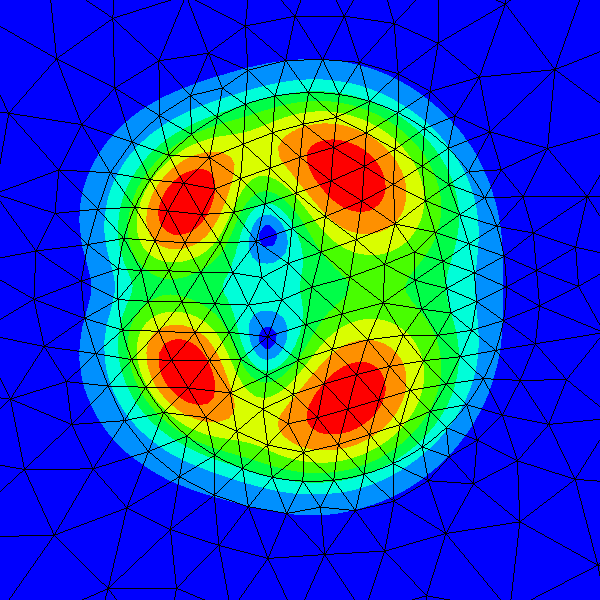}\\

    \includegraphics[width=0.15\textwidth]{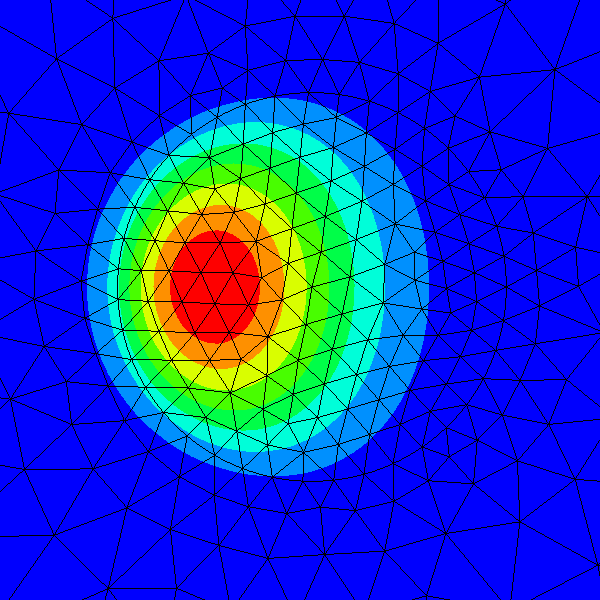}&
    \includegraphics[width=0.15\textwidth]{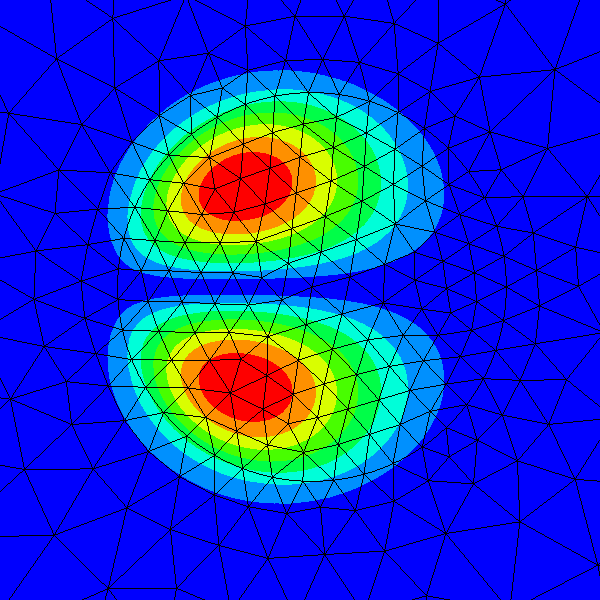}&
    \includegraphics[width=0.15\textwidth]{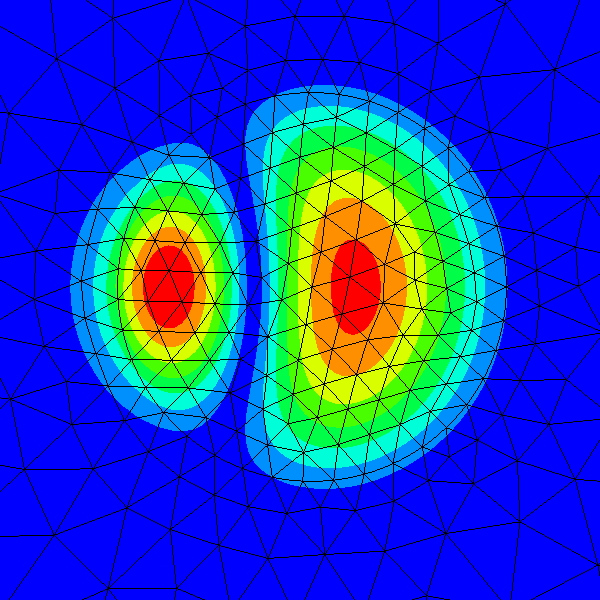}&
    \includegraphics[width=0.15\textwidth]{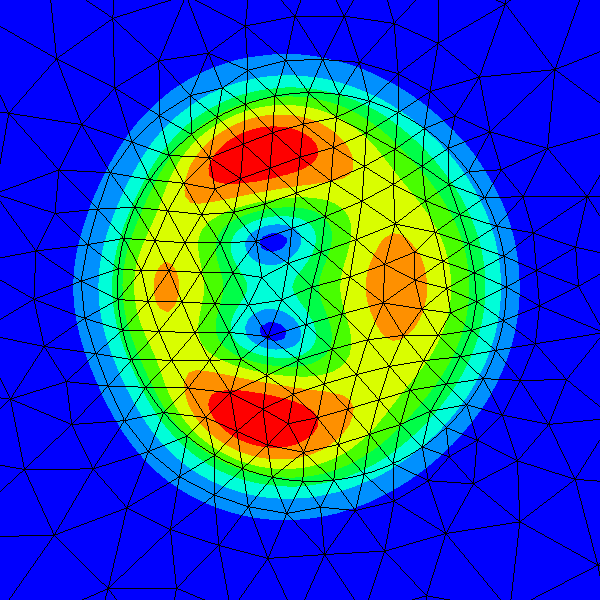}&
    \includegraphics[width=0.15\textwidth]{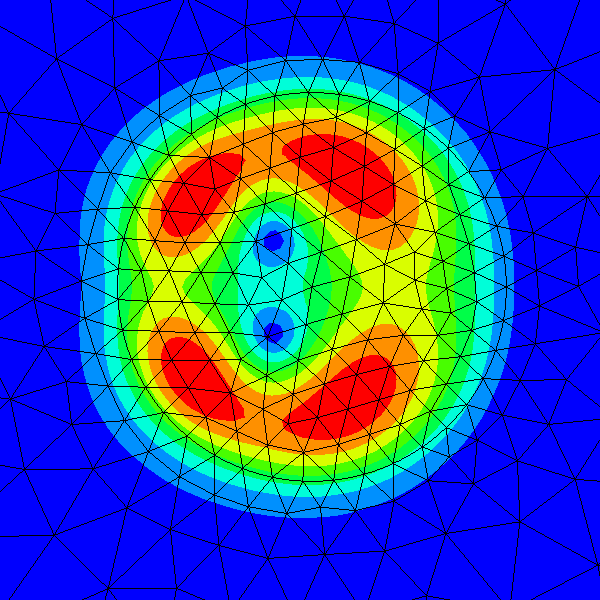}\\

    \includegraphics[width=0.15\textwidth]{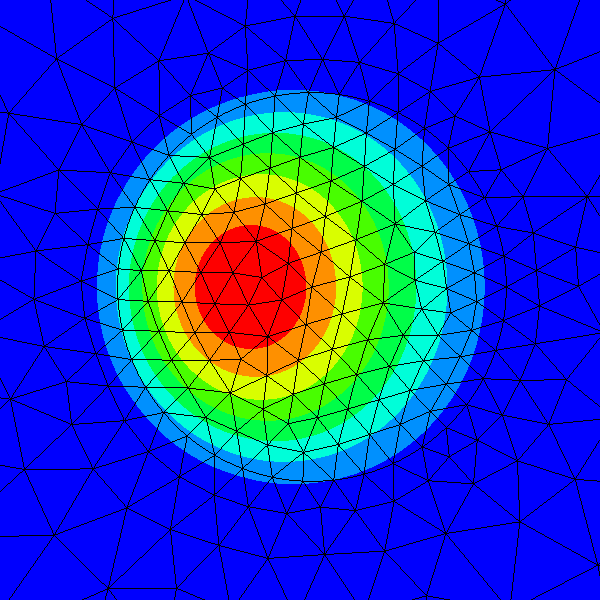}&
    \includegraphics[width=0.15\textwidth]{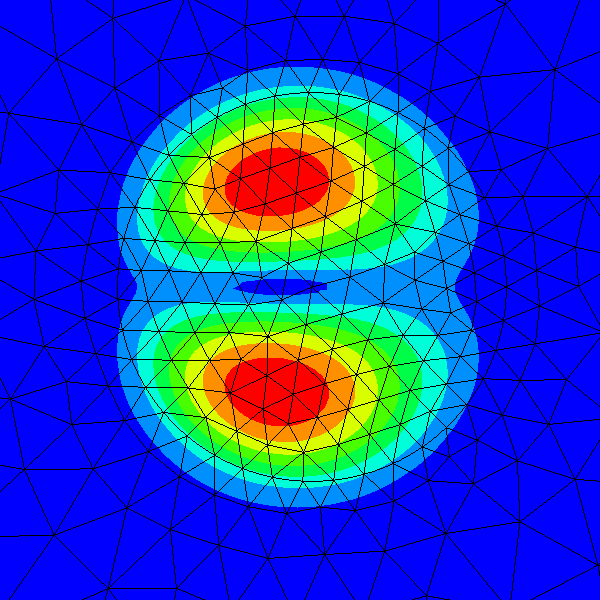}&
    \includegraphics[width=0.15\textwidth]{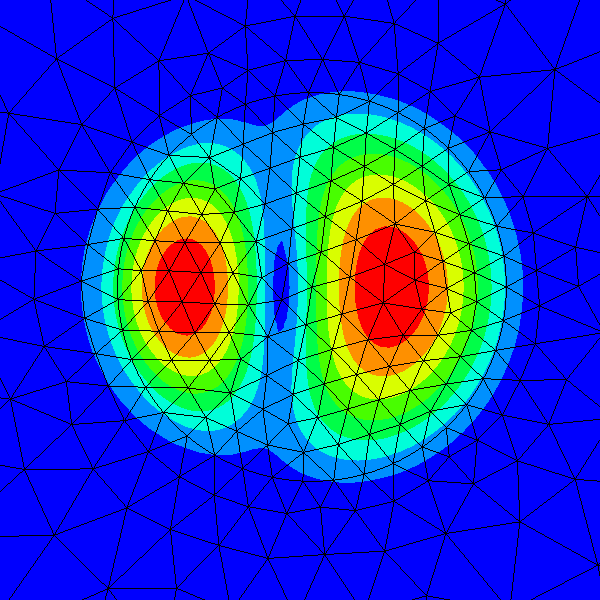}&
    \includegraphics[width=0.15\textwidth]{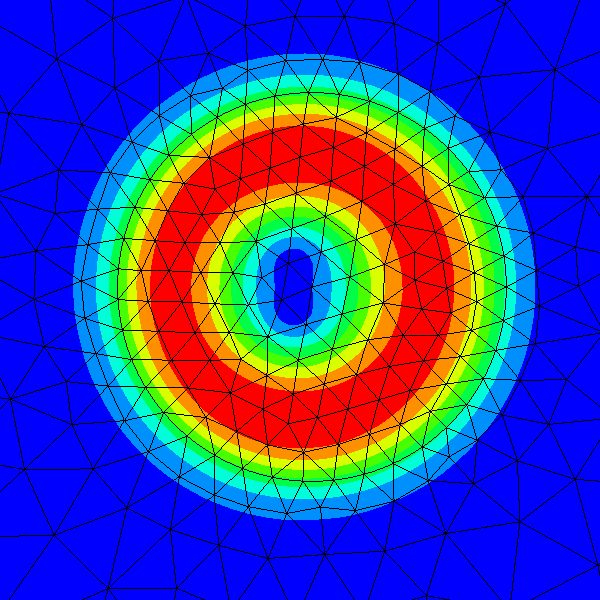}&
    \includegraphics[width=0.15\textwidth]{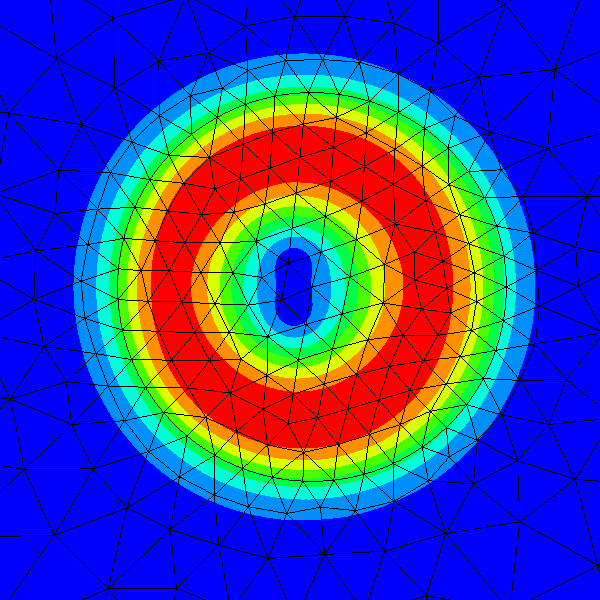}\\

    \includegraphics[width=0.15\textwidth]{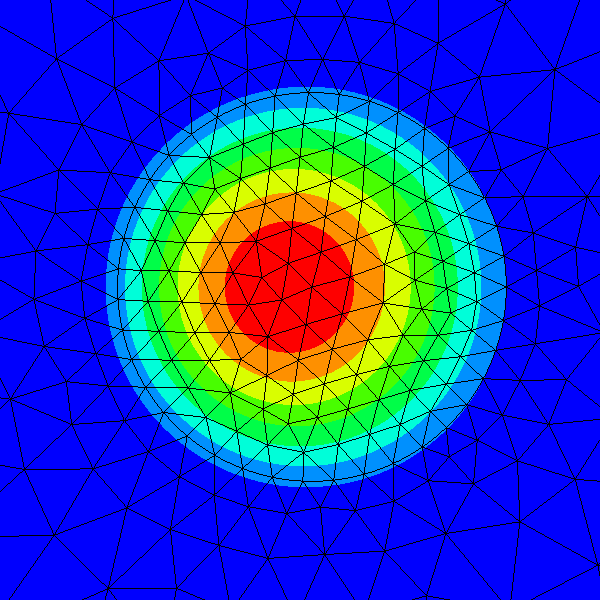}&
    \includegraphics[width=0.15\textwidth]{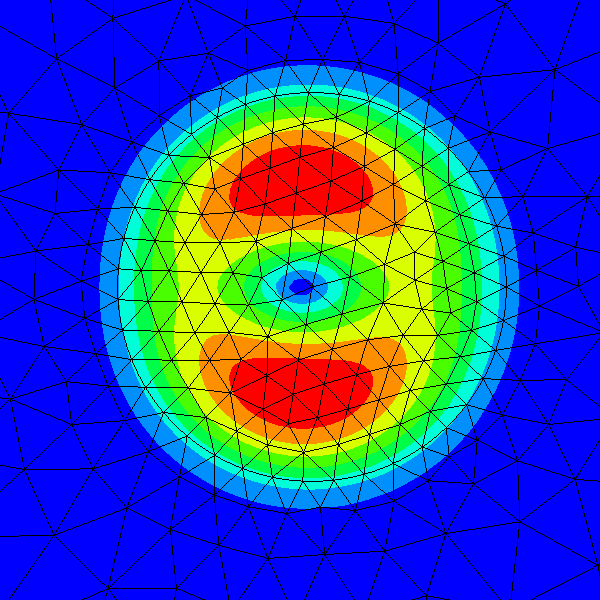}&
    \includegraphics[width=0.15\textwidth]{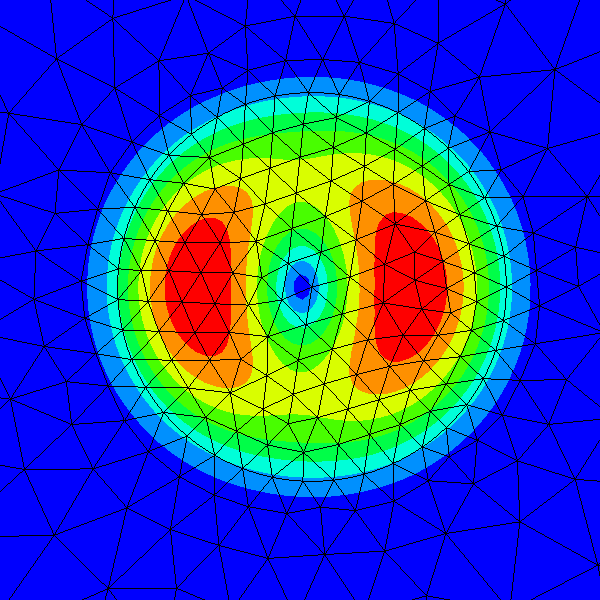}&
    \includegraphics[width=0.15\textwidth]{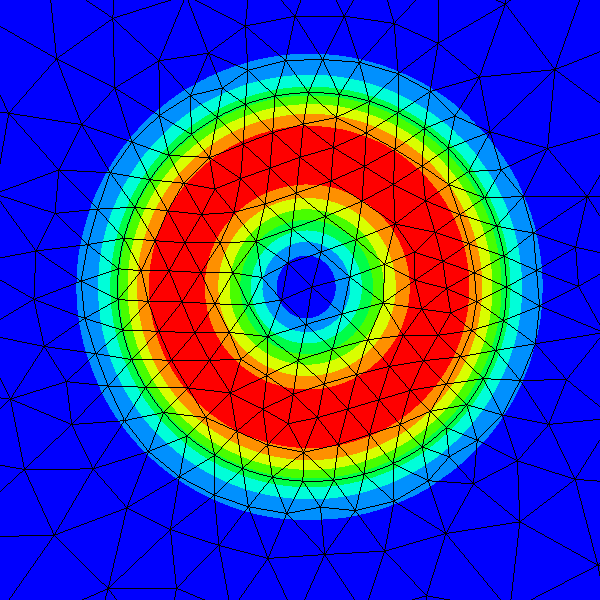}&
    \includegraphics[width=0.15\textwidth]{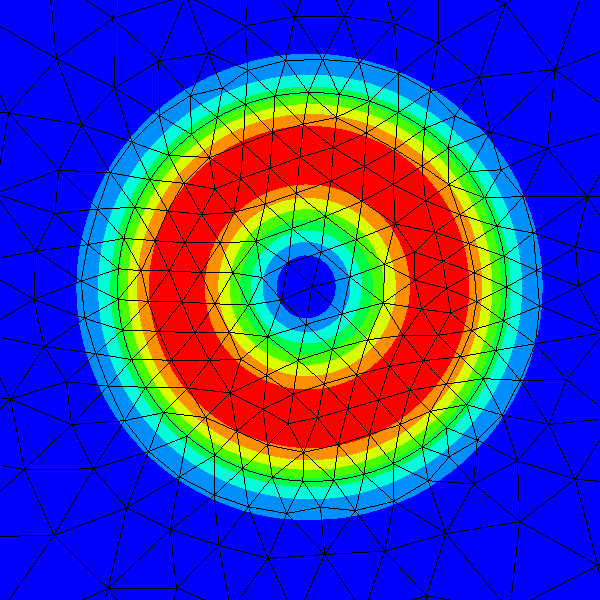}\\

    \includegraphics[width=0.15\textwidth]{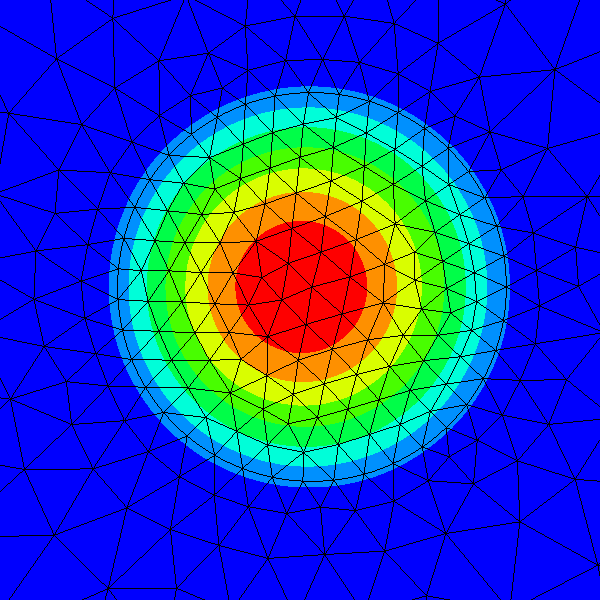}&
    \includegraphics[width=0.15\textwidth]{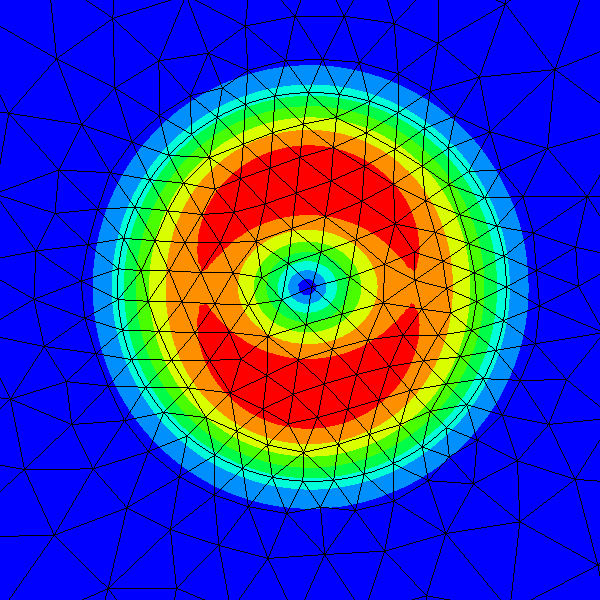}&
    \includegraphics[width=0.15\textwidth]{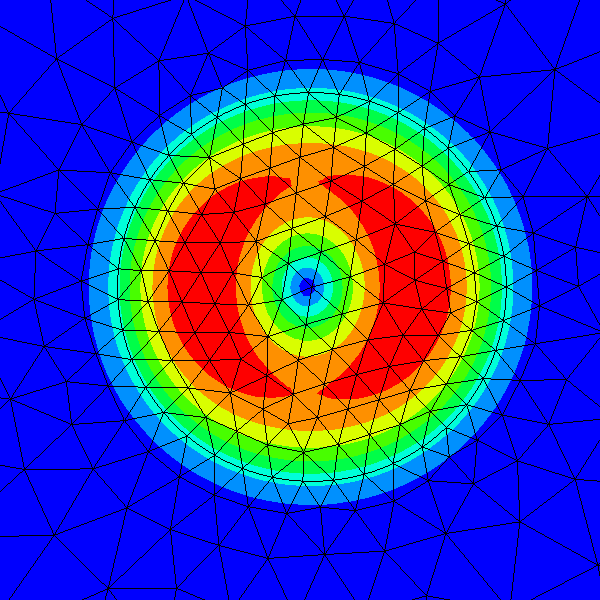}&
    \includegraphics[width=0.15\textwidth]{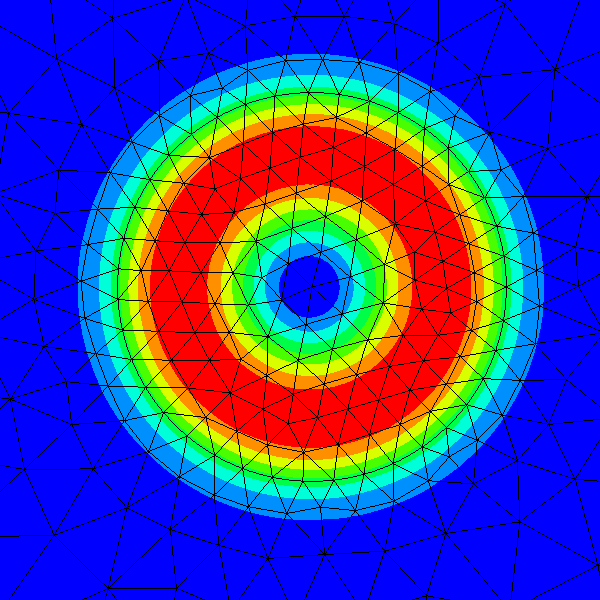}&
    \includegraphics[width=0.15\textwidth]{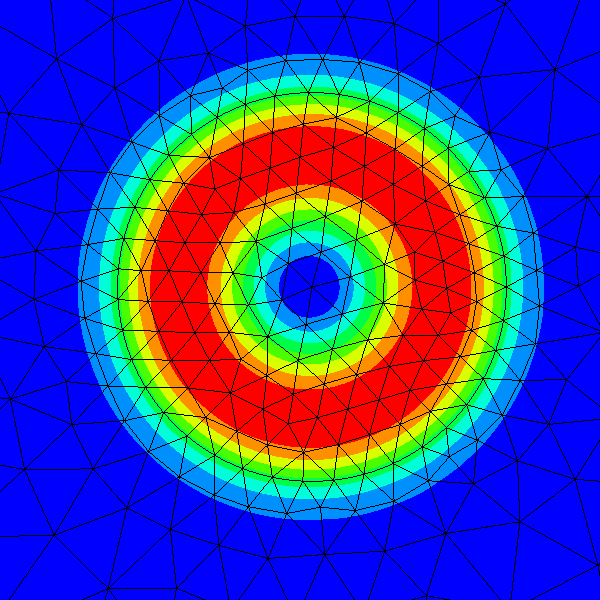}\\

    \includegraphics[width=0.15\textwidth]{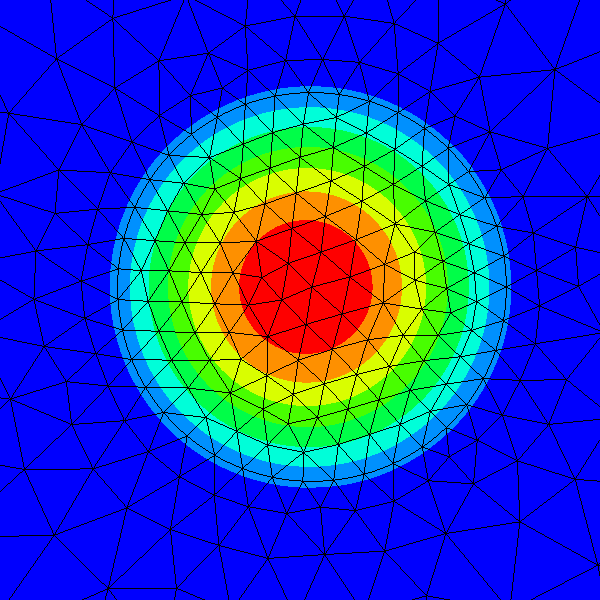}&
    \includegraphics[width=0.15\textwidth]{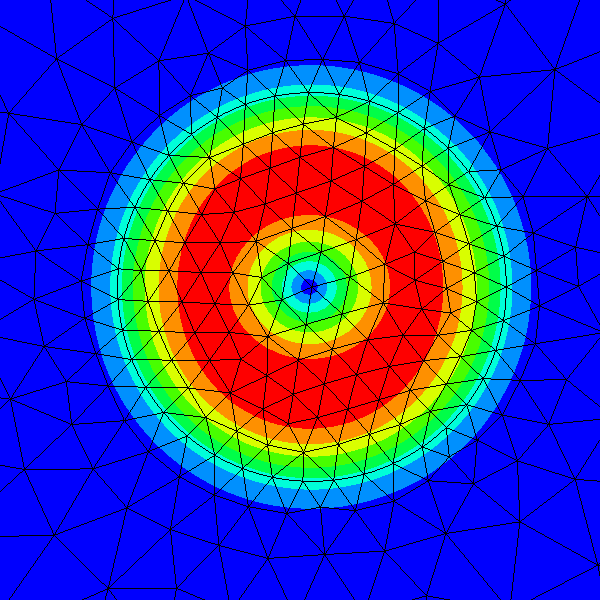}&
    \includegraphics[width=0.15\textwidth]{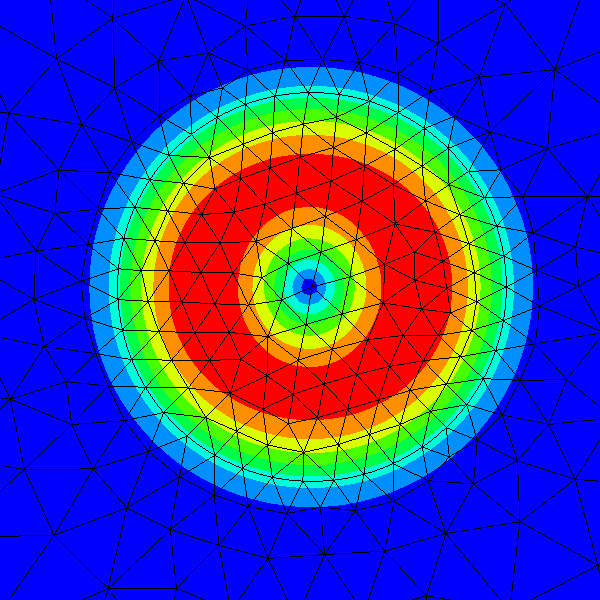}&
    \includegraphics[width=0.15\textwidth]{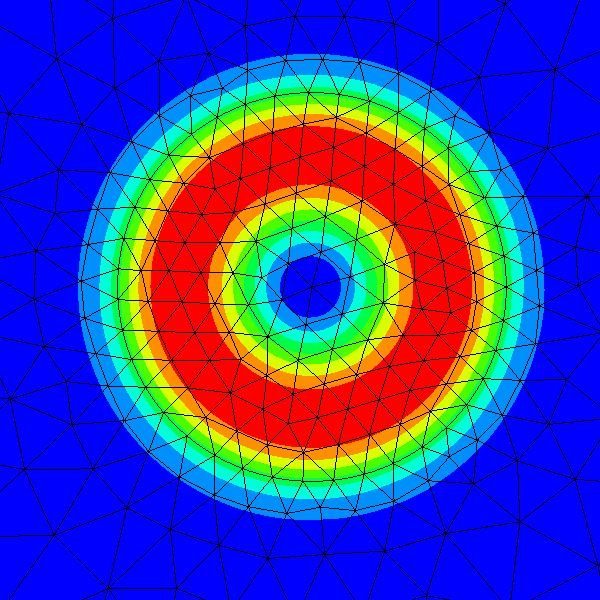}&
    \includegraphics[width=0.15\textwidth]{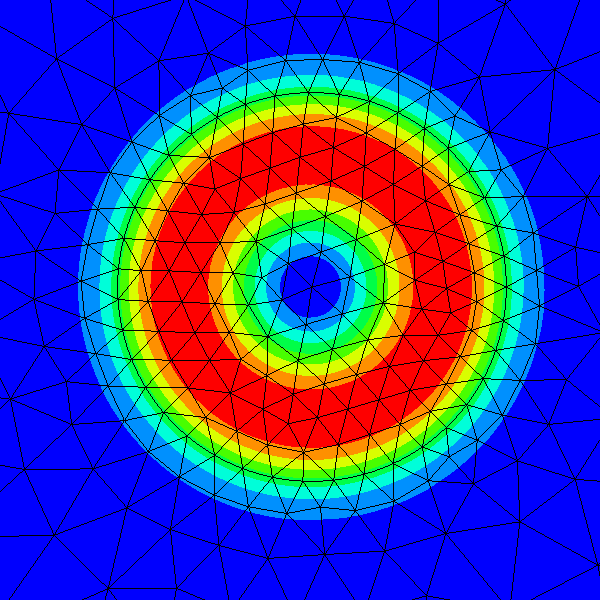}\\
  \end{tabular}
  \caption{Changes in eigenmodes are visible as pitch is varied from
    the top row to the bottom row:
    $\bhel=1,5000,10000,12500,25000,50000,75000,100000$.  Each column
    shows a different mode, with the first being the fundamental mode.}
  \label{fig:test_schermer_cole_pitch}
\end{figure}

\begin{table}[ht!]
  \centering
  \begin{footnotesize}
  \begin{tabular}{c|ccccc}
    \toprule
    $\bhel$ & $\beta_1^2$ & $\beta_2^2$ & $\beta_3^2$ & $\beta_4^2$ & $\beta_5^2$\\
    \midrule
    1      & 80.929235 & 80.873286 & 80.851811 & 80.799593 & 80.796619\\
    2500   & 80.927556 & 80.872003 & 80.851200 & 80.798888 & 80.796050\\
    5000   & 80.923099 & 80.868627 & 80.849689 & 80.797098 & 80.794604\\
    7500   & 80.917191 & 80.864234 & 80.847927 & 80.795010 & 80.792766\\
    10000  & 80.911145 & 80.859847 & 80.846440 & 80.793213 & 80.791031\\
    12500  & 80.905788 & 80.856069 & 80.845424 & 80.791893 & 80.789665\\
    25000  & 80.892417 & 80.847261 & 80.844356 & 80.789422 & 80.787356\\
    50000  & 80.888737 & 80.845129 & 80.844413 & 80.788620 & 80.787309\\
    75000  & 80.888406 & 80.844917 & 80.844453 & 80.788391 & 80.787471\\
    100000 & 80.888343 & 80.844845 & 80.844493 & 80.788276 & 80.787574\\
    250000 & 80.888314 & 80.844733 & 80.844590 & 80.788065 & 80.787778\\
    500000 & 80.888313 & 80.844697 & 80.844626 & 80.787994 & 80.787850\\
    \midrule
    $\infty$ & 80.888313 & 80.844661 & 80.844661 & 80.787922 & 80.787922\\
    \bottomrule
  \end{tabular}
\end{footnotesize}
    \caption{Observed variations in eigenvalues $\beta^2$ as pitch
      $\bhel$ is varied are reported. The last row gives the
      analytical eigenvalues for straight fiber.}
  \label{tab:test_schermer_cole_pitch}
\end{table}

We start with a pitch of $\bhel=1$. Then we increase the pitch keeping
track of the fundamental mode and the next four higher-order
modes. The results are shown in
Figure~\ref{fig:test_schermer_cole_pitch} for
$\bhel=1,5000,10000,12500,25000,50000,75000,100000$ and the
corresponding eigenvalues in
Table~\ref{tab:test_schermer_cole_pitch}. Clearly, these increasing
values of $\bhel$ lead to the upright limit case \eqref{eq:upright-1}
of vanishing torsion. As discussed previously in
\S\ref{sec:upright-limit}, in this limit, we expect to obtain the
eigenvalues of the unbent fiber. Indeed, the last few rows of
Table~\ref{tab:test_schermer_cole_pitch} show that our numerical
eigenvalues do appear to converge to those of the unbent fiber
(included in the last row) as $\bhel$ continues to increase.  We also
observe from the absolute values of the eigenfunctions plotted in
Figure~\ref{fig:test_schermer_cole_pitch} that they also appear to
converge to modulus of corresponding modes for the straight fiber.
The apparent radial symmetry of the modulus plots in the last row in
Figure~\ref{fig:test_schermer_cole_pitch} may be jarring.  While the
fundamental mode (LP01) of a straight fiber is radially symmetric, the
next higher order modes (LP11 and LP21) are not.  In fact the real and
imaginary parts (not displayed) of the second and third plots of the
last row in Figure~\ref{fig:test_schermer_cole_pitch} do show the
nonsymmetric two-leaf structure of LP11 modes, and those of fourth and
fifth plots have the four-leaf structure typical of LP21 modes. Their
magnitudes however came out radially symmetric in our computations, as
seen in the figure.  Such modulus plots for modes close to LP11 and
LP21 are perhaps atypical, but entirely possible. 
For double eigenvalues one
may only expect convergence of eigenspaces (not
individual eigenfunctions).  We have indeed verified that the errors
in the $L^2$~projections of the computed modes in
Figure~\ref{fig:test_schermer_cole_pitch} onto the span of the
corresponding LP modes of the straight fiber converge to zero as
$\bhel$ increases.

\section*{Acknowledgments}

This research was funded in part by the AFOSR grant FA9550-23-1-0103 and the Austrian Science Fund (FWF) under grant \href{https:/doi.org/10.55776/J4824}{10.55776/J4824}. This work also benefited from activities organized under the auspices of NSF RTG grant DMS-2136228. For open access purpose, the authors have applied a CC BY public copyright license to any Author Accepted Manuscript (AAM) version arising from this submission. 

\bibliographystyle{siam}
\bibliography{cites}

%%% Local Variables:
%%% mode: LaTeX
%%% TeX-master: "helical"
%%% End:

\end{document}